\pdfoutput=1
\documentclass[a4paper,12pt,reqno]{article}
\usepackage[markup=underlined]{changes}
\RequirePackage{etex}

\usepackage[T1]{fontenc}
\usepackage{fourier}
\usepackage{newcent}


\DeclareMathSizes{12}{13}{8}{6}

\usepackage{etex}
\usepackage{graphicx}
\usepackage{setspace}
\usepackage{amsmath}
\usepackage{amsfonts}
\usepackage{titlesec}
\usepackage{pictex}
\usepackage{comment}
\usepackage{amsthm}
\usepackage{graphics,epsfig,verbatim,bm,latexsym,url,amsbsy}
\usepackage{rotating}
\usepackage[authoryear,round]{natbib}
\bibliographystyle{ecta}
\usepackage{float}
\usepackage[position=bottom]{subfig}
\usepackage{mathrsfs}
\usepackage{multirow}
\usepackage{array}
\usepackage{bigints}
\usepackage{bbm}
\usepackage{geometry}
\usepackage{bbm}
\usepackage[ruled,vlined]{algorithm2e}
\usepackage{soul}
\usepackage{mathtools}
\usepackage{enumitem}
\usepackage{amssymb}
\usepackage{diagbox}   
\usepackage{array}     
\usepackage{makecell}  

\DeclarePairedDelimiterX{\inp}[2]{\langle}{\rangle}{#1, #2}

\usepackage{hyperref}
\hypersetup{
	colorlinks=true,
	linkcolor=red!50!black,
	citecolor=blue!65!black,
	filecolor=magenta,      
	urlcolor=cyan,
}

\usepackage[nameinlink,noabbrev,sort,capitalise]{cleveref}

\newcommand{\de}{\mathop{}\!\mathrm{d}}
\newcommand{\e}{\varepsilon}

\renewcommand{\Re}{\mathbb{R}}

\newcommand{\marg}{\mathrm{marg}}
\newcommand{\Gr}{\mathrm{Gr}}

\newcommand{\oF}{\mkern2mu \overline{\mkern-2mu F } \mkern2mu}

\theoremstyle{definition} \newtheorem{example}{Example}
\theoremstyle{definition} 
\theoremstyle{definition} 

\theoremstyle{definition} \newtheorem{claim}{Claim}
\theoremstyle{definition} \newtheorem{definition}{Definition}
\theoremstyle{definition} 
\theoremstyle{definition} 
\theoremstyle{definition} \newtheorem{lemma}{Lemma}
\theoremstyle{definition} \newtheorem{theorem}{Theorem}
\theoremstyle{definition} \newtheorem{assumption}{Assumption}
\theoremstyle{definition} 
\theoremstyle{definition} 
\theoremstyle{definition} 
\theoremstyle{definition}\newtheorem{proposition}{Proposition}
\theoremstyle{definition} 
\theoremstyle{definition} 
\theoremstyle{definition} 
\theoremstyle{definition} 
\theoremstyle{definition} 
\theoremstyle{definition}

\newenvironment{manualtheorem}[1]{%
  \manualtheoreminner
}{\endmanualtheoreminner}

\theoremstyle{definition} 

\usepackage{mathtools}
\setlength{\parindent}{0pt}
\setlength{\abovedisplayskip}{0pt}
\setlength{\belowdisplayskip}{0pt}

\titlespacing*{\paragraph}{0pt}{1.25ex plus 1ex minus .2ex}{0.5em}

\setstretch{1.2}
\setlength{\parskip}{.25\baselineskip}%

\geometry{
margin=1in,
top=1in,
bottom=1in
}

\titleformat{\section}
		{\bfseries\center}
         {\thesection}
        {0.5em}
        {}
        []

\titleformat{\subsection}[runin]
        {\normalfont\bfseries}
        {\thesubsection}
        {0.5em}
        {\addperiod}
        []
\newcommand{\addperiod}[1]{#1.}

\title{\LARGE{\scshape{\textbf{Flexible Demand Manipulation}}}}

\date{This version: \today}

\author{\makebox[.2\linewidth]{{Yifan Dai}\thanks{MIT Department of Economics; email: \protect\texttt{yfdai@mit.edu}}}\\{MIT} \and
\makebox[.2\linewidth]{{Andrew Koh}\thanks{MIT Department of Economics; email: \protect\texttt{ajkoh@mit.edu}\\~\\
\emph{First posted version: February 2024.}\\
We are particularly grateful to Drew Fudenberg and Stephen Morris for guidance, support, and many helpful discussions. We also thank Daron Acemoglu, Alessandro Bonatti, Ian Ball, Roberto Corrao, Laura Doval, Matt Elliott, Glenn Ellison, Andrea Galeotti, Nima Haghpanah, Jiangtao Li, Ellen Muir, Sivakorn Sanguanmoo, Anna Sanktjohanser, Yucheng Shang, Teck Yong Tan, Alex Wolitzky, and Jidong Zhou for helpful suggestions. We benefited from collaboration with Shijian Yang in the preliminary stages of this project. We also thank participants at the 2024 Berkeley-Columbia-Duke-MIT-Northwestern IO Theory Conference, the 2023 MIT Summer Theory Lab, and MIT Behavioral, Theory and IO lunches for helpful comments.}}\\{MIT}}

\begin{document}
\maketitle
\thispagestyle{empty}

\begin{abstract}
    We develop a simple framework to analyze how targeted persuasive advertising shapes market power and welfare. A designer flexibly manipulates the demand curve by influencing individual valuations at a cost. A monopolist prices against this manipulated demand curve. We fully characterize the form of optimal advertising plans under \emph{ex-ante} and \emph{ex-post} welfare measures. Flexibility \emph{per se} is powerful, and can substantially harm or benefit consumers vis-a-vis uniform advertising. We discuss implications for regulation, intermediation, and the joint design of manipulation and information.
\end{abstract}

\clearpage

\setcounter{page}{1}

\section{Introduction}
Targeted advertising is a central feature of modern markets.\footnote{For instance, expenditure in the US on targeted advertising has tripled between 2017 and 2022.} Because of their ubiquity and implications for welfare, it has accordingly received considerable attention from policymakers.\footnote{For instance, in 2018 the EU's General Data Protection Regulation (GDPR) went into effect; in the US, California has implemented a GDPR-like law with more states to follow suit in 2023.} Indeed many legislators have called for, or implemented stringent regulations: European regulators recently banned Meta, a large social media company, from using personal data to target ads. In the US, the proposed American Privacy Rights Act---intended to \emph{`establish national consumer data privacy rights and set standards for data
security'}---proposes to prohibit targeted ads based on protected class information and personal data purchased from data brokers.

These policies have been largely driven by privacy concerns---that is, concerns over how firms acquire data to personalize their advertisements and the implications for data protection. However, the implications of targeted advertising \emph{qua} advertising for prices and welfare remain unclear: how does the ability to tailor the volume and content of advertisements to precise segments of the population shape prices? What are the attendant implications for welfare and regulation? 

Motivated by these questions, we develop a simple but novel framework to study the interplay between targeted advertising and market power. Our
notion of advertising is broad. It includes both overt forms e.g., via social media, e-commerce platforms, or tailored search results, as well as more covert forms e.g., the non-informational content of recommender systems (`digital nudges'\footnote{This umbrella term has been used to describe the power for digital systems to systematically shape users' behavior \citep{weinmann2016digital,schneider2018digital}; see a recent survey by \cite{jesse2021digital}. We discuss mechanisms when we detail the model.}) or via interaction with artificial intelligence systems like large-language models.\footnote{Recent work finds strong experimental evidence that `human decision making is susceptible to AI manipulation' \citep{sabour2025human}. There is also increasing incorporation of advertising into large language models; see e.g., \cite{meguellati2024good} and \url{https://hbr.org/2024/05/how-marketers-can-adapt-to-llm-powered-search}.} Importantly, advertising in our model is entirely persuasive in the sense that it increases willingness to pay directly:\footnote{This stands in contrast to ``informative'' advertising in which consumers learn about the existence or quality of a product. See, e.g., \cite{kaldor1950economic} for an early discussion of this distinction.}
 a designer chooses an advertising plan where the cost of increasing individual valuations from $x$ to $y$ is $c(x,y)$. Our model reflects the designer's ability to: 
\begin{itemize}
    \item[(i)] \emph{Target precisely} which corresponds to its ability to shape the valuations of different types differently. This reflects a distinct feature of online advertising where substantial amounts of consumer data are collected and the costs of targeting different kinds and quantities of ads at different consumers are dramatically reduced \citep{goldfarb2014different}.
    \item[(ii)] \emph{Persuade precisely} which corresponds to its ability to increase the valuation of type $x$ to exactly $y$. This is motivated by the increasing persuasiveness of new technologies such as artificial intelligence, especially for marketing consumer products \citep{matz2024potential,schoenegger2025large}.\footnote{\cite{matz2024potential} find experimentally that personalized messages by language models \emph{`exhibit significantly more influence than non-personalized messages... across different domains of persuasion (e.g., marketing of consumer products)'.} \cite{schoenegger2025large} find that language models are \emph{`more persuasive than incentivized human persuaders'}. \cite{salvi2024conversational} offer RCT evidence that large language models are substantially (81.7\%) more likely to be persuasive than humans.} Such newfound possibilities for persuasion have led AI firms to incorporate advertisements into their large language models,\footnote{Google has recently added ads into its `AI mode'; likewise, the Financial Times reported in December 2024 that \emph{`OpenAI explores advertising as it steps up revenue drive'}.} 
    as well as regulatory scrutiny (see Article 5 of the EU AI Act on `subliminal techniques' for manipulation).\footnote{The act prohibits the use of \emph{`...manipulative and deceptive techniques that may result in the distortion of a person’s ability to make an informed decision.'}}
\end{itemize}

Taken together, precision in both targeting and persuasion allows the designer to \emph{flexibly manipulate demand}: any demand curve which first-order stochastically dominates the original one is feasible at a cost. We discipline these costs by assuming that (i) total costs are the sum of individual costs; and (ii) individual costs are submodular. Facing this transformed distribution, a monopolist sets an optimal posted price. We begin with the following simple example to illustrate the power and form of flexible demand manipulation. 

\begin{example}\label{eg:uniform_eg}
    Consider a monopoly environment where each consumer inelastically demands a single unit and initial valuations are uniformly distributed over $[0,1]$. A designer can, by targeting different kinds and quantities of persuasive ads at different consumers, precisely manipulate valuations. Suppose that the cost of increasing the valuation of type $x$ to $y > x$ is proportional to the quadratic distance. We start by considering {ex-ante} welfare---that is, consumer surplus measured with respect to initial rather than manipulated valuations.\footnote{Throughout this paper we use the term `valuations' in a normatively-neutral way to reflect consumers' willingness-to-pay; this can diverge from welfare (and does so under the ex-ante measure).} 

    \cref{fig:uniform_example} illustrates the welfare possibilities from flexibly manipulating the demand curve. Flexibility \emph{per se} affords substantially more power to shape both (i) total welfare; and (ii) the surplus split between producers and consumers. To illustrate this, we also consider the benchmark under which all consumers' valuations are shifted by equal amounts (under the same cost function) which in the spirit of classic analyses of persuasive advertising \citep{dorfman1954optimal,dixit1978advertising}. 
\begin{figure}[H]
\centering
\caption{The power of flexible demand manipulation}
    {\includegraphics[width=0.7\textwidth]{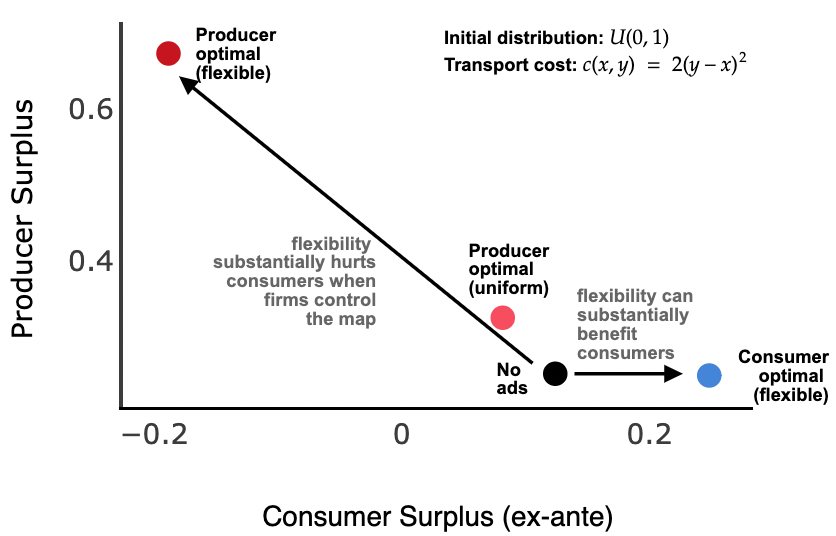}}
    \label{fig:uniform_example}
\end{figure}

\textbf{No advertisements.} First observe that sans advertisements, the monopolist charges an optimal price of $1/2$ and consumer surplus is $1/8$. The corresponding welfare outcome is illustrated by the \textbf{black dot} in \cref{fig:uniform_example}. This is the familiar monopoly pricing problem which trades off the intensive margin against the extensive margin and is illustrated in the first panel of \cref{fig:uniform_exampleB}.

\textbf{Uniform advertising.} Next consider the case of uniform advertising in which all consumers valuations must be shifted by the same amount, and suppose that the designer maximizes producer surplus. This is equivalent to jointly choosing the monopolist's price and the (scalar-valued) volume of uniform advertising directly---the producer-optimal uniform advertising plan is then pinned down by the classic analysis of \cite{dorfman1954optimal}. The uniform shift to the demand curve is illustrated in the second panel of \cref{fig:uniform_exampleB}. This pushes up the monopolist's price to $4/7$ which raises producer surplus and decreases consumer surplus as illustrated by the \textbf{\textcolor{red}{red dot}} in \cref{fig:uniform_example}.

\begin{figure}[H]
\centering
\caption{ The workings of flexible demand manipulation}
\vspace{-1.8em}
  \begin{quote}
  \centering
  {\footnotesize CS (ex-ante): blue minus dark red; PS: red (light and dark)}
  \end{quote}
\includegraphics[width=1\textwidth]{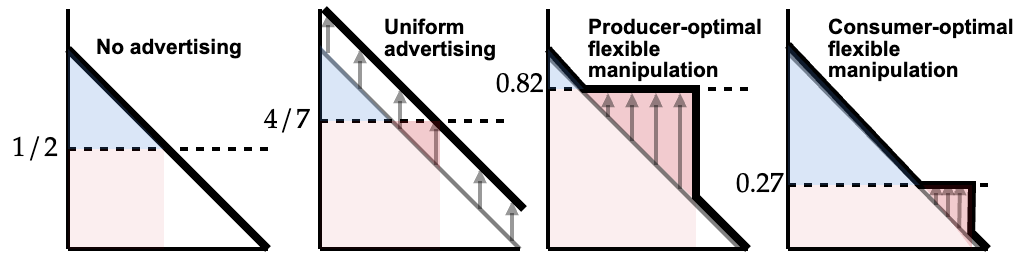}
\label{fig:uniform_exampleB}
\end{figure}

\vspace{-2em}\textbf{Producer-optimal flexible advertising.} Now consider the producer-optimal advertising plan under full flexibility. The corresponding welfare outcome is illustrated by the \textbf{\textcolor{red!60!black}{dark red dot}} in \cref{fig:uniform_example}. This {exacerbates} the losses to consumer welfare: with finer control, producer surplus is maximized at the substantially higher price of $\approx 0.82$ because demand is shored up via flexibly targeted advertisements. This is illustrated by the third panel of \cref{fig:uniform_exampleB}.

\textbf{Consumer-optimal flexible advertising.} Finally, consider the consumer-optimal advertising plan under full flexibility. The corresponding welfare outcome is illustrated by the \textbf{\textcolor{blue}{blue dot}} in \cref{fig:uniform_example}. By carefully choosing the advertising plan, the designer is able to induce the monopolist to set a low price of $\approx 0.27$ and deliver substantially (i) more consumer surplus; and (ii) more gains from trade. How the demand curve is optimally manipulated to induce this is illustrated by the last panel of \cref{fig:uniform_exampleB}.

\begin{figure}[H]
\begin{minipage}[t]{0.68\linewidth}
\textbf{Ex-post consumer-optimal plans.} Now consider, instead, ex-post consumer surplus which measures welfare with respect to \emph{final} valuations. Inframarginal consumers can now be made better-off via advertising---but increasing their valuations too much might tempt the monopolist to raise its prices. 
\cref{fig:uniform_example_intro_expost} illustrates how these forces are optimally traded off: the designer continues to induce a low price of $0.25$ but, different from the ex-ante case, also advertises to inframarginal consumers such as to simultaneously fulfill the (i) \emph{technological tradeoff} given by the advertising technology; and (ii) the monopolist's \emph{pricing constraint}.  $\hfill \diamondsuit$

\end{minipage}%
\hfill%
\begin{minipage}[t]{0.3\textwidth}\vspace{0pt}
\vspace{1em}
\centering
{\includegraphics[width=0.9\textwidth]{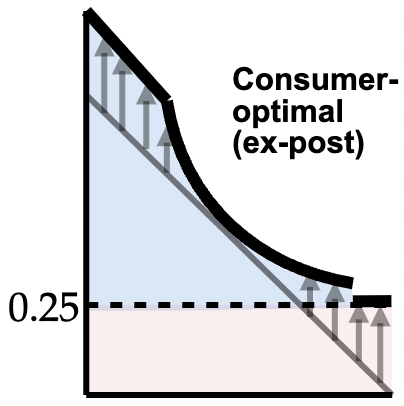}}
    \caption{}
    \footnotesize {CS (ex-post)}: blue \\ {Producer surplus}: red
\label{fig:uniform_example_intro_expost}
\end{minipage}
\end{figure}
\end{example}

 \cref{eg:uniform_eg} worked within a simple environment with uniform demand and quadratic costs. In what follows, we generalize and extend these insights by fully characterizing the form and value of flexible demand manipulation. This offers a novel perspective on a classic view articulated by \cite{johnson2006simple} that the function of persuasive advertising is to \emph{"change the position of a firm’s demand curve"} (a scalar-valued parametric shift) whereas informative advertising \emph{"rotates demand, thereby changing its shape"}. Our model of targeted persuasive advertising simultaneously alters both position and shape\footnote{Our baseline model in which targeted advertising weakly increases valuations already alters both position and shape of the demand curve; we further analyze the joint design of flexible information and manipulation (\cref{sec:joint}) as well as extending our analysis to accommodate both `forward' and `backward' shifts (Online Appendix \ref{appendix:directional}).}---we analyze the implications for the form of optimal advertising plans, welfare, and regulation.

\textbf{Outline of contribution.} As we observed in \cref{eg:uniform_eg}, some care is required in how we think about consumer welfare.\footnote{\cite{dixit1978advertising} also distinguishes between these welfare measures: what we call `ex-ante' they call `preadvertising'; what we call `ex-post' they call `postadvertising'.} We consider two measures: 
\begin{enumerate}[leftmargin = 2em]
    \item[(i)] \textbf{\emph{Ex-ante welfare}: advertising as behavioral manipulation.} A consumer transported from $x$ to $y$ who buys the product at price $p$ obtains surplus of $x - p$ i.e., with respect to her initial valuation. Note that this can be negative if $x < p \leq y$ and is consistent with the view of advertising as \emph{behavioral manipulation}: it drives a wedge between choice (willingness-to-pay) and preferences.  
    \item[(ii)] \textbf{\emph{Ex-post welfare}: advertising as a complementary good.} A consumer transported from $x$ to $y$ who chooses to buy the product at price $p$ obtains surplus $y - p$ i.e., with respect to her {new} valuation. This is always non-negative, and is consistent with the view of advertising as a \emph{complementary good}: it increases valuations directly.   
\end{enumerate}  

We are interested in the form of advertising plans that maximize a general objective that is weakly increasing in consumer and producer surplus, and strictly decreasing in advertising costs. This encapsulates the familiar consumer- and producer-optimal advertising plans, but also allows for richer objectives such as those that arise when the designer must deliver surplus to both sides of a market with endogenous entry, or when the designer wishes to robustly maximize welfare but faces non-Bayesian uncertainty about the appropriate welfare weights. Our main results completely characterize the form of optimal advertising plans under ex-ante (\cref{thrm:exante_general}) and ex-post (\cref{thrm:expost_general}) welfare.

\textbf{Optimal plans under ex-ante welfare.}
All optimal plans under an ex-ante notion of welfare take an \emph{intermediate interval structure} (\cref{thrm:exante_general}). As we saw in \cref{eg:uniform_eg}, this comprises a pair $(\underline p, p^*)$ where $p^*$ is the monopolist's optimal price facing the manipulated distribution, and consumers with valuations $x \in [\underline p, p^*)$ are transported precisely to $p^*$. While all optimal plans take this form, the economic constraints shaping the choices of $\underline p$ and $p^*$ vary with the designer's objective.

When the designer values producer surplus highly (e.g., if the monopolist is in control of the plan), it uses the additional flexibility from advertising to \emph{decouple the intensive-extensive margin} tradeoff typical of monopoly pricing problems. The monopolist is induced to set a high price and, while it would previously have lost too much on the extensive margin, targeted advertising is able to precisely shore up demand. We show analytically and numerically that this tends to raise prices substantially vis-a-vis uniform advertising. Conversely, when the designer values ex-ante consumer surplus highly, it uses the additional flexibility from advertising to make the demand curve \emph{asymmetrically elastic}, thereby tempting the monopolist to lower its price in order to capture additional demand. This improves total consumer surplus by lowering prices for inframarginal consumers, but is to the detriment of marginal consumers who are manipulated into buying.

\textbf{Optimal plans under ex-post welfare.} When our notion of welfare is ex-post, advertising to inframarginal consumers can make them better-off. Thus, the advertising plan must navigate two economically distinct tradeoffs. First, the advertising technology pins down a \emph{technological tradeoff} between the benefits of improving valuations relative to the costs of doing so. But the designer must also contend with the monopolist's incentives, as reflected by a \emph{pricing tradeoff}: if the designer raises valuations excessively, the monopolist becomes tempted to increase its prices which hurts consumer surplus.

Advertising plans that tradeoff these two forces optimally take a \emph{constrained greedy structure} (\cref{thrm:expost_general}). Constrained greedy plans partition consumers according to their initial valuations into three groups: consumers with low initial valuations are excluded; intermediate consumers are transported precisely to a target price $p^*$; and consumers with high valuations are transported to the {lower envelope} of a \emph{locally-greedy map} reflecting the cost and benefits associated with marginal increases in valuations; and a \emph{unit-elastic demand curve} reflecting the need to deter upward deviations. Our solution is simple, and combines (i) the familiar logic of optimal monopoly pricing; with (ii) a functional first-order approach to optimize over the target distribution; and (iii) basic results from optimal transportation on the structure of least-cost advertising plans. 

\textbf{Manipulation vs information.} Flexible demand manipulation via persuasive advertising is distinct from informative advertising in which consumers \emph{"learn about their taste for a product"} \citep{johnson2006simple}. Nonetheless, advertising in practice has elements of both \citep{kaldor1950economic}. To better understand how these instruments interact, we augment our model to study the \emph{joint} design of manipulation and information and analyze the form and value of producer- and consumer-optimal plans (\cref{prop:joint}).
When the designer values producer surplus, manipulation and information jointly deliver more value than each can deliver separately---the optimal plan is \emph{no information and all hype}: consumers are kept in the dark about idiosyncratic match qualities, but their valuations are manipulated via persuasive advertising. When the designer values ex-ante consumer surplus, manipulation has no value over just delivering flexible information \citep{roesler2017buyer}---the optimal plan is \emph{no hype and partial information}. Finally, when the designer values ex-post consumer surplus, manipulation and information once again deliver strictly more value together than each can deliver separately.

\textbf{Welfare and policy implications.} We employ our characterization of optimal advertising plans under ex-ante (\cref{thrm:exante_general}) and ex-post welfare (\cref{thrm:expost_general}) to shed light---both analytically and numerically---on how targeted advertising shapes the surplus split among producers and consumers (\cref{sec:comparison}). We find that for a wide range of demand functions, targeted advertising is substantially more powerful than uniform advertising (under the same cost function), and can be used to deliver substantially more producer surplus and/or consumer surplus. This suggests that the interaction between targeting and manipulation is a first-order economic force independent of privacy or other concerns. In the absence of regulation,\footnote{In practice, firms do exert some control over the plan (e.g., which ad auctions to bid on).} consumers can be doubly hurt---inframarginal consumers face higher prices, and marginal consumers are manipulated into buying against their own interests. What then, if anything, can regulators do to safeguard consumer welfare?

We show how our framework and results can be extended to analyze these questions. We model regulation as constraining the set of available advertising maps on offer with the constraint that firms can freely deviate downwards.\footnote{i.e., it can always choose not to advertise, or advertise less than any feasible plan prescribes.} This specifies (i) \emph{coarseness of targeting} which determines how finely firms can advertise to different consumers; and (ii) \emph{limits on manipulation} which determines how much firms can shift valuations. Our framework nests a number of benchmark policies such as an outright ban on targeting which recovers the uniform advertising case, and a Pigouvian tax on advertising which has been recently proposed (\cite{romer2021taxing,acemoglu2024online}). We solve for consumer-optimal regulatory policies and show that it can deliver substantial consumer welfare. 
By contrast, coarse instruments like outright bans or taxes might ameliorate consumer harms, but cannot improve consumer welfare over the no-advertising benchmark.

\textbf{Related literature.} Our model nests, as special cases, classic analyses of optimal persuasive advertising for a monopolist seller \citep{dorfman1954optimal,dixit1978advertising,nichols1985advertising}. 
These models reflect coarse advertising---e.g., via newspapers, radios, TV and so on---in which every consumer had, roughly, the same probability of viewing the ad. The shift towards online advertising has opened new possibilities for advertisements to be targeted \citep{goldfarb2014different}; our model speaks directly to this. Contemporaneous work by \cite*{acemoglu2024online} studies how platforms might raise revenue via informative advertising to misspecified consumers, thereby offering a simple microfoundation for persuasive advertising.\footnote{Our approach to modeling persuasive advertising is deliberately ``reduced-form''---we are agnostic about the channels valuations might be influenced, though we discuss potential mechanisms when we detail our model.} In all of these papers, the quantity of advertising is one-dimensional which parametrically shifts out the demand curve. By contrast, we are interested in how targeting \emph{per se} shapes prices and welfare which motivates our richer model of flexible demand manipulation. This distinction is depicted in second row of \cref{table:taxonomy}. Our framework allows us to refine and expand on old intuitions \citep{dorfman1954optimal,dixit1978advertising} that persuasive advertising can raise prices: by characterizing the form of optimal advertising maps, we make precise how the additional flexibility wielded by a designer seeking to maximize profits exacerbates losses to consumer welfare.\footnote{An additional benefit is that the flexibility allows for transparent analyses of how advertising shapes demand. For instance, the parametric approaches of \cite{dorfman1954optimal} show that incentives to increase advertisements increase when demand is more elastic while \cite{becker1993simple} call their analysis \emph{`highly misleading'}, showing that the opposite can obtain.} Moreover, our analysis shows that consumer harm is not inevitable: we also show how flexibility can be harnessed to improve consumer welfare.

\begin{table}[htbp]
    \centering
    \begin{tabular}{|cc|c|c|c|}
    \hline
       \parbox{3.5em}{\small }&\small \diagbox[width=12em]{Instrument}{Degree of \\ Flexibility} 
       & \small Parametric 
       & \small Flexible 
       \\
       \hline
       \multicolumn{2}{|c|}{\small Informative} & \small \makecell{ \cite{lewis1994supplying} \\[2pt] \cite{johnson2006simple}}  & \small \makecell{\cite{roesler2017buyer}} \\
       \hline
       \multicolumn{2}{|c|}{\small Persuasive} &  \small \makecell{
       \cite{dorfman1954optimal}
       \\[2pt] \cite{dixit1978advertising}
       } 
       & \small \makecell{ \textbf{This paper}} \\
       \hline
       \multicolumn{2}{|c|}{\small Info +  Persuasive} &  \small \makecell{ $\emptyset$
       } 
       & \small \makecell{ \textbf{This paper}} \\
       \hline
    \end{tabular}
    \caption{Taxonomy of how advertising shapes demand}
    \label{table:taxonomy}
\end{table}

Our analysis takes seriously the classic view of advertising as persuasive. This might function via behavioral manipulation, \emph{`playing upon the mind with studied skill'} \citep{robinson1933economics}, or as a complementary good \citep{becker1993simple} in which marginal utility of consumption increases in the level of advertisements.\footnote{These differing mechanisms have distinctive welfare implications so we are distantly related to the literature on behavioral welfare economics from which we sometimes borrow language.} This persuasive view of advertising stands in contrast to more recent literature modeling advertising as informational.\footnote{For instance, advertising might make consumers aware of a firm's existence and/or prices \citep[among many others]{ozga1960imperfect, stigler1961economics,nelson1974advertising, butters1978equilibrium,
fudenberg1984fat, roy2000strategic}, signal quality \citep{milgrom1986price,bagwell1988advertising}, or directly convey information. \cite{bagwell2007economic} offers an excellent survey. See also recent work by \cite{ichihashi2024buyer} for an analysis of recommendation algorithms that simultaneously informs a consumer about existence of a product as well as their value for it.}  Within this literature, we relate most closely to work studying how the demand curve is shaped by consumer information. When information is parametric, the influential analysis of \cite{johnson2006simple} study `demand rotations' and their corresponding impact on prices. When information is flexibly provided, \cite{roesler2017buyer} elegantly characterize how the demand curve can be shaped to drive down prices and maximize consumer surplus.\footnote{See also \cite*{bergemann2025screening} for an analysis of flexible information design to shape valuations, and the monopolist is screening with a menu.} In this regard, we view our analysis of flexible persuasive advertising as a natural complement to recent interest in flexible information design.\footnote{\cite{johnson2006simple} note: \emph{``traditional notions [persuasive advertising] correspond operationally to what we call pure hype, and change the position of a firm’s demand curve, whereas what we call real information rotates demand, thereby changing its shape.''} We offer a more nuanced view since flexible persuasive advertising changes \emph{both} position and shape.} At the same time, our analysis is distinct and delivers new economic insights. Moreover, to better understand how these instruments interact, we analyze how informative and persuasive advertising should be jointly and flexibly designed to maximize producer or consumer welfare (last row of \cref{table:taxonomy}).

On a technical level, we solve the problem of jointly choosing (i) the target demand curve which, via the monopolist's pricing problem, shapes market outcomes; and (ii) the least-cost advertising plan to implement it. In the former, the monopolist is not mean-constrained which reflects a distinctive feature of flexible manipulation vis-a-vis information design.\footnote{The flexibility our monopolist has to manipulate valuations is reminiscent of work in political science literature on vote buying \citep{groseclose1996buying}. The key differences are that (i) our environment features prices which is set endogenously by a profit-maximizing monopolist; (ii) our manipulation costs are nonlinear which might reflect variable responsiveness to advertising; and (iii) we focus on general economic objectives---increasing in consumer and producer surplus---with no counterpart in the literature on vote buying.} The latter means that the choice of distribution depends on manipulation costs and, since the objective function is non-convex, the optimal target distribution need not be an extreme point \citep{yang2023economics}.\footnote{Since transportation costs are convex (submodular), our objective function is non-convex and in general, results on the extreme points of first-order stochastic dominance do not seem to apply.} To handle these features, we combine (i) the familiar logic of monopoly pricing; with (ii) a functional first-order approach; and (iii) basic results from optimal transportation to pin down optimal advertising plans.\footnote{See \cite*{kolotilin2025persuasion} for information design via optimal transportation.}
Our functional first-order approach is loosely related to \cite*{georgiadis2022flexible} who study a principal-agent problem in which the agent can flexibly choose the distribution over outputs. Our setting differs in several substantive ways, the chief of which is that our designer's choice of the target distribution is chosen \emph{before} prices are set, and must take the monopolist's incentives into account. Moreover, the original demand curve is a key primitive of our environment and shapes the form of optimal advertising plans; in \cite*{georgiadis2022flexible}, the agent can choose any distribution over outputs.\footnote{In \cite*{georgiadis2022flexible}, the principal first chooses a contract then the agent flexibly chooses the output distribution. Hence, their agent solves an unconstrained optimization problem; in our setting, the designer (i) must accommodate the monopolist's pricing incentives; and (ii) faces a directional constraint on transportation plans. Furthermore, the cost of each advertising plan depends on the initial distribution (a primitive).} 

\textbf{Organization.} The rest of the paper is organized as follows. \cref{sec:model} details the model. \cref{sec:manipulation} characterizes optimal advertising plans under ex-ante welfare. \cref{sec:complements} characterizes optimal plans under ex-post welfare. \cref{sec:comparison} quantifies the value and welfare implications of flexible demand manipulation. \cref{sec:joint} analyzes how flexible manipulation and informative advertising should be jointly deployed. \cref{sec:extensions} outlines how our framework can be readily extended to analyze regulation, intermediation, uncertainty over welfare measures, heterogeneous susceptibility, and derive robust welfare bounds. \cref{sec:discussion} concludes.

\section{Model}\label{sec:model}
\textbf{Initial valuations.} There is a unit measure of consumers, whose initial valuations $x$ are drawn from an atomless probability measure $\mu \in \Delta(\Re_+)$, supported on a closed interval $X\subseteq \Re_+$ which may be bounded or unbounded. Let $F$ be the corresponding CDF and $f$ be the density. We assume that $f > 0$ in the interior of $X$ and that the expected valuation is finite. Let $p^M$ denote the lowest optimal price for the monopolist facing initial valuations. We assume $p^M > \min X$.

\textbf{Notation.} 
For the probability measure $\nu$, denote its CDF with $F_{\nu}$ and denote the survival function with $\oF_{\nu}(x) \coloneqq \nu([x,\infty))$, noting that the inclusion of $x$ will reflect our assumption that consumers break indifferences in favor of buying. Given a joint distribution $\pi \in \Delta(\Re_+^2)$, we denote the marginal over the first coordinate by $\marg_x \pi$ and the other marginal by $\marg_y \pi$. We say $\pi$ is induced by the map $T$ on $\Re_+$ if $\pi(A \times B) = \mu(A \cap T^{-1}(B))$ for measurable sets $A,B\subseteq \Re_+$.

\textbf{Advertising as transportation.} We consider the setting where a designer can perfectly target consumers with different initial valuations, increasing individual valuations through advertising at some cost. That is, the designer chooses an \emph{advertising plan}, which is  a joint distribution $\pi \in \Delta(\Re^2_+)$: the first marginal is the \emph{initial distribution} i.e., $\text{marg}_x \pi = \mu$; the second marginal is the \emph{target distribution} the monopolist faces. Consumers make decisions independently, so the cost of persuading a consumer only depends on their initial and final valuation.\footnote{This implicitly assumes that there are no economies of scale in advertising---that is, the total cost is simply the sum of individual costs. This corresponds to a world in which the marginal cost of personalizing advertisements based on user data is near-zero (see, for instance, the rise of AI copywriting tools), and the bulk of the cost of advertising is from reaching consumers.} 

Individual costs are denoted by the function $c:\Re^2_+ \to \Re_+$, and the cost of an advertising plan $\pi$ is the aggregate of individual costs $\int c(x,y) d\pi$. We maintain throughout the following assumptions on $c$.
\begin{assumption} \label{main assumption}
    The cost function $c$ satisfies:
    \begin{itemize}[nosep]
        \item [(i)] \textbf{Smoothness.} Twice continuously differentiable and $c(x,x) = 0$ for each $x \geq 0$.  
        \item [(ii)] \textbf{Inada condition.} 
        \begin{itemize}
            \item[a.] for every $x \geq 0$, $c_y(x,x) = 0$
            \item [b.] $c_{yy} >0$ on $\Re_+^2$; and 
            \item[c.] $\underset{M \to \infty}{\lim} \underset{x}{\inf \,}
            c_y(x,x+M) = +\infty$.
        \end{itemize}
        \item [(iii)]\textbf{Submodularity.} Strictly submodular. 
    \end{itemize}
\end{assumption}

Assumption (i) is a regularity condition; Assumption (ii) imposes an `Inada condition': the marginal cost from increasing valuations is initially zero, but grows unboundedly with the distance transported. Assumption (iii) is our most substantive assumption, stating that the marginal cost of increasing the final valuation is decreasing in the initial valuation. This is fulfilled, for instance, if the cost of scaling up a consumer's initial valuation by some fixed proportion is the same across consumer types; or consumers are equally susceptible to advertisements and the cost of moving type $x$ to $y > x$ is a convex function of $|y-x|$, depending only on distance moved. 

In the main text we focus on advertising plans which weakly increase valuations because we think this is the relevant case for many product markets. In Online Appendix \ref{appendix:directional} we also completely characterize optimal advertising plans without this directional constraint---their underlying logic is quite similar though they feature both `forward' and `backward' shifts.\footnote{See Theorems \ref{thrm:exante_U} and \ref{thrm:expost_U} in Online Appendix \ref{appendix:directional} that complete characterizes optimal plans without this constraint. They have an  additional \emph{twist} structure to reflect both `forward' and `backward' comonotone shifts. Relaxing the directional constraint cannot improve producer surplus; more subtly, it can sometimes improve consumer surplus because `backward shifts' can help fulfill the monopolist's upward deviation constraints. See also \cite{li2023mismatch} who studies a model where consumers are `mismatched' which can induce lower prices.} For a given distribution over initial valuations $\mu \in \Delta(X)$ and choice of target distribution $\nu \in \Delta(\Re_+)$, choosing the least-cost advertising plan is a directional optimal transportation problem \citep{nutz2022directional}. To reflect this directional constraint, we require that the advertising plan concentrates on forward shifts: $\pi(\mathbb{H}) = 1$ where $\mathbb{H} := \{(x,y) \in \Re_+^2: y \geq x\}$. Finally, we will require that the target distribution admits a finite mean. Putting these constraints together, the set of feasible advertising plans is 
\[
\Pi \coloneqq \left\{\pi \in \Delta(\Re_+^2): {\substack{\text{\normalsize (i) $\text{marg}_x \pi = \mu$;} \\
\text{\normalsize{(ii)$\pi(\mathbb{H}) = 1$;}} \\
\text{\normalsize (iii) $\int y \de \pi(x,y) < +\infty$ 
}}}\right\}.
\]

\textbf{Mechanisms for increasing valuations.} We are agnostic about the exact mechanism through which advertising shapes valuations. These will differ across products and contexts, and an extensive literature in psychology and marketing analyzes various channels; we will not attempt a comprehensive account. Nonetheless, let us briefly offer a few examples.\footnote{See \cite{armstrong2010persuasive} and references therein.}
For instance, with anchoring effects, \cite{adomavicius2013recommender} note that \emph{``viewers' preference ratings are malleable and can be significantly influenced by the recommendation received''.} Likewise, salience can be exploited e.g., by emphasizing favorable aspects of the product (see \cite{bordalo2022salience} for a survey). There is also extensive documentation of the tendency for consumers to prefer products they are familiar with via `mere exposure effects', availability biases, recognition effects.\footnote{See \cite{tversky1973availability} and  \cite{goldstein2002models}. The `mere exposure effect' has been extensively documented by \cite{fang2007examination}.} With new technologies, recent experimental evidence suggest that humans interacting with large-language models are susceptible to covert manipulation, even when \emph{``decisions are grounded in quantifiable risks and trade-offs (e.g., budget constraints, product quality)''} as noted by \cite{sabour2025human}\footnote{See also \cite{matz2024potential,schoenegger2025large}. \cite{meguellati2024good} and \cite{feizi2023online} discuss LLM-driven advertising.} who find experimental evidence that AIs tailor manipulation to \emph{``exploit context-specific vulnerabilities''} over the course of interacting with humans.  Of course, there is substantial heterogeneity in consumers' responsiveness to advertising. For instance, susceptible consumers tend to be agreeable \citep{hirsh2012personalized}, new and infrequent users \citep{blake2015consumer}, and the elderly \citep{lewis2014advertising}. Our framework and results extend quite readily to incorporate such heterogeneity; we do so in \cref{sec:extensions}.

\textbf{Welfare metric.} We associate producer surplus with profits. There is, however, some subtlety in how we define consumer welfare. We offer two welfare measures: 
\begin{itemize}
    \item[(i)] \textbf{Ex-ante consumer surplus} (denoted $CS^{A}$) is CS measured relative to valuations under the initial distribution.  That is, if the price is $p$ and the transport map is $\pi$, 
    \[
    CS^{A}(p,\pi) = \int_{\mathbb{R}_+^2}\underbrace{\mathbb{I}(y \geq p)}_{\substack{\text{Purchase}\\ \text{decision}}} \cdot \underbrace{(x - p)}_{\text{Surplus}} d\pi (x,y)
    \]
    noting that this depends on the plan $\pi$ since welfare depends on both buying decisions (final valuations) and surplus (initial valuations less price). 
    \item[(ii)]  \textbf{Ex-post consumer surplus} (denoted $CS^{P}$) is CS measured relative to valuations under the final distribution. That is, if the price is $p$ and the transport map is $\pi$,  
    \[
    CS^{P}(p,\pi) := \int_{\mathbb{R}_+^2}\mathbb{I}(y \geq p) (y - p) d\pi (x,y)
 = \int_{p}^{+\infty} (y - p) \de F_{\nu}(y) 
    \]
noting that this depends only on the target distribution $\nu$. 
\end{itemize}
The appropriate welfare measure will depend on how advertising is interpreted.\footnote{We emphasize that under either measure of consumer welfare, we can remain agnostic about what is constitutive of welfare; see \cite{hausman2011preference} for a summary of some philosophical positions.} 
Ex-ante consumer surplus reflects \emph{advertising as behavioral manipulation}: it creates a wedge between choices and preferences, and the modifier `behavioral' reflects this wedge.\footnote{The modifier `behavioral' is also in the spirit of behavioral welfare economics that accommodates the possibility that agents make mistakes. Within the taxonomy of \cite{bernheim2016good}, the ex-ante view violates the premise from `standard' welfare economics that `Each individual's preferences determine their choices: when they choose, they seek the greatest benefit'.}  Under this view, consumers transported from $x$ to $y$ choose to purchase at price $p$ with $x < p < y$, receiving negative surplus although they are better off not purchasing. On the other hand, the formulation of ex-post consumer surplus is consistent with the view of \emph{advertising as a complementary good} \citep{stigler1977gustibus,becker1993simple} since advertising increases consumers' `true' valuations. This is also in line with the revealed preference approach to measuring consumer welfare. 

\textbf{Timing.} We will suppose throughout that consumers break ties in favor of buying and the monopolist breaks ties in favor of the lowest price.\footnote{Implementation can, as usual, be made strict via an appropriate perturbation of the plan.} The timing of the game is illustrated in \cref{fig:timing} below: 

\begin{figure}[h]
\vspace{-0.8em}
\centering
    {\includegraphics[width=0.75\textwidth]{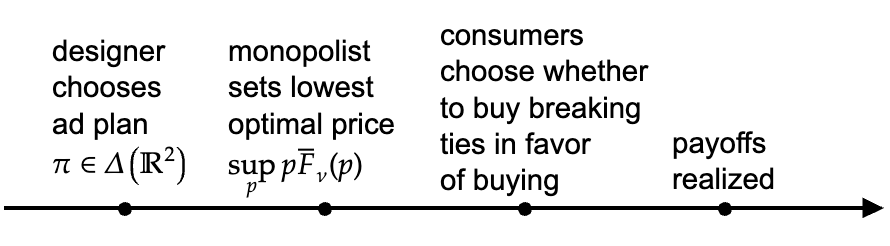}}
    \caption{Timing}\label{fig:timing}
\end{figure}

\textbf{Designer's problem.} For a given advertising plan $\pi$, the designer's objective function is $\phi\big(PS,CS,C\big)$, which is differentiable with bounded partial derivatives and weakly increasing in PS and CS: $\frac{\partial \phi}{\partial PS}, \frac{\partial \phi}{\partial CS} \geq 0$. Furthermore, $\phi$ is strictly decreasing in total cost $C(\pi) \coloneqq \int c(x,y) \de \pi$ with partial derivative bounded away from zero: $\frac{\partial \phi}{\partial C} < \ell < 0$ for some $\ell < 0$. The designer's problems under ex-ante and ex-post measures of welfare are: 
\begin{align*} \label{eq: designer_exante}
    \sup_{\pi \in \Pi} \phi \Big( PS(\pi), CS^A(\pi), C(\pi) \Big)
    \tag{D: ex-ante} 
\end{align*}
and 
\begin{align*} \label{eq: designer_expost}
    \sup_{\pi \in \Pi} \phi \Big( PS(\pi), CS^P(\pi), C(\pi) \Big).
    \tag{D: ex-post} 
\end{align*}

We will fully characterize the plans solving \eqref{eq: designer_exante} and \eqref{eq: designer_expost}. Our objective function $\phi$ is quite general and nests the standard case where the designer values a convex combination of $CS$ and $PS$: 
        \[
        \phi 
        = \underbrace{\alpha \cdot CS(\pi)  + (1-\alpha) \cdot PS(\pi)}_{\substack{\text{Weighted combination} \\  \text{of $CS$ and $PS$}}} - \underbrace{C(\pi)}_{\substack{\text{Total}\\ \text{cost}}} \quad \text{with weight $\alpha \in [0,1]$.}
        \]
We have chosen to work with general objective functions such as to speak to a wide array of economic environments. For instance, objectives which are multiplicative in consumer and producer surplus arise naturally in the context of platform-intermediated targeted advertising with endogenous market thickness (see \cref{sec:extensions}). Another example is when the designer faces non-Bayesian uncertainty about the appropriate welfare weight to place on consumer vs producer surplus.\footnote{For instance, maxmin preferences, or the multiplier preferences.}

\textbf{Discussion.} We briefly discuss our model. 
\begin{itemize}[leftmargin = 1em, nosep]
   \item \textbf{Interpreting the cost function.} We have taken a reduced form approach to modeling how advertising shapes valuations. The cost $c(x,y)$ might, for instance, capture the cost-per-unit-time of accessing a consumer scaled by the responsiveness of valuations to ads. As mentioned, we are agnostic about the exact channel through which valuations are shaped and our analysis extends quite readily to handle heterogeneity in susceptibility to manipulation. 
  \item  \textbf{Certain vs conditional costs.} In our model the cost of advertising is borne regardless of whether it leads to a purchase. One might, instead, suppose that such costs are paid only if a consumer purchases, or if consumers click through to the firm's website (as with `pay-per-click' advertising). It will turn out that flexibility ensures that consumers who are advertised to must---in equilibrium---purchase the good, so our formulation is consistent with such alternate interpretations.\footnote{We are grateful to Alessandro Bonatti for pointing this out.}
  \item \textbf{Atomless distribution of initial valuations.} We have assumed the distribution of initial valuations $\mu$ is atomless simply for expositional simplicity since advertising plans are then induced by deterministic (Monge) maps. With atoms, all our results hold essentially unchanged, though advertising plans must now be (more inconveniently) stated in terms of the joint distribution $\pi$ since mass may need to be split via randomization.
   \item \textbf{Uniform pricing.} Our designer is perfectly informed about consumer types. This raises the question of whether the monopolist is similarly informed and, if so, why has it not engaged in price discrimination? One possibility is that the monopolist is simply uninformed, as in the case where the designer is a large platform with detailed information about consumers' preferences while the monopolist is simply a seller. A second possibility is that the monopolist is unable to price discriminate because of reputational or other concerns.\footnote{\cite*{bergemann2025screening} offer a detailed discussion of this assumption.} Whatever the reason, personalized advertisements with uniform pricing remain the prevailing state of affairs in both on- and offline marketplaces.\footnote{In offline marketplaces, uniform pricing remains the norm. In online marketplaces, personalized pricing seems---at least for now---relatively scarce. \cite{cavallo2016billion} show a remarkable degree of similarity between online and offline prices. On the other hand, targeted online advertising is ubiquitous: US spending on online advertising reached $\$271$ Billion USD in 2022, roughly triple from $\ 91$ Billion USD in 2017. See also \cite{goldfarb2014different} for a discussion of how online advertising substantially reduces the cost of targeting ads.} 
   \item  \textbf{Interpreting the designer.} Our designer can be interpreted as an intermediary e.g., platform facilitating trade between sellers on one side, and consumers on the other. Platforms have incentive to balance the surplus on both sides of the market: on the one hand, they typically charge a commission on sales;\footnote{For instance, Amazon typically charges a differential commission based on product category, ranging from 1\% for grocery to up to 20\% for Amazon games.} on the other, they have incentive to deliver consumer surplus since consumers might have the option to buy off-platform; \cref{ext:intermediated} develops a simple model of intermediated targeted advertising and shows how our results apply there.\footnote{See also \cite{bergemann2024data} for a model of on-vs-off platform interaction. \cite*{elliott2022matching} characterizes all welfare outcomes a platform can implement via controlling access and information. Persuasive advertising is absent in both papers.} A more abstract interpretation is our results help understand the limits of how flexible demand manipulation can shape both total welfare as well the surplus split;\footnote{This is the same spirit as \cite*{bergemann2015limits}. More recently, in the context of digital advertising, \cite*{bergemann2022optimal} study both seller-optimal and bidder-optimal structures, where the seller is the platform and the bidders are firms wishing to access the consumer. In our setting there is a cost function so the set of $(PS,CS)$ pairs are technically unbounded. But for a \emph{fixed} expenditure on advertising, our results offer a simple recipe to trace out feasible combinations of producer- and consumer-surplus.} \cref{ext:robust_welfare} describes how our framework can be extended to derive robust welfare bounds from observed data in the presence of manipulation technologies. 
\end{itemize}

\section{Advertising as behavioral manipulation} \label{sec:manipulation}
We start by considering the view of advertising as behavioral manipulation. This reflects an old disquiet about advertising: \emph{``If the individual’s wants are to be urgent. ...they must be original with himself. They cannot be urgent if they must be contrived for him.''} \citep{galbraith1958affluent}. Under this view, advertising drives a {wedge between preference and choice} (see, e.g. \cite{bernheim2009beyond,bernheim2009behavioral,bernheim2016good}): a consumer's purchase decision depends on manipulated preferences, but her welfare depends on initial preferences.\footnote{We are agnostic about whether this wedge is driven by consumers' failure to purchase optimally (`technical ability') or understand the product (`categorization failure'); see \citep{bernheim2018behavioral} for a taxonomy.} This wedge might be driven by the proliferation of new technologies for manipulating consumer preferences, leading the analyst to suspect 
that purchase decisions are \emph{not welfare-revealing}.\footnote{Our notion is also related to ``glossiness'' introduced by \cite*{acemoglu2025big} in which a seller can lead consumers to misperceive the true quality of the product. However, we allow the designer to manipulate willingness-to-pay in a completely flexible manner.} We call this notion of welfare ex-ante. Our first result pins down the general form of optimal advertising plans. 

\begin{theorem} \label{thrm:exante_general}
The designer's problem under ex-ante welfare measure \eqref{eq: designer_exante} is solved by a deterministic advertising plan $\pi^* \in \Pi$ which is induced by the following map with an \emph{intermediate interval} structure: 
    \begin{align*}
        T(x) = \begin{cases}
        x \quad &\text{if $x \geq p^*$} \\
        p^*  \quad &\text{if $x \in [\underline p, p^*)$} \\ 
        x \quad &\text{if $x < \underline p$,}
        \end{cases}
    \end{align*}    
    for some $p^* \geq \underline p$, where $T(x)$ is the valuation type $x$ is transported to. Moreover, every designer-optimal advertising plan takes an intermediate interval structure.
\end{theorem}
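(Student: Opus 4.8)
The plan is to fix an arbitrary designer-optimal plan $\pi^\ast$, let $p^\ast$ denote the monopolist's lowest optimal price against the induced target $\nu^\ast=\marg_y\pi^\ast$, and show by a sequence of cost-reducing ``surgeries'' that $\pi^\ast$ must itself be the intermediate-interval plan with some parameters $(\underline p,p^\ast)$; existence of an optimum of this form will then follow by reducing the designer's problem to a two-parameter optimization. The structural fact that makes this work is that, holding the monopolist's price at $p^\ast$, the producer surplus $PS=p^\ast\bar F_{\nu^\ast}(p^\ast)$ depends on $\pi$ only through the mass of buyers, while the ex-ante consumer surplus $CS^A=\int \mathbb{I}(y\ge p^\ast)(x-p^\ast)\,d\pi$ depends only on the \emph{initial} valuations of the buyers (the types whose final valuation clears $p^\ast$). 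Neither quantity depends on how far buyers are moved beyond crossing $p^\ast$, nor on anything about non-buyers; since $\phi$ is strictly decreasing in cost, any transport that alters neither $PS$ nor $CS^A$ but raises cost is strictly suboptimal. The recurring subtlety is that a surgery changes $\nu$, and one must check it does not change the monopolist's price.

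First I would pin down the ``plateau'' part. Using smoothness together with the Inada conditions $c_y(x,x)=0$ and $c_{yy}>0$ — which imply $c(x,\cdot)$ is strictly increasing on $(x,\infty)$ — I would argue that, $\pi^\ast$-almost everywhere, each type is either left in place or sent exactly to $p^\ast$, and only types $x<p^\ast$ are moved. Concretely: (i) a non-buyer, or a buyer with $x\ge p^\ast$, that is moved upward can be returned to $x$; (ii) a buyer with $x<p^\ast$ sent strictly above $p^\ast$ can be pulled back to exactly $p^\ast$. Each such modification is feasible (it still only shifts valuations forward), strictly lowers cost when non-trivial, and leaves both $PS$ and $CS^A$ unchanged. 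Critically, it only moves mass \emph{downward}, so it only weakly decreases $\bar F_\nu$, and does so on a range lying weakly below the relevant threshold; hence the monopolist's profit $q\mapsto q\bar F_\nu(q)$ is unchanged at $p^\ast$ and at all lower prices and weakly lowered elsewhere, so $p^\ast$ remains its smallest maximizer. If $\pi^\ast$ performed any of (i)--(ii) on a positive-measure set it would be strictly improvable, a contradiction.

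Next I would obtain the interval structure. Let $S$ be the (a priori possibly fractional) assignment of types sent to $p^\ast$. If $S$ is not, up to null sets, an up-interval $[\underline p,p^\ast)$, there are $x_1<x_2<p^\ast$ and $\delta>0$ with a mass $\delta$ near $x_1$ sent to $p^\ast$ and a mass $\delta$ near $x_2$ left in place; swap them, sending the $x_2$-mass to $p^\ast$ in place of the $x_1$-mass. This (i) strictly lowers cost, since strict submodularity with $c(x,x)=0$ gives $c(x_2,p^\ast)\le c(x_1,p^\ast)-c(x_1,x_2)<c(x_1,p^\ast)$; (ii) weakly raises $CS^A$, since a buyer with initial valuation $\approx x_1$ is replaced by one with initial valuation $\approx x_2>x_1$, both still below $p^\ast$; and (iii) leaves $PS$ and the monopolist's price unchanged, because the swap changes $\bar F_\nu$ only by $-\delta$ on $(x_1,x_2]$, a range strictly below $p^\ast$, so profit at $p^\ast$ and at every price outside $(x_1,x_2]$ is untouched and $p^\ast$ stays the smallest maximizer. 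This contradicts optimality, so $S$ is an up-interval; the same swap in fractional form rules out interior randomization, so $\pi^\ast$ is deterministic and induced by the displayed map. As $\pi^\ast$ was arbitrary, every optimal plan has this form. Finally, restricting the designer's problem to intermediate-interval plans reduces it to maximizing a continuous function of $(\underline p,p^\ast)$; the Inada condition forces this payoff to $-\infty$ as $p^\ast\to\infty$ — costs grow superlinearly in the transport distance while the gains channeled through $\phi$'s dependence on $PS$ and $CS^A$ are linearly bounded by its bounded partials — so the maximization is effectively over a compact set, a maximizer exists, and by the above it is globally optimal.

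The step I expect to be the main obstacle is making these surgeries rigorous at the level of measures rather than ``$\e$-balls'': one must verify, case by case, that each modification preserves the monopolist's \emph{lowest} optimal price, which requires carefully tracking the support and sign of the change in $\bar F_\nu$ and using upper semicontinuity of $q\mapsto q\bar F_\nu(q)$ (together with finiteness of the mean) to keep the monopolist's problem well-posed after each step. The existence argument is a secondary technical point, complicated only by the non-convexity of $\phi$, and is tamed by the Inada condition.
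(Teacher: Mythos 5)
Your proposal is correct and follows essentially the same route as the paper: the paper's Lemma \ref{lemma:domination_lemma} is precisely your domination argument (replace any plan by the intermediate-interval plan with the same lowest optimal price $p^*$ and quantity $q^*=\oF_\pi(p^*)$, checking that $PS$ is unchanged, $CS^A$ weakly rises, cost strictly falls unless the plans coincide, and $p^*$ remains the lowest optimal price), with your local pull-back and submodularity-based swap surgeries playing the role of the paper's one-shot replacement after it first reduces to comonotone Monge maps via optimal-transport rearrangement. The one place you are thinner than the paper is existence: the paper proves it via tightness/Prokhorov arguments in the online appendix (Lemma \ref{lem: existence}), whereas your two-parameter reduction glosses over the implementability constraint (that $p^*$ must actually be the monopolist's optimal price for the candidate $(\underline p,p^*)$) and leans on the "every optimum has this form" direction, which presupposes the existence you are trying to establish—though your surgeries, applied to arbitrary rather than optimal plans, do close that loop.
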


\begin{figure}[h]
\centering
\caption{Form of optimal advertising plans (ex-ante)}
    \subfloat[CDF]{\includegraphics[width=0.45\textwidth]{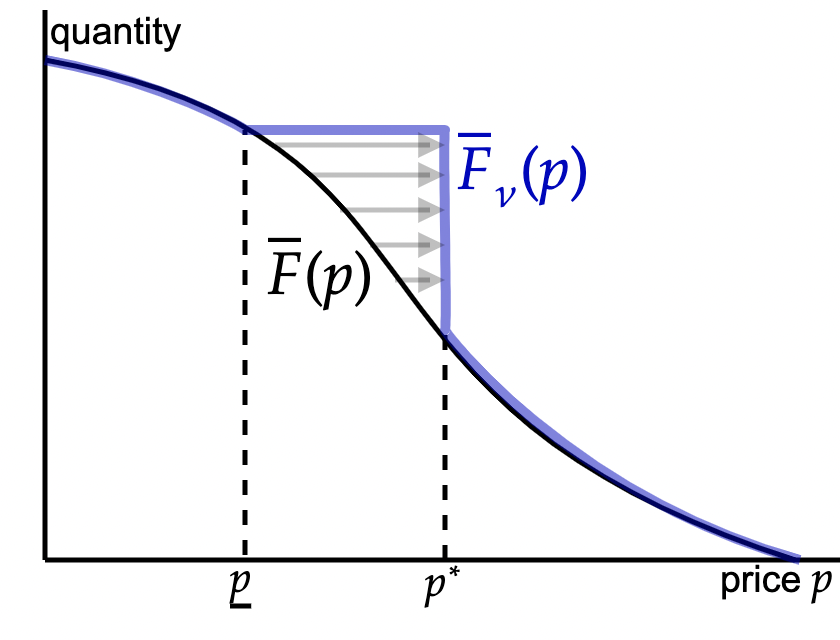}}
    \subfloat[Advertising map]{\includegraphics[width=0.45\textwidth]{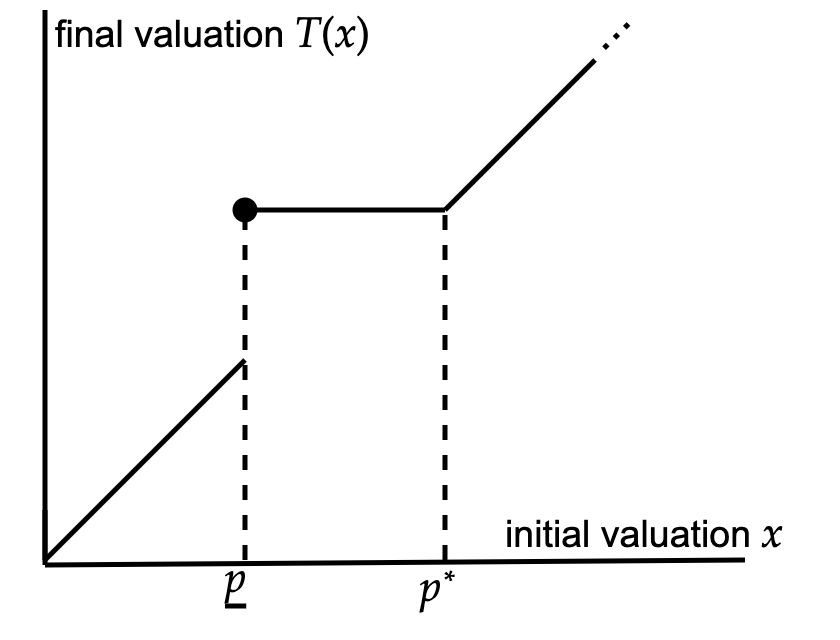}}
    \label{fig:exante}
\end{figure}

\cref{thrm:exante_general} is illustrated in \cref{fig:exante} which depicts the inverse demand function (panel (a)) and advertising map (panel (b)). Such plans consist of a \emph{target price} $p^*$ which the designer aims to implement, and a \emph{target quantity} $\oF(\underline p)$. The designer then (deterministically) transports all types in $[\underline p, p^*)$ to $p^*$ to shore up demand.\footnote{In the language of optimal transportation, the solution is induced by a deterministic `Monge map' without splitting mass.}

\begin{figure}[h] 
\centering
\caption{Intuition for intermediate interval maps}
    \subfloat[Forbidden maps]{\includegraphics[width=0.47\textwidth]{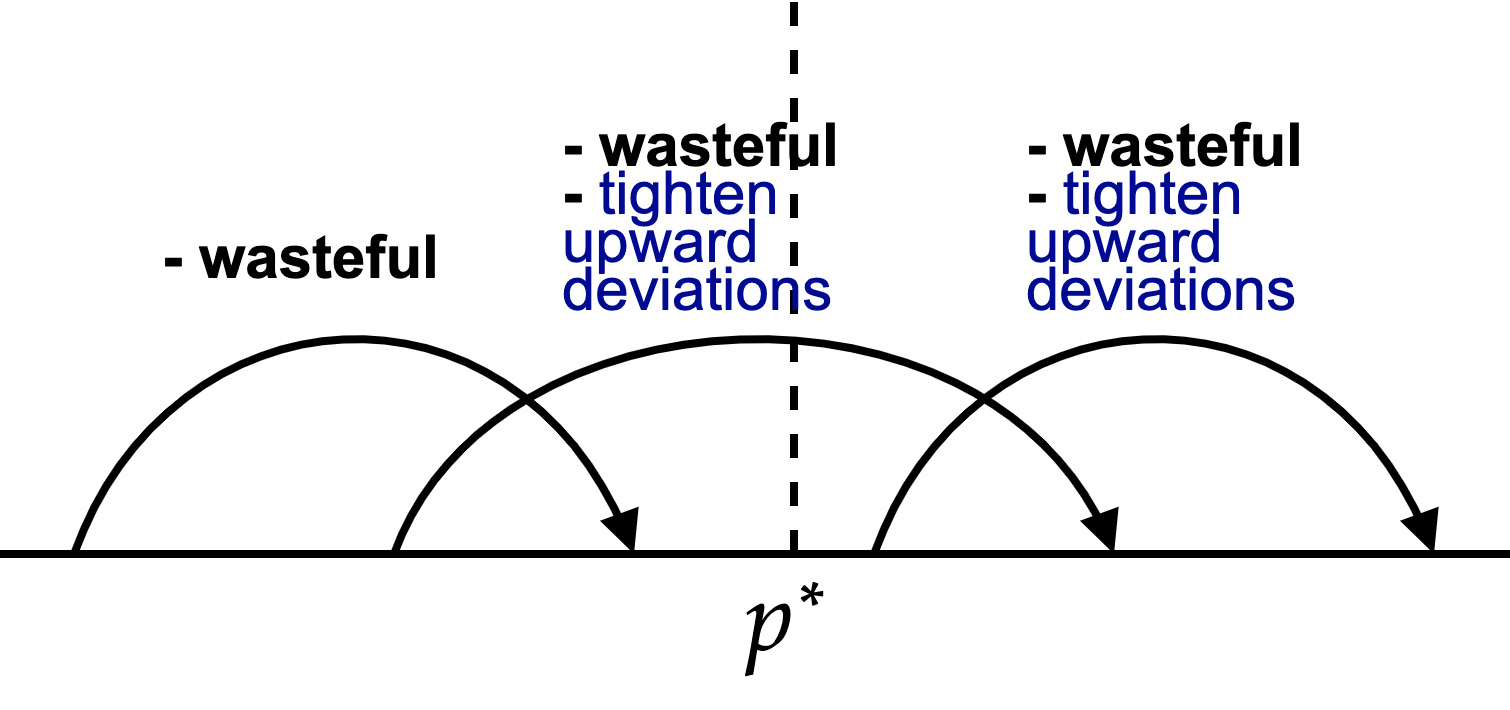}} \hfill 
    \subfloat[Greedy is better]{\includegraphics[width=0.47\textwidth]{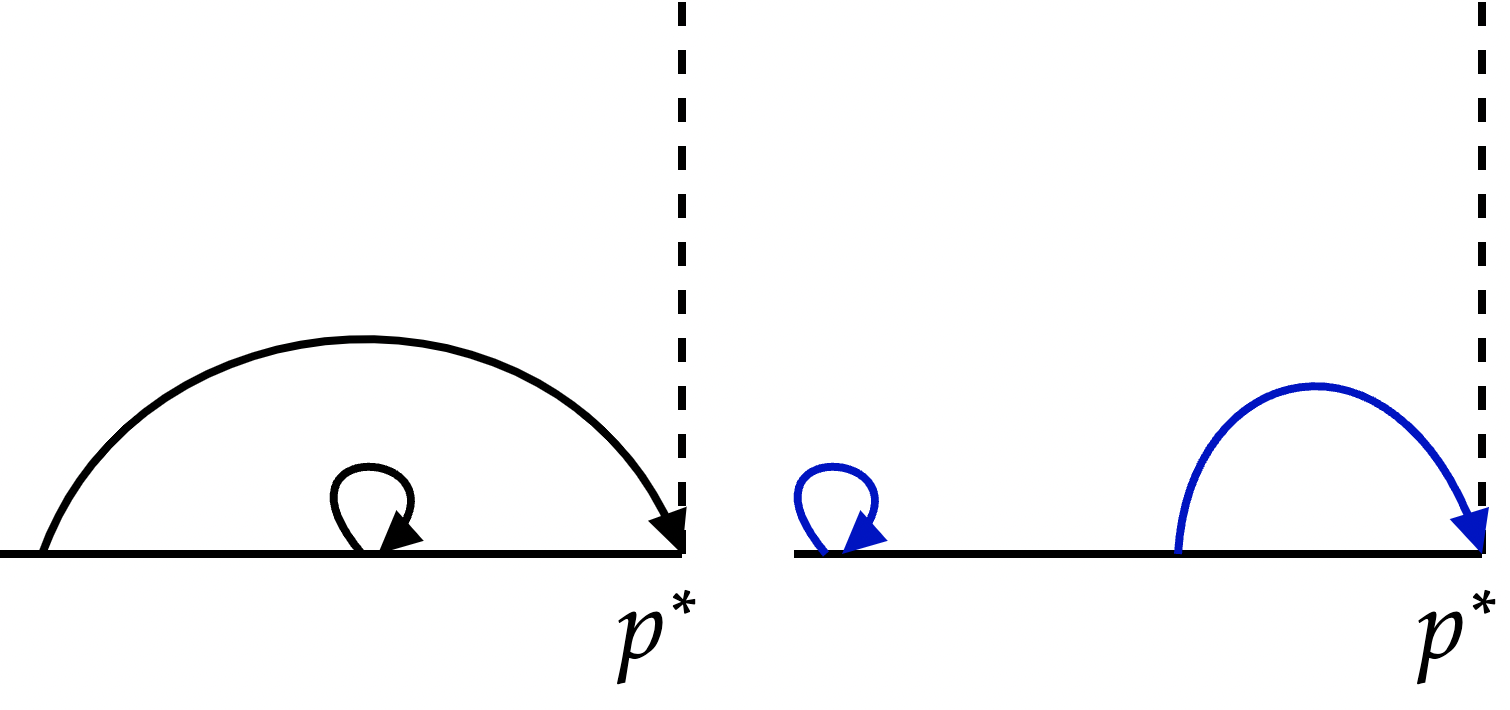}}
    \label{fig:forbidden_maps}
\end{figure}

\cref{fig:forbidden_maps} illustrates the simple reason why optimal maps must take this form. Panel (a) depicts maps from $x$ to $y$ which must be suboptimal i.e., if an advertising plan assigns positive probability to such maps, it can be strictly improved. Start by considering the arrow on the left which maps valuation $x < p^*$ to $y < p^*$. Since the consumer does not purchase, advertising is entirely wasteful and the designer is better off not advertising to those consumers. Next, consider the arrow in the middle which maps valuations $x < p^*$ to $y > p^*$, and the arrow on the right which maps valuations $x > p^*$ to $y > x > p^*$. In both cases, advertising is excessive for two distinct reasons: such maps are 
\begin{enumerate}[]
    \item \emph{Wasteful}: transporting valuations from $x$ to $p^*$ if $x < p^*$ and not advertising otherwise would (i) save on advertising costs; (ii) still suffice to induce purchase at price $p^*$ ($PS$ does not decrease); and (iii) under ex-ante consumer surplus, excess willingness-to-pay does not translate to excess welfare; and 
    \item \emph{Tempting}: consumers transported to $y > p^*$ would continue to purchase the product if the monopolist deviates upward
    to price $p^* < p' \leq y$. This tightens some of the monopolist's upward deviation constraints and makes the target price $p^*$ more challenging to implement.
\end{enumerate} 
Both of these forces act in the same direction and imply that only consumers with initial valuations $x < p^*$ can be targeted and, if they are, must be transported to exactly $p^*$; inframarginal consumers are never shown advertisements. Which among consumers with valuations $x < p^*$ should be targeted? Panel (b) illustrates why the ``greedy choice'' which starts with marginal consumers with valuations just shy of the price $p^*$ and working backwards is optimal.

\cref{thrm:exante_general} establishes the \emph{form} of optimal advertising plans for general objective functions $\phi$. Nonetheless, depending on whether the designer values producer or consumer surplus, the economic intuition underlying these plans differ---and these are reflected in different target prices and quantities.

\textbf{Producer-optimal plans.} Suppose that the designer is perfectly aligned with the monopolist\footnote{For instance, the monopolist might be in control of the advertising plan and accounts for the cost of advertising.} and solves 
\[
\sup_{\pi \in \Pi} PS(\pi) - \int c(x,y) \de\pi. 
\]
We call the solution the \emph{producer-optimal plan}. A first observation is that this problem is equivalent to the monopolist choosing both $p^*$ and the advertising plan $\pi$ directly. Observe that holding fixed the target distribution, the monopolist faces the standard tradeoff between her extensive margin (selling to fewer customers) with her intensive margin (selling to existing consumers at higher prices). However, by varying the advertising plan $\pi$, the monopolist is also able to artificially shore up demand. Thus, under flexible target advertising the monopolist's intensive-extensive tradeoff is no longer local. 

This implies that at the producer-optimal, the extensive margin is determined by the \emph{marginal cost of advertising}: for a given target price $p^*$, $\underline p$ is pinned down as the highest quantity so that the contribution to producer surplus from transporting $\underline p$ to $p^*$ is exactly offset by the cost of doing so. This is depicted in \cref{fig:CS_PS_optimal} (a) and is formalized by the following proposition. 

\begin{proposition}[Producer-optimal] \label{prop: produceroptimal}
    The producer-optimal plan is of the intermediate interval form as in \cref{thrm:exante_general}, where the target price $p^*$ solves
\begin{align*}
    \max_{p \geq 0} &\Bigg[ p \overline F\Big(\underline p(p)\Big) - \int_{\underline p(p)}^{p}c(x,p) \de F(x)\Bigg] \\   
&\text{with } \underline p(p) := \inf\Big\{x \in X: c(x,p) \leq p\Big\}.
\end{align*}
\end{proposition}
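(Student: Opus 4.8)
The plan is to use \cref{thrm:exante_general} to restrict attention to intermediate-interval plans (and to guarantee an optimum exists), and then to recast the producer's problem $\sup_{\pi\in\Pi}\bigl(PS(\pi)-C(\pi)\bigr)$ as a joint maximization over a target price $p$ and the set of consumers transported up to $p$: once $p$ is fixed the monopolist's price is fixed, so the designer only cares whether each consumer ends up buying, the problem separates across consumers, and the optimal transported set together with the resulting value --- which I will call $g(p)$ --- can be read off pointwise. Two monotonicity facts about $c$ from \cref{main assumption} are needed: (i) differentiating $c(x,x)=0$ and using $c_y(x,x)=0$ gives $c_x(x,x)=0$, hence $c_x(x,p)=\int_x^p c_{xy}(x,s)\de s<0$ for $p>x$ by strict submodularity, so $c(\cdot,p)$ strictly decreases on $[0,p]$ from $c(0,p)>0$ to $c(p,p)=0$ and $\{x\le p:c(x,p)\le p\}=[\underline p(p),p]$ with $\underline p(p)=\inf\{x\in X:c(x,p)\le p\}$; (ii) $c_y(x,y)=\int_x^y c_{yy}(x,s)\de s>0$ for $y>x$, so $c(x,\cdot)$ strictly increases on $[x,\infty)$.

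The core step is the value identity $\sup_{\pi\in\Pi}\bigl(PS(\pi)-C(\pi)\bigr)=\sup_{p\ge 0}g(p)$ with $g(p):=p\,\oF(\underline p(p))-\int_{\underline p(p)}^{p}c(x,p)\de F(x)$. For ``$\le$'', fix $\pi\in\Pi$, put $\nu:=\marg_y\pi$, and for any $p$ bound $p\,\oF_{\nu}(p)-C(\pi)=\int\bigl[p\,\mathbb{I}(y\ge p)-c(x,y)\bigr]\de\pi\le\int\psi_p(x)\de F(x)$, where $\psi_p(x):=\sup_{y\ge x}\bigl[p\,\mathbb{I}(y\ge p)-c(x,y)\bigr]$, using $\marg_x\pi=\mu$ and $\pi(\mathbb{H})=1$. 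Since $c\ge 0$ with $c(x,x)=0$ (so undershooting to any $y\in(x,p)$ is dominated by $y=x$) and $c(x,\cdot)$ is increasing above $x$ (so overshooting above $p$ is pure waste), one gets $\psi_p(x)=p$ for $x\ge p$ and $\psi_p(x)=\max\{0,\,p-c(x,p)\}=(p-c(x,p))\,\mathbb{I}(x\ge\underline p(p))$ for $x<p$; integrating and using $F+\oF\equiv 1$ gives $p\,\oF_{\nu}(p)-C(\pi)\le g(p)$, and since $\sup_p\bigl(p\,\oF_{\nu}(p)-C(\pi)\bigr)=PS(\pi)-C(\pi)$ this yields $PS(\pi)-C(\pi)\le\sup_p g(p)$. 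For ``$\ge$'', for each $p$ the deterministic intermediate-interval plan with parameters $(\underline p(p),p)$ lies in $\Pi$, induces $\nu$ with $\oF_{\nu}(p)=\oF(\underline p(p))$ at cost $\int_{\underline p(p)}^{p}c(x,p)\de F$, so its producer value is at least $g(p)$.

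To close the loop, invoke \cref{thrm:exante_general}: a producer-optimal plan $\pi^*$ exists, is intermediate-interval with parameters $(\underline p,p^*)$, and $p^*$ is the monopolist's optimal price against $\nu^*:=\marg_y\pi^*$ (cost-minimality forces this, since otherwise the transported consumers could be moved a strictly shorter distance to the realized price). Then $PS(\pi^*)=p^*\,\oF_{\nu^*}(p^*)=p^*\,\oF(\underline p)$ and $C(\pi^*)=\int_{\underline p}^{p^*}c(x,p^*)\de F$, so the ``$\le$'' bound at price $p^*$ gives $PS(\pi^*)-C(\pi^*)\le g(p^*)\le\sup_p g(p)$, while optimality of $\pi^*$ and the value identity give $PS(\pi^*)-C(\pi^*)=\sup_p g(p)$. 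Hence every inequality is tight: $g(p^*)=\sup_p g(p)$, i.e.\ $p^*$ solves $\max_{p\ge 0}g(p)$, and $p^*\,\oF(\underline p)-\int_{\underline p}^{p^*}c(x,p^*)\de F=g(p^*)$. Since $q\mapsto p^*\,\oF(q)-\int_q^{p^*}c(x,p^*)\de F$ has derivative $f(q)\bigl(c(q,p^*)-p^*\bigr)$, which is $\ge 0$ for $q\le\underline p(p^*)$ and $\le 0$ for $q\ge\underline p(p^*)$ and strict where $f>0$, this last equality forces $\underline p=\underline p(p^*)$ on $\supp\mu$, which is the asserted intermediate-interval form.

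I expect the main obstacle to be the ``$\le$'' direction, specifically evaluating $\psi_p$ on $\{x<p\}$: this is where the directional constraint $\pi(\mathbb{H})=1$ enters (restricting the inner supremum to $y\ge x$), together with the Inada conditions ($c_y(x,x)=0$, $c_{yy}>0$, which make overshooting $p$ strictly costly) and strict submodularity ($c(\cdot,p)$ decreasing, which produces the interval $[\underline p(p),p]$ of consumers worth transporting). The remainder is bookkeeping, modulo the routine null-set caveat at $x=\underline p(p)$ and the harmless edge case $p<\min X$, which is never a maximizer of $g$ since $g(p^M)\ge p^M\,\oF(p^M)>0$ by the standing assumption $p^M>\min X$.
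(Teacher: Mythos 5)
Your proof is correct, and it takes a somewhat different route from the paper's. The paper does not give a standalone proof of \cref{prop: produceroptimal}: it treats the reduction as an immediate consequence of \cref{thrm:exante_general}, reparametrizes intermediate-interval plans by the pair $(\underline p, p)$ with objective $r(\underline p,p)=p\,\oF(\underline p)-\int_{\underline p}^{p}c(x,p)\de F(x)$ (this is exactly how Appendix B opens), and reads off $\underline p(p)$ from the inner first-order condition $c(\underline p,p)=p$, i.e.\ \eqref{eq: FOC-l}. You recover the same inner characterization at the end via the sign of $f(q)\bigl(c(q,p^*)-p^*\bigr)$, but your main step --- the value identity $\sup_{\pi}\bigl(PS-C\bigr)=\sup_p g(p)$ proved by sandwiching with the pointwise envelope $\psi_p(x)=\sup_{y\ge x}\bigl[p\,\mathbb{I}(y\ge p)-c(x,y)\bigr]$ --- is an extra, genuinely different ingredient. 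What it buys you is self-containedness: the upper bound $p\,\oF_\nu(p)-C(\pi)\le\int\psi_p\,\de F=g(p)$ holds for \emph{every} feasible $\pi$ and every $p$, so you never need to separately verify that the monopolist's realized price against the induced distribution coincides with the target $p$ (the paper handles that implementability point inside the proof of \cref{lemma:domination_lemma}, where it shows $p^*=\underline B(\pi')$ for the dominating plan). The computations of $\psi_p$ (using $c(x,\cdot)$ increasing above $x$ so overshooting $p$ is wasteful, and $c(\cdot,p)$ strictly decreasing by $c_x(x,x)=0$ plus $c_{xy}<0$ so the transported set is the interval $[\underline p(p),p]$) are all sound, and your appeal to \cref{thrm:exante_general} for existence and the interval form closes the argument cleanly.
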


We will subsequently use \cref{prop: produceroptimal} to show that a monopolist equipped with the ability to flexibly manipulate the demand curve substantially drives up prices relative to uniform advertising. 

\begin{figure}[h]
\centering
\caption{Producer and consumer-optimal plans}
    \subfloat[PS-optimal]{\includegraphics[width=0.45\textwidth]{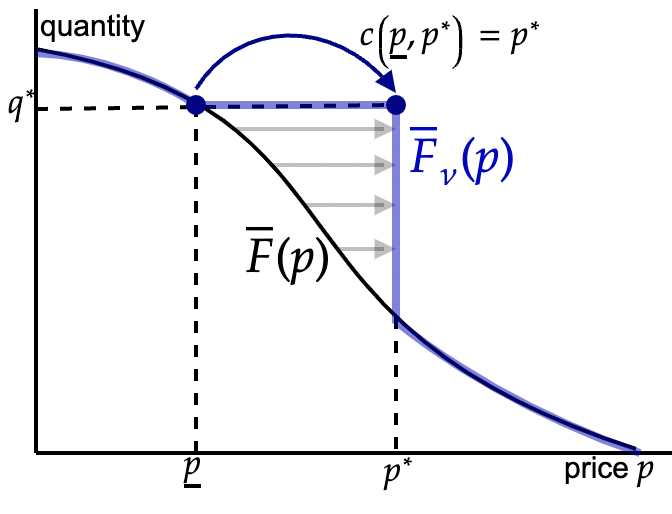}}
    \subfloat[CS-optimal (ex-ante)]{\includegraphics[width=0.45\textwidth]{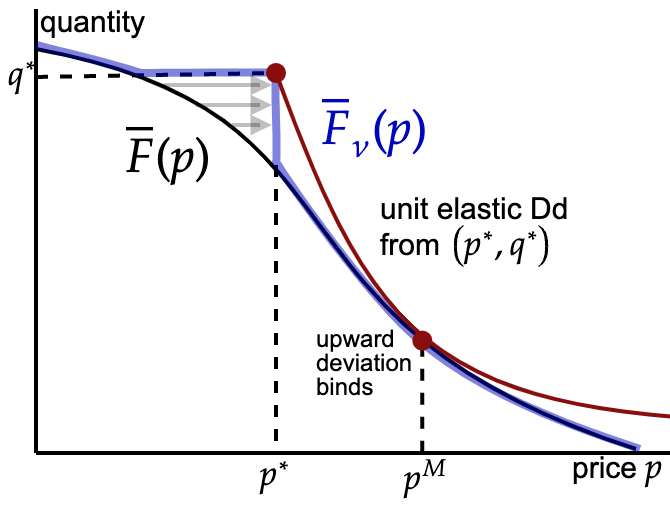}}
    \label{fig:CS_PS_optimal}
\end{figure}

\textbf{Consumer-optimal (ex-ante) plans.} Now suppose, instead, that the designer seeks to maximize consumer surplus less cost of advertising and so solves
\[
\sup_{\pi \in \Pi} CS^A(\pi) - \int c(x,y) \de\pi. 
\]
We call the solution to this problem the ex-ante consumer-optimal plan. This problem is a little more subtle: to deliver consumer surplus, the designer aims to implement a low target price $p^*$. But different from the producer-optimal case, the monopolist's incentives must be taken into account: for $p^*$ to be implementable under the new demand curve, the monopolist must not be tempted to deviate upwards to a higher price; we call these upward deviation constraints.\footnote{The monopolist must technically be deterred from deviating both up and downwards---but only upward deviations will be salient since we are interested in maximizing consumer surplus.}

The core tradeoff here is that to implement the lower target price $p^*$, the monopolist shores up demand which is associated with (i) the costs of advertising; and (ii) welfare losses from transported consumers who are manipulated into buying against their interest and so receive negative surplus. This suggests a one-dimensional characterization of the consumer-optimal plan: for a given target price $p^*$, $\underline p$ is pinned down as the highest threshold that can implement $p^*$. That is, different from the producer-optimal case in which the additional extensive margin was determined by the marginal costs of shaping preferences, here it is determined by the {monopolist's upward deviation constraints}. This is depicted in \cref{fig:CS_PS_optimal} (b) and is formalized as follows:

\begin{proposition}[Ex-ante consumer-optimal] \label{prop: exanteconsumeroptimal}
    The ex-ante consumer-optimal plan is solved by 
\begin{align*}
\max_{p \in [\min X, p^M]} &\Bigg[ \underbrace{ \int_{p}^{+\infty} (x-p) \de F(x)}_{\text{Positive part of $CS^A$}} - \underbrace{\int_{\underline p}^{p} \Big( c(x,p) + (p - x) \Big) \de F(x)}_{\text{Transportation and manipulation cost}} \Bigg]
\\   
&\text{where } \underbrace{\underline p(p) = \oF^{-1}\Big(\frac{p^M\oF(p^M)}{p}\Big)}_{\text{Upward deviation to $p^M$ binds}}
\end{align*}
with optimal price $p^* < p^M$ i.e., the monopolist's price strictly decreases.
\end{proposition}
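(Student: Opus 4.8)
The plan is to use \cref{thrm:exante_general} to collapse the infinite-dimensional problem to an optimization over the two scalars $(\underline p, p^*)$, then eliminate $\underline p$ using the monopolist's incentive constraints, and finally establish $p^* < p^M$ via a first-order condition at $p^M$.

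\textbf{Reduction to two parameters.} By \cref{thrm:exante_general} the ex-ante consumer-optimal plan is induced by an intermediate-interval map with $\min X \le \underline p \le p^*$. First I would compute the induced target measure $\nu$ and its survival function: mass below $\underline p$ and above $p^*$ is untouched, $(\underline p, p^*)$ carries no mass, and $p^*$ carries an atom of size $F(p^*) - F(\underline p)$; hence $\oF_{\nu}(p) = \oF(p)$ for $p > p^*$ and for $p < \underline p$, while $\oF_{\nu}(p) = \oF(\underline p)$ on $[\underline p, p^*]$. Plugging the map into the definition of $CS^A$, inframarginal consumers $x \ge p^*$ contribute $x - p^*$ and the persuaded consumers $x \in [\underline p, p^*)$ contribute the negative amount $x - p^*$, so $CS^A = \int_{\underline p}^{\infty}(x - p^*)\de F(x)$; subtracting the advertising cost $\int_{\underline p}^{p^*} c(x,p^*)\de F(x)$ and regrouping gives exactly the displayed objective $\int_{p^*}^{\infty}(x-p^*)\de F - \int_{\underline p}^{p^*}\big(c(x,p^*)+(p^*-x)\big)\de F$.

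\textbf{Eliminating $\underline p$.} Next I would characterize when $p^*$ is the monopolist's lowest optimal price against $\nu$, i.e.\ $p\,\oF_{\nu}(p)\le p^*\oF(\underline p)$ for all $p$. For $p\in(\underline p,p^*)$ this reads $p\,\oF(\underline p)<p^*\oF(\underline p)$ and holds strictly; for $p\le\underline p$ it follows from $p\,\oF(p)\le p^M\oF(p^M)$ once the upward constraints hold; and among upward deviations $p>p^*$, since $p^M$ globally maximizes $p\mapsto p\,\oF(p)$, the single inequality $p^*\oF(\underline p)\ge p^M\oF(p^M)$ implies all of them, with equality leaving the monopolist indifferent between $p^*$ and $p^M$ and hence (breaking ties downward) choosing $p^*$. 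Because the objective is strictly increasing in $\underline p$ on $\underline p<p^*$ (the integrand $c(x,p^*)+(p^*-x)$ is strictly positive there), the optimum pushes $\underline p$ up to the binding value $\oF(\underline p)=p^M\oF(p^M)/p^*$, i.e.\ the stated $\underline p(p)=\oF^{-1}\!\big(p^M\oF(p^M)/p\big)$. Substituting yields the one-dimensional program $\max_p \Psi(p)$ on $[\min X,p^M]$, where $\Psi(p)$ is the bracketed objective with $\underline p=\underline p(p)$ substituted; here $p^*\ge\min X$ is forced (shifting valuations upward keeps $\supp\nu\subseteq[\min X,\infty)$, so the monopolist never prices below $\min X$) and $p^*\le p^M$ is without loss for consumer surplus (enforcing a price above $p^M$ only lowers inframarginal surplus while consuming resources).

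\textbf{The first-order condition.} Finally I would show the maximizer is interior. A maximizer $p^*$ exists by continuity of $\Psi$ on the compact interval (the finite-mean and Inada tail assumptions controlling the integrals when $X$ is unbounded). Differentiating by Leibniz's rule, and using $c(p,p)=0$ so the upper-limit boundary term drops, gives
\[
\Psi'(p) = -\oF(p) + \big(c(\underline p(p),p)+p-\underline p(p)\big)f(\underline p(p))\,\underline p'(p) - \int_{\underline p(p)}^{p}\big(c_y(x,p)+1\big)\de F(x),
\]
with $\underline p'(p)$ finite. Evaluating at $p=p^M$, where $\underline p(p^M)=p^M$, the middle term vanishes (since $c(p^M,p^M)=0$ and $p^M-\underline p(p^M)=0$) and the integral is over an empty range, leaving $\Psi'(p^M)=-\oF(p^M)<0$; hence $p^*\neq p^M$, so $p^*<p^M$ and the monopolist's price strictly falls. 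The main obstacle is the bookkeeping in the second step---computing $\oF_{\nu}$ correctly with the atom and the gap, reducing the continuum of upward and downward incentive constraints to the single binding one, and handling the tie at $p^M$ under the lowest-price tie-breaking rule; the FOC computation is delicate only in that one must check all envelope and boundary terms collapse at $p^M$, which they do because $c$ vanishes on the diagonal.
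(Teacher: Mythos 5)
Your proposal is correct and follows essentially the same route the paper takes: invoke \cref{thrm:exante_general} to reduce to the pair $(\underline p, p^*)$, observe that only the upward deviation to $p^M$ is relevant and must bind (since the objective is strictly increasing in $\underline p$, pushing $\underline p$ up to $\oF^{-1}(p^M\oF(p^M)/p^*)$ saves transportation and manipulation costs while preserving implementability), and substitute to get the one-dimensional program. Your endpoint computation $\Psi'(p^M) = -\oF(p^M) < 0$ is a clean and correct way to establish $p^* < p^M$, a step the paper asserts but never writes out explicitly; the only cosmetic caveat is that $\underline p(p)$ is well-defined only for $p \geq p^M\oF(p^M)$, so the effective domain is really $[\,\min X \vee p^M\oF(p^M),\, p^M]$, which does not affect the argument.
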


Why does the upward deviation to the original optimal price $p^M$ bind? Observe that from \cref{thrm:exante_general} optimal advertising plans have intermediate interval structures. Recall that such plans leave the demand curve after the target price $p^*$ unchanged, so the monopolist's upward deviation constraint can be written as: 
\[
p\oF(\underline p) \geq \max_{p'\geq p} p' \cdot \oF(p')  = p^M \cdot \oF(p^M). 
\]
which is a sufficient condition to implement $p^*$. But it must also be an equality: if not, since $\oF$ is continuous the designer can increase $\underline p$ a little which saves on total costs (both transportation and manipulation) while continuing to implement $p^*$. In this way, \cref{prop: exanteconsumeroptimal} delivers a computationally simple procedure for solving for the consumer-optimal advertising plan. Moreover, it holds generically that only this upward deviation binds. This is in contrast to recent results on consumer-optimal information design which find that consumer-optimal information structures typically keep the monopolist indifferent between all upward deviations.\footnote{See, e.g., \cite*{bergemann2015limits,roesler2017buyer}.} This difference arises precisely because under the view of advertising as manipulation, if $p^*$ is already implementable, additional shifts of inframarginal consumers to further tighten the monopolist's upward deviation incentives cannot improve ex-ante consumer welfare. We revisit this distinction in \cref{sec:joint} where we study the joint provision of informative and persuasive advertising.

Finally, observe that under the consumer-optimal advertising plan, advertising improves ex-ante consumer welfare in the aggregate, but differentially across consumers. In particular, to induce a lower target price $p^*$ to benefit inframarginal consumers, intermediate consumers must be manipulated to purchase the good against their own interests.\footnote{See \cite*{barreto2022price} for an analysis of (informational) market segmentation for price discrimination under distributional concerns.} Our analysis highlights how consumer-optimal advertising plans are both \emph{powerful}, delivering substantially more surplus in the aggregate. but are also \emph{necessarily regressive} with intermediate consumers made strictly worse-off than under their outside option.  

\textbf{Efficiency-maximizing plans.} Now suppose that the designer is interested in maximizing a convex combination of producer and ex-ante consumer surplus: 
\[
\sup_{\pi \in \Pi} \bigg( \alpha CS(\pi) + (1-\alpha) PS(\pi) - \int c d\pi \bigg) \quad \text{for $\alpha \in [0,1]$.}
\]
$\alpha = 0$ corresponds to the producer optimal plan where the bottom of the intermediate interval $\underline p$ is pinned down by a first-order condition equating the marginal cost of advertising to the marginal increase in product surplus. $\alpha = 1$ corresponds to the consumer-optimal plan where $\underline p$ is pinned down by the minimal extra demand to deter the monopolist's upward deviations. These constraints are economically distinct and, when the designer has an interior welfare weight, can be combined via a simple complementary slackness condition illustrated by \cref{fig:comp_slackness}.

\begin{figure}[h!]
    \centering
      \caption{Complementary slackness} 
    \vspace{-0.5em}
    \subfloat[Slack]{\includegraphics[width=0.4\textwidth]{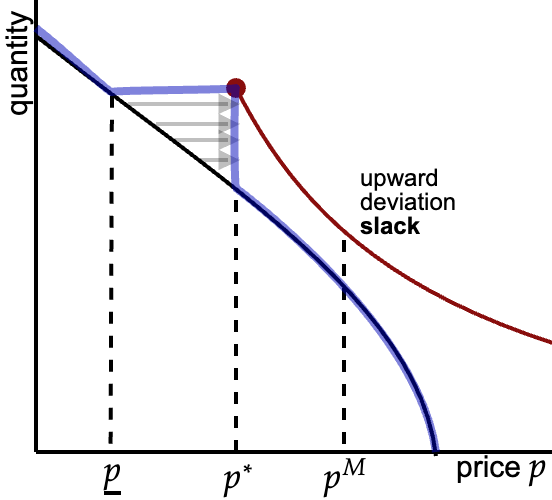}} 
    \subfloat[Binding]{\includegraphics[width=0.4\textwidth]{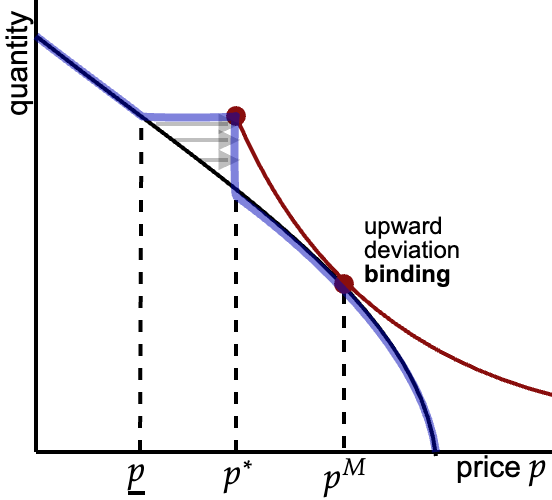}}     
    \label{fig:comp_slackness}
    \vspace{-1em}
\end{figure}

For a candidate price $p^*$, if all upward deviations are slack (panel (a)), then $\underline p$ is pinned down by the first-order condition as reflected by the \textcolor{blue}{blue} signs in the following equations:  
\begin{align*}
    \underbrace{\alpha (\underline p - p^*)}_{\substack{\text{Reduction in} \\ {CS^A}}} + \underbrace{(1-\alpha) \cdot p^*}_{\substack{\text{Addition to} \\ \text{PS}}}  -
    \underbrace{c(\underline p,p^*)}_{\substack{\text{Cost of} \\ \text{$\underline p \to p^*$}}} \underset{(\textcolor{red}{\leq})}{\textcolor{blue}{=}} 0 
    \quad \text{and} \quad 
    \underbrace{p^M \oF(p^M) - p^* \oF(\underline p)}_{\text{Best upward deviation}}  \underset{(\textcolor{red}{=})}{\textcolor{blue}{<}}  0
\end{align*}
On the other hand, if upward deviations bind (panel (b)) then $\underline p$ must satisfy the above equations with \textcolor{red}{red} signs. It is immediate to verify that in the producer-optimal case $(\alpha = 0)$ the first equation binds, and in the consumer-optimal case $(\alpha = 1)$ the second binds.

\section{Advertising as complements} \label{sec:complements}
We now consider an alternative view that advertising is simply a complementary good that delivers additional welfare to consumers \citep{becker1993simple}. Under this view, a consumer of type $x$ transported to $y \geq p$ enjoys a surplus of $y-p \geq 0$ from purchasing the monopolist's product. 
This view is consistent with the revealed preference approach so empirical estimates of the effect of advertising on consumer welfare would measure $CS^P$.

We start with a simple but important definition which tracks marginal contributions to the designer's objective $\phi$ as evaluated at a fixed advertising plan $\pi$: 

\begin{definition}[Locally-greedy maps]
    Say $\Lambda_{\pi}$ is the \emph{locally-greedy} transportation map at $\pi$ if for each type $x \in \Re_+$, 
    \[\Lambda_{\pi}(x) \coloneqq \arg\max_{y\geq x} \, \underbrace{\mathbb{I}\big[y\geq p\big] \cdot \Bigg( p \cdot \cfrac{\partial \phi}{\partial PS}\bigg|_{\pi} + (y-p) \cdot \cfrac{\partial \phi}{\partial CS^P}\bigg|_{\pi} \Bigg) + c(x,y)\cdot \cfrac{\partial \phi}{\partial C}\bigg|_{\pi}}_{\text{Local contribution from moving $x \to y$}}
    \]
    where $p$ is the seller's lowest optimal price under the advertising plan $\pi$.\footnote{Note that $\Lambda_{\pi}$ is technically a correspondence, but will be singleton-valued almost everywhere.} Note $\Lambda_{\pi}$ is increasing since $c$ is submodular.
\end{definition}

Locally-greedy maps are interpreted as follows. The expression within the argmax gives the contribution to the designer's objective from transporting consumer type $x$ to $y$, evaluated at the fixed plan $\pi$---this is the sense in which the contribution is \emph{local}.  $\Lambda_{\pi}(x)$ prescribes that each type $x$ should be transported to maximize its local contribution---this is the sense in which the map is \emph{greedy}.\footnote{The designer's objective $\phi$ is a function of producer surplus, consumer surplus, and advertising costs which are, themselves, functions of the advertising plan $\pi$. Thus, viewing $\phi$ as a functional on transport plans, this map arises in a functional first-order condition that follows from the chain rule for Fr\'echet derivatives. See \cref{Lem: expost linearization} in Appendix \ref{appendix:proofs_main} for details.} 

When the designer's objective is a weighted combination of producer and consumer surplus, the contribution to the designer's objective depends only on the price $p$ and not the plan $\pi$ so the locally-greedy map then simplifies to
\[ \Lambda_{\pi}(x) = \arg\max_{y\geq x} \mathbb{I}[y\geq p] \bigg( \underbrace{(1 - \alpha) 
 \cdot p}_{\substack{\text{Contribution}\\ \text{to $PS$}}} + \underbrace{\alpha \cdot (y-p)}_{\substack{\text{Contribution}\\ \text{to $CS^P$}}} \bigg) - \underbrace{c(x,y).}_{\substack{\text{Contribution}\\ \text{to cost}}}
\]
Importantly, this ignores the possibility that such improvements might alter the monopolist's pricing incentives. To this end, we now incorporate local-greediness into a class of advertising plans which do take the monopolist's pricing incentives into account: 

\begin{definition} \label{def: constrained-greedy}
    Say $\pi^* \in \Pi$ is \emph{constrained-greedy} for the \emph{target price and quantity pair} $(p^*,q^*)$, where (i) $p^*q^* \geq p^M \oF(p^M)$; and (ii) $p^*$ is the lowest optimal price under $ \pi^*$, 
    if it is induced by the following map:\footnotemark 
    \[
        T(x) = \begin{cases}
        x \quad &\text{if $x < \oF^{-1}(q^*)$,}
        \\ 
        p^*  \quad &\text{if $x \in \Big[\oF^{-1}(q^*), [\Lambda_{\pi^*}
        ]^{-1}(p^*) \Big)$} 
        \\ 
        \Lambda_{\pi^*}(x) \wedge \frac{p^*q^*}{\oF(x)}\quad &\text{if $x \geq \oF^{-1}(q^*) \vee [\Lambda_{\pi^*}]^{-1}(p^*)$.} 
        \end{cases}
    \]
\footnotetext{Here, $[\Lambda_{\pi^*}]^{-1}$ denotes the pseudo-inverse: $[\Lambda_{\pi^*}]^{-1}(x) = \inf\{t\geq 0: \Lambda_{\pi^*}(t) \geq x\}$. Note that the middle region is empty if $[\Lambda_{\pi^*}]^{-1}(p^*) < \oF^{-1}(q^*)$. Which follows from the convention that if $x < x'$ then $[x',x) = [x',+\infty) \cap (-\infty,x) = \emptyset$.}
\end{definition}
The conditions on the target price and quantity $(p^*,q^*)$ are intuitive. Part (i) requires that the monopolist's revenue is weakly larger than under no advertising---if it is violated, $(p^*,q^*)$ can never be implemented. Part (ii) requires that the target price $p^*$ is not merely optimal (guaranteed by the construction of $\pi^*$), but also the lowest optimal price; this ensures that the construction is consistent with the locally-greedy map $\Lambda_{\pi^*}$. We also note that for any admissible price and quantity pair $(p^*,q^*)$, the constrained-greedy plan is pinned down explicitly whenever the designer's objective function is a weighted combination of producer and consumer surplus (since $\Lambda_{\pi}$ does not depend on $\pi$). When the designer's objective function is more complex, the constrained-greedy plan solves a simple two-dimensional fixed point problem that can be efficiently computed.\footnote{For a pair of constants $(\partial_{CS}, \partial_C) \in \Re_+^2$, we can compute the map prescribed in \cref{def: constrained-greedy}, but substituting the constant $\partial_{CS}$ for the partial w.r.t. $CS^P$ and $\partial_C$ for the partial w.r.t. total costs $C$. This also pins down the partial w.r.t. $PS$ since $(p^*,q^*)$ is fixed. Then, the plan delivers a pair $(CS^P,C)$ of consumer surplus and total costs. Plug that into our objective $\phi$ and evaluate the partials---if it coincides with the constants $(\partial_{CS}, \partial_C)$ we are done; otherwise iterate.} 

Constrained-greedy plans split consumers by quantiles as shown in \cref{fig:constrained_greedy}:

\begin{figure}[H]
\begin{minipage}[t]{0.48\linewidth}
\begin{itemize}[leftmargin = 2em,nosep]
    \item[(i)] \textbf{Excluded.} Consumers with valuations $x < \oF^{-1}(q^*)$ are excluded. 
    \item[(ii)] \textbf{Transported to atom at $p^*$.} Consumers with intermediate valuations are transported to $p^*$ to shore up demand to $q^*$.
    \item[(iii)] \textbf{Transported to lower envelope.} Consumers with sufficiently high valuations---such that the locally-greedy map prescribes positive shifts---are transported to the lower-envelope of the locally-greedy map and the unit-elastic demand curve. 
\end{itemize}
\end{minipage}%
\hfill%
\begin{minipage}[t]{0.5\textwidth}\vspace{0pt}
\vspace{-1em}
\centering
{\includegraphics[width=\textwidth]{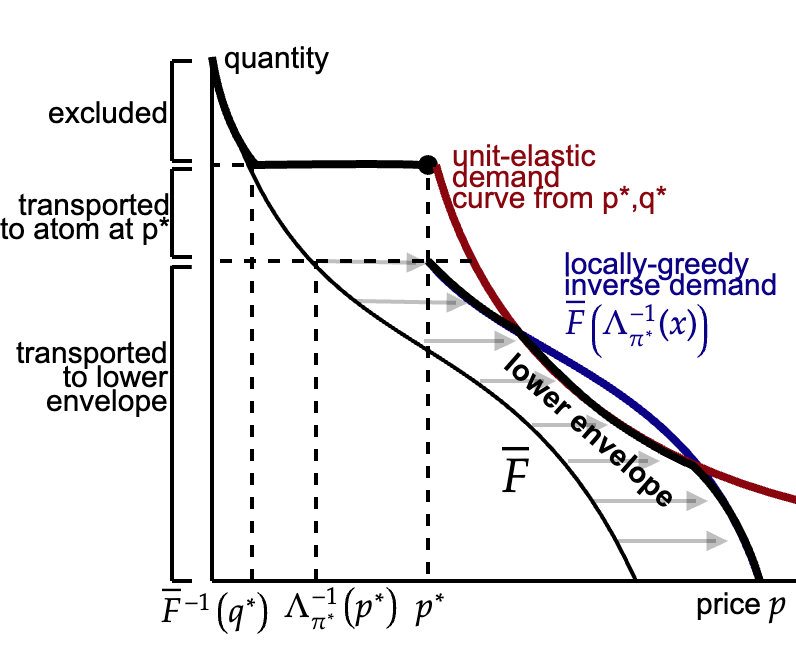}}
    \caption{Constrained-greedy plans}\label{fig:constrained_greedy}
\end{minipage}
\end{figure}

\begin{theorem}\label{thrm:expost_general}There exists a constrained-greedy solution to \eqref{eq: designer_expost}. Moreover, every solution to \eqref{eq: designer_expost} is constrained-greedy. 
\end{theorem}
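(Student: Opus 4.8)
The plan is to split the designer's problem into an \emph{outer} choice of the monopolist's induced price--quantity pair $(p^{*},q^{*})$ and an \emph{inner} choice of the cheapest target distribution consistent with those market outcomes, and to characterize the inner solution through a functional first-order condition. The first step is a reduction: $PS$ and $CS^{P}$ depend only on the target distribution $\nu=\marg_{y}\pi$ (and the price), whereas $\int c\,\de\pi$ is, for fixed $\nu$, minimized by the comonotone coupling of $\mu$ and $\nu$ --- this is the standard conclusion of directional optimal transport under strict submodularity of $c$ \citep{nutz2022directional}. Since $\phi$ is strictly decreasing in $C$, every optimal $\pi$ must be this coupling, hence (as $\mu$ is atomless) induced by a nondecreasing Monge map $T$ with $T(x)\ge x$; the designer's choice collapses to choosing such a $T$, equivalently a $\nu\succeq_{\mathrm{FOSD}}\mu$ with finite mean.

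Next, fix an optimal $T^{*}$ inducing $\nu^{*}$, let $p^{*}$ be the monopolist's lowest optimal price under $\nu^{*}$ and $q^{*}\coloneqq\oF_{\nu^{*}}(p^{*})$. Since $\nu^{*}\succeq_{\mathrm{FOSD}}\mu$ we get $p^{*}q^{*}=\max_{p}p\,\oF_{\nu^{*}}(p)\ge p^{M}\oF_{\nu^{*}}(p^{M})\ge p^{M}\oF(p^{M})$, which is condition (i); condition (ii) holds by the choice of $p^{*}$. The monopolist's up- and downward-deviation constraints are jointly equivalent to the \emph{unit-elastic ceiling} $y\,\oF_{\nu^{*}}(y)\le p^{*}q^{*}$ for all $y$; under the comonotone coupling the consumer with initial valuation $x$ sits (generically) at quantile $\oF(x)$ of $\nu^{*}$, so this reads $T^{*}(x)\le p^{*}q^{*}/\oF(x)$. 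For $x<p^{*}$, moving a consumer to any $y\in(x,p^{*})$ produces no purchase, hence no $PS$ or $CS^{P}$, only strictly positive cost; so $T^{*}(x)\in\{x\}\cup[p^{*},\infty)$ there, and any such consumer who is moved up should --- as $c$ is increasing in $y$ on $\mathbb{H}$ --- be moved to exactly $p^{*}$, the cheapest way to add to the mass purchasing at $p^{*}$.

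For $x$ where the ceiling is slack, I would perturb $T^{*}$ among maps that (a) respect the directional constraint, (b) keep $p^{*}$ the lowest optimal price (so $PS$, $CS^{P}$, $C$ vary smoothly near $T^{*}$), and (c) respect the ceiling, and invoke the chain rule for Fr\'echet derivatives (\cref{Lem: expost linearization}) to conclude that $T^{*}(x)$ must pointwise maximize the \emph{local contribution}, i.e.\ $T^{*}(x)=\Lambda_{\pi^{*}}(x)$; where the ceiling binds, $T^{*}(x)=p^{*}q^{*}/\oF(x)$. Combining these gives $T^{*}(x)=\Lambda_{\pi^{*}}(x)\wedge p^{*}q^{*}/\oF(x)$ on the top region, the lower envelope of \cref{def: constrained-greedy}, and since $\Lambda_{\pi^{*}}(x)\in\{x\}\cup[p^{*},\infty)$ for $x<p^{*}$ the bottom splits into an excluded set $\{x<\oF^{-1}(q^{*})\}$ and an atom at $p^{*}$ on $[\oF^{-1}(q^{*}),[\Lambda_{\pi^{*}}]^{-1}(p^{*}))$, the cutoff being pinned down by optimizing over $q^{*}$ (balancing the marginal cost $c(\oF^{-1}(q^{*}),p^{*})\,\lvert\partial\phi/\partial C\rvert$ against the marginal gain $p^{*}\,\partial\phi/\partial PS$). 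Optimizing finally over the two scalars $(p^{*},q^{*})$ over $\{p^{*}q^{*}\ge p^{M}\oF(p^{M})\}$ completes the characterization; running the argument forward shows every solution is constrained-greedy, and a separate existence argument --- using the Inada condition in \cref{main assumption}(ii) to restrict to target distributions of uniformly bounded mean (beyond which transporting mass is dominated), plus routine tightness and semicontinuity, with the cost term ruling out escaping mass --- yields that a constrained-greedy solution exists.

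The main obstacle is the functional first-order step. First, the perturbations must remain simultaneously feasible and incentive-consistent, and keeping $p^{*}$ the \emph{lowest} optimal price is delicate because the monopolist's best reply is only upper hemicontinuous and the incentive-constraint set is non-convex (costs being convex, the objective is non-convex), so stationarity is only necessary and one must argue it forces the three-region structure rather than some other critical point. Second, the locally-greedy map $\Lambda_{\pi^{*}}$ is self-referential --- its defining partials $\partial\phi/\partial PS$, $\partial\phi/\partial CS^{P}$, $\partial\phi/\partial C$ are evaluated at the unknown optimum --- which turns the characterization into the two-dimensional fixed-point problem noted after \cref{def: constrained-greedy} and must be shown solvable. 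Third, $\Lambda_{\pi^{*}}$ jumps (it equals $x$ for low $x$ and leaps to $\ge p^{*}$ at a cutoff), so the interaction of this discontinuity with the ceiling and with the pseudo-inverse conventions --- including the degenerate case where the middle region collapses --- needs careful bookkeeping.
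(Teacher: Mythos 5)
Your proposal follows essentially the same route as the paper's proof: reduce to comonotone (nondecreasing Monge) maps via strict submodularity, decompose into an outer choice of the price--quantity pair and an inner transport problem over plans for which $p^*$ remains optimal, convert the monopolist's optimality into the pointwise unit-elastic ceiling $T(x)\le p^*q^*/\oF(x)$, and apply the Fr\'echet-derivative chain rule to obtain the locally-greedy pointwise characterization and hence the lower envelope --- this is precisely the paper's sequence \cref{lem: existence}, \cref{Lem: eta property}, \cref{Lem: expost linearization}, \cref{Lem: expost deterministic}, \cref{Lem: local soln form}, and you correctly isolate the same technical pressure points (self-referentiality of $\Lambda_{\pi^*}$, non-convexity, existence via tightness).

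One local slip is worth fixing: your claim that every consumer with $x<p^*$ who is advertised to ``should be moved to exactly $p^*$, the cheapest way to add to the mass purchasing at $p^*$'' is false under the ex-post measure, because the term $(y-p^*)\,\tfrac{\partial\phi}{\partial CS^P}$ can make it strictly profitable to push such a consumer beyond $p^*$; in \cref{eg:expost_uniform_eg} the types $x\in[p^*-\tfrac{1}{2a},p^*)$ have initial valuation below $p^*$ yet are transported strictly above it. That ``cheapest way to hit the quantile'' logic is the ex-ante argument of \cref{thrm:exante_general}; in the ex-post problem only the types in $[\oF^{-1}(q^*),[\Lambda_{\pi^*}]^{-1}(p^*))$ --- those for whom the locally-greedy map prescribes no movement but who are needed to shore up demand to $q^*$ --- land exactly at $p^*$. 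Your subsequent first-order step supersedes the erroneous sentence and recovers the correct three-region structure, so the proof is not broken, but the sentence should be deleted or restricted to that middle region.
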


\cref{thrm:expost_general} pins down the form of optimal advertising plans under ex-post measures of welfare. 
The optimal advertising plan performs a delicate balancing act to satisfy two economically distinct tradeoffs: 
\begin{enumerate}
    \item \emph{Technological.} The advertising technology governs the marginal cost of transporting consumers (raising their valuations) vis-a-vis the marginal benefit of doing so. The {locally-greedy} map $\Lambda_{\pi}$ offers a simple first-order approach to navigate this tradeoff: for each consumer $x$, we simply transport her to the pointwise optimal $y$. But this cannot always be done because this might tempt the monopolist to raise its price as reflected by the next constraint. 
    \item \emph{Incentive compatibility.} The monopolist's incentives are succinctly captured by the unit-elastic demand curve from the target price and quantity ($p^*,q^*$) which specifies an upper-bound on how many consumers can be transported beyond the valuation $p^*$ before the monopolist is tempted to raise its price. 
\end{enumerate}

\cref{thrm:expost_general} makes precise how these tradeoffs are navigated. Relative to the locally-greedy map, the need to satisfy the monopolist's incentives distorts the advertising plan \emph{upwards} for some consumer types, and \emph{downwards} for others. The upward distortion occurs for consumers in the middle quantile transported to the atom at $p^*$ though it is pointwise suboptimal---while the locally-greedy map prescribes that they should not be advertised to at all, this serves to shore up demand which slackens the monopolist's upward deviation constraints. The downward distortion occurs for consumers in the highest quantile transported to the lower-envelope of the locally-greedy map and unit-elastic demand curve whenever the latter is below the former---the locally-greedy map prescribes that valuations should be further increased but this would violate the monopolist's upward deviation constraints. 

The proof of \cref{thrm:expost_general} builds on the simple observation that, for a fixed target distribution, optimal transport plans are comonotone (positively assortative). This allows us to convert the global constraint of deterring \emph{all} upward deviations into a sequence of local constraints on the pointwise optimal advertising quantity for each consumer in the highest quantile. Then, we combine the logic of monopoly pricing (unit-elastic demand curve) with a first-order approach (locally-greedy maps) to establish that constrained-greedy plans---and only those---are optimal. But which constrained greedy plan in particular? The optimal choice of target price and quantity $(p^*,q^*)$ is driven by the following tradeoff.

\begin{figure}[H]
\begin{minipage}[t]{0.5\linewidth}
\textbf{Transportation costs vs. packing gain.} 
\cref{fig:constrained_greedy_tradeoffs} illustrates the core tradeoff. Increasing $q^*$ requires more consumers in the intermediate region to be transported to $p^*$. These consumers would, under the locally-greedy map, not be transported---and represent a loss. But, by inducing a higher quantity, this shifts up the unit-elastic demand curve which allows the designer to `pack' more inframarginal consumers from the highest quantile into the lower envelope as prescribed by \cref{thrm:expost_general}. 

\end{minipage}%
\hfill%
\begin{minipage}[t]{0.48\textwidth}\vspace{-1pt}
\vspace{-1em}
\centering
{\includegraphics[width=0.85\textwidth]{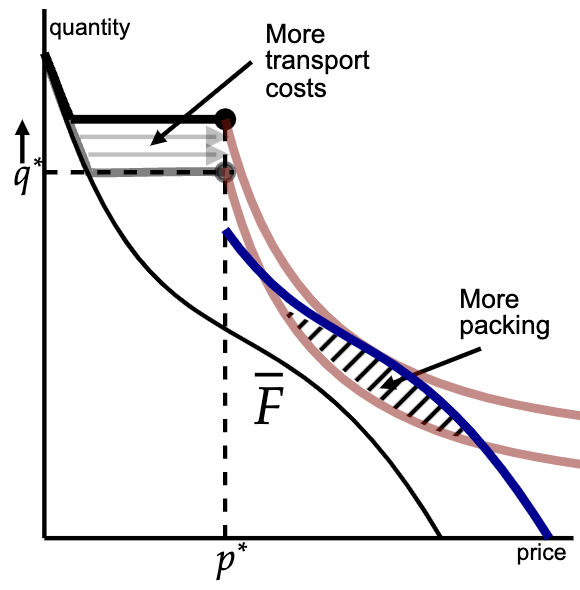}}
    \caption{Increase in `packing volume'}\label{fig:constrained_greedy_tradeoffs}
\end{minipage}
\end{figure}

Finally, we use \cref{thrm:expost_general} to derive the consumer-optimal plans under ex-post welfare from \cref{eg:uniform_eg} in the introduction. 

\begin{example}[Consumer-optimal plans under ex-post welfare] \label{eg:expost_uniform_eg}
    Consumer valuations are initially distributed uniformly on $[0,1]$ and the transport cost is quadratic in distance: $c(x,y) = \frac{a}{2}(x-y)^2$ for some $a > 0$. It is straightforward to compute the locally greedy map given target price $p^*$:\footnote{For any $\pi^*$ with lowest optimal price $p^*$, $\Lambda_{\pi^*}(x) = \text{argmax}_{y \geq x} \mathbb{I}(y \geq p^*)\cdot (y - p^*) - \frac{a}{2}(y-x)^2$ which depends only on $p^*$.} 
\[
    \Lambda_{\pi^*}(x) = 
    \begin{cases}
        x + \frac{1}{a} \quad &\text{if $x \geq p^* - \frac{1}{2a}$} \\
        x &\text{otherwise} \\
    \end{cases}
    \]

and the associated demand curve induced by this greedy map is depicted in Panel (a) of \cref{fig:expost_example_full} where we set $a = 4$ so this coincides with \cref{eg:uniform_eg} in the introduction. For each target price $p^*$, \cref{thrm:expost_general} pins down the value of the optimal advertising plan which implements $p^*$. Then, optimizing over $p^*$, the optimal target price and quantity are  $p^* = 1/4$, $q^* = 1$. 

\begin{figure}[h]
\centering
\caption{Construction $CS^P$-optimal plans in \cref{eg:expost_uniform_eg} $(a = 4)$}
    \subfloat[Locally-greedy (\cref{eg:expost_uniform_eg})]{\includegraphics[width=0.49\textwidth]{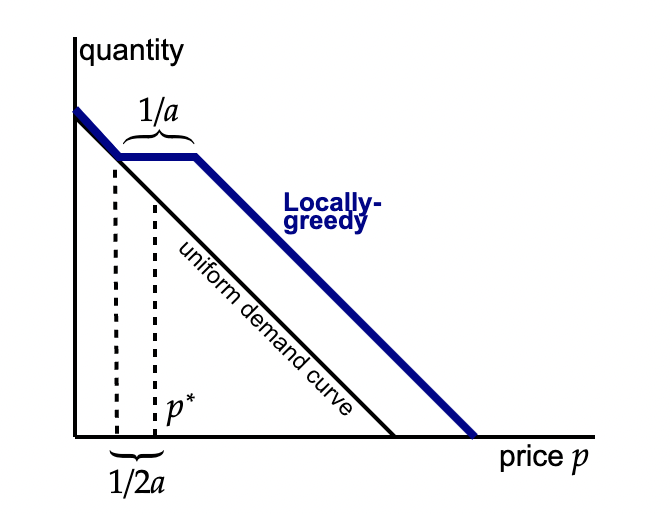}}
    \subfloat[Constrained-greedy (\cref{eg:expost_uniform_eg})]{\includegraphics[width=0.49\textwidth]{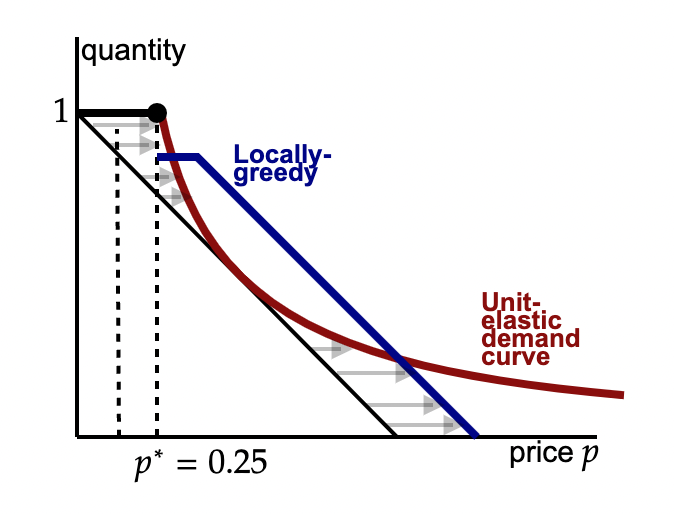}}
    \label{fig:expost_example_full}
\end{figure}

Observe that under the locally-greedy map with price $p^* = 1/4$, it is pointwise suboptimal to advertise to consumers with valuations $x \in [0, 1/8)$. Nonetheless, the constrained greedy plan transports these consumers to an atom at $p^*$. This is driven by two distinct forces. Similar to the ex-ante case, this allows the monopolist to \emph{sustain a lower price} which benefits inframarginal consumers. Different from the ex-ante case, this also \emph{shifts out the lower envelope} which allows more inframarginal consumers to be transported---and made better off. $\hfill \diamondsuit$
\end{example}

\section{Comparing and quantifying prices and welfare.}\label{sec:comparison}
Theorems \ref{thrm:exante_general} and \ref{thrm:expost_general} pin down the general form of optimal advertising plans. We now use this to quantify the power of flexible demand manipulation. 

\textbf{Benchmark: uniform advertising.} A helpful benchmark to isolate the impact of flexibility is to compare the designer's choice of $\Pi$ (all advertising plans) to a more restrictive class---the set of \emph{uniform} advertising plans: 
\begin{align*}
    \pi^d &:= \Big\{\pi \in \Pi: \underbrace{\int d\pi(x,x+d) = 1}_{\substack{\text{Each consumer's valuation} \\ \text{increases by $d$}}} \Big\} \quad \text{and} \quad \Pi^{U} := \bigcup_{d \geq 0} \pi^d.  
\end{align*}

Uniform advertising plans are in the same spirit as the persuasive advertising studied by \cite{dixit1978advertising} who consider a setting where valuations are uniformly increased by the same proportion though it is qualitatively similar.\footnote{See Online Appendix \ref{appendix:dixitnorma_comparison} where we develop a comparison between uniform additive and uniform proportional shifts.} We will also denote standard monopoly pricing case without advertising as the null advertising plan $\pi^0$. Finally, for this section we will assume that the cost function is a strictly convex function of the transportation distance: $c(x,y) = c_d(y-x)$, where $c_d:\Re_+\to\Re_+$ is strictly increasing, convex, $c_d(0) = c_d'(0) = 0$. We also assume, just for this section, that the initial distribution $F$ has an increasing hazard rate.

\textbf{Quantifying the impact of targeted vs uniform advertisements.} Figures \ref{fig:comparison_beta} and \ref{fig:comparison_exp} illustrate prices (panel (a)), producer surplus (b), and ex-ante consumer surplus (c) under different advertising plans. 

\begin{figure}[h!]
    \centering
      \caption{Comparison under beta distribution} 
    \begin{quote}
    \vspace{-1em}
    \centering 
    \footnotesize Note: $c_d(y-x) = (y-x)^2$, $F$ is the beta$(\alpha,\beta)$ distribution with $\beta = 2$ and varying $\alpha$
    \end{quote} 
    \vspace{-1em}
    \subfloat[Prices]{\includegraphics[width=0.33\textwidth]{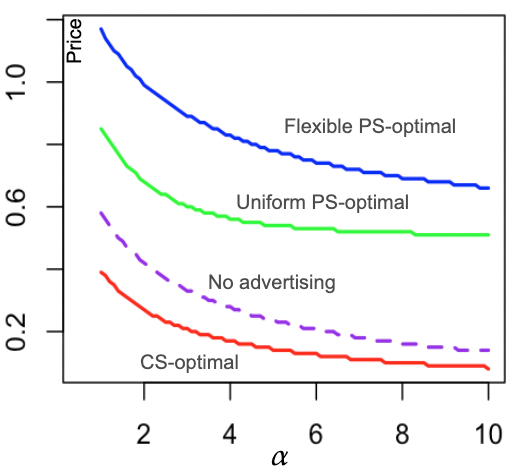}} 
    \subfloat[Producer Surplus]{\includegraphics[width=0.33\textwidth]{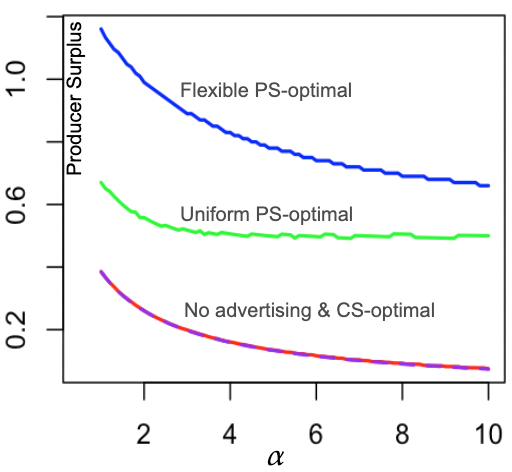}}     
    \subfloat[Consumer Surplus]{\includegraphics[width=0.33\textwidth]{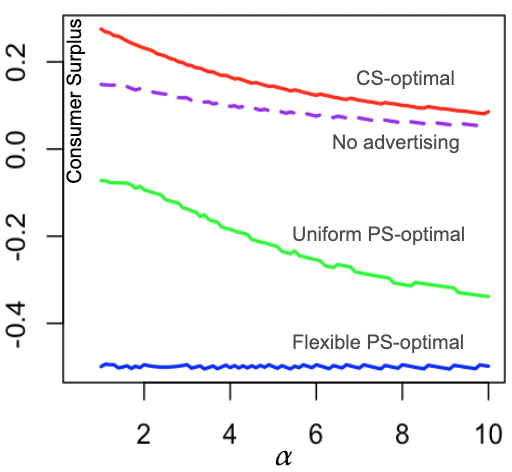}} 
    \label{fig:comparison_beta}
\end{figure}

The \textcolor{blue!70!black}{blue line} corresponds to the producer-optimal advertising plan when the designer has the ability to manipulate demand flexibly---this is pinned down by \cref{prop: produceroptimal}. The \textcolor{red!70!black}{red line} corresponds to the consumer-optimal advertising plan when flexible demand manipulation is possible---this is pinned down by \cref{prop: exanteconsumeroptimal}. The \textcolor{green!85!black}{green line}  corresponds to the producer-optimal plan when the designer is restricted to the set of uniform advertising plans. This is similar to the analysis of \cite{dixit1978advertising}. The dotted \textcolor{violet!95!black}{purple line}  corresponds to no advertising which recovers the standard monopoly pricing environment.

\begin{figure}[h!]
    \centering
      \caption{Comparison under exponential distribution} 
    \begin{quote}
    \vspace{-1em}
    \centering 
    \footnotesize Note: $c_d(y-x) = (y-x)^2$, $F$ is the exponential distribution with varying rate $\lambda$
    \end{quote} 
    \vspace{-1em}
    \subfloat[Prices]{\includegraphics[width=0.33\textwidth]{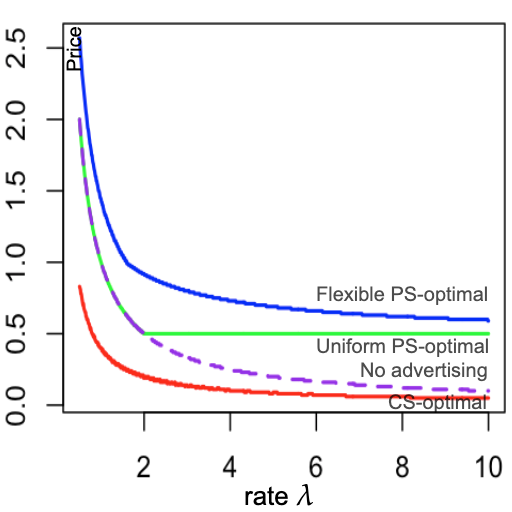}} 
    \subfloat[Producer Surplus]{\includegraphics[width=0.33\textwidth]{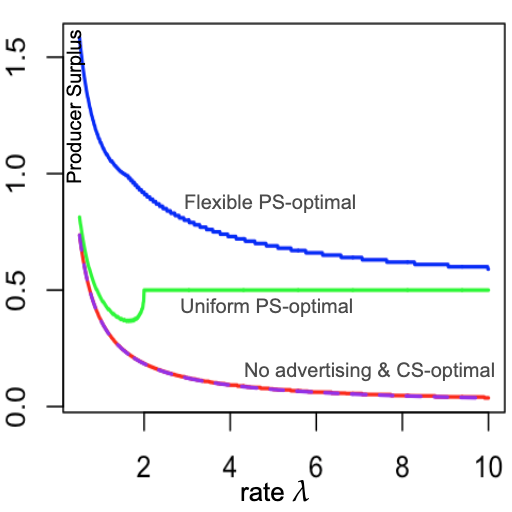}}     
    \subfloat[Consumer Surplus]{\includegraphics[width=0.33\textwidth]{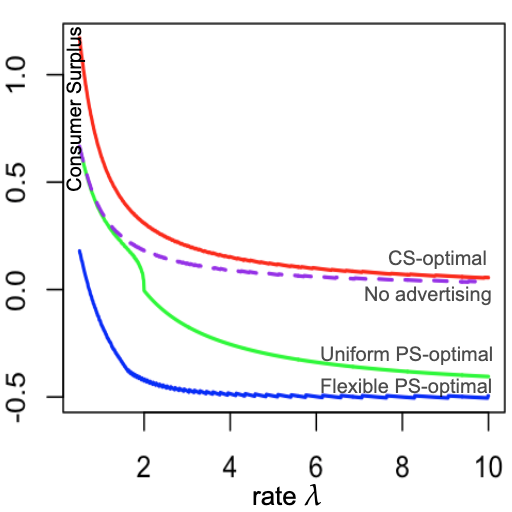}} 
    \label{fig:comparison_exp}
\end{figure}

A durable takeaway is that (i) flexible demand manipulation is substantially more powerful than uniform advertising, and can be used to skew the surplus split. When the designer's objective is to maximize producer surplus, the ability to flexibly shape valuations roughly doubles the increase in profits from advertising---to the detriment of consumers. This refines the economic message of \cite{dixit1978advertising} and more recent work by \cite{acemoglu2024online} who raise concerns that uniform advertising can hurt consumers by raising prices. Our analysis shows that targeted advertising can substantially magnify this force, as well as makes precise how this is done. But our results also offer a richer and more nuanced picture---the flexibility afforded by targeting can always be used to \emph{push down} prices by tempting the monopolist to capture the extensive margin. We now supplement these numerical results with a series of comparative statics. 

\begin{proposition}[Comparison under producer-optimal plans] \label{prop:produceroptimal_vs_uniform}
If the density $f$ is decreasing, 
    \begin{itemize}[nosep]
        \item[(i)] \textbf{Prices increase:} 
        \[p^M \leq \underbrace{p^U}_{\substack{\text{PS-optimal} \\ \text{uniform} \\ \text{advertising}}} < \underbrace{p^*}_{\substack{\text{PS-optimal} \\ \text{flexible} \\ \text{advertising}}}\]
        \item[(ii)] \textbf{Consumer welfare decrease:} 
        \begin{alignat*}{2}
            &\text{Ex-ante: if $q^* \geq q^U$,}\quad  &&CS^A(\pi^*_{PS}) < CS^A(\pi^U_{PS}) < CS^A(\pi^0)
            \\
            &\text{Ex-post:} && CS^P(\pi^*_{PS}) <CS^P(\pi^U_{PS}) \quad \text{and} \quad CS^P(\pi^*_{PS}) < CS^P(\pi^0)
        \end{alignat*}
    \end{itemize}
    where ${\pi^*_{PS}}$ is a producer-optimal plan under full flexibility and $q^*$ is the corresponding optimal quantity; ${\pi^U_{PS}}$ is a producer-optimal plan under uniform advertising and $q^U$ is the corresponding quantity.
\end{proposition}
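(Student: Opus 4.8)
The plan is to establish the price ordering in (i) first; part (ii) then follows by evaluating the three plans' welfare in closed form. \textbf{Step 1 ($p^M\le p^U$, with auxiliary facts).} I would rewrite the uniform problem in quantity space: substituting $z=p-d$ and maximizing out $d$ turns $\sup_{p,d}\{p\oF(p-d)-c_d(d)\}$ into $\max_q\{\oF^{-1}(q)\,q+\chi(q)\}$, where $\chi(q):=\sup_{d\ge0}\{dq-c_d(d)\}$ is the convex conjugate of $c_d$ (increasing, convex, with $\chi'(q)=(c_d')^{-1}(q)>0$ for $q>0$ by the Inada conditions). The quantity-space marginal revenue $R(q):=\oF^{-1}(q)-1/h(\oF^{-1}(q))$, with $h$ the hazard rate, is strictly decreasing under the increasing-hazard-rate assumption; since $R(q^M)=0$ while the uniform first-order condition gives $R(q^U)=-\chi'(q^U)=-d^U<0$ (with $d^U>0$ because $c_d'(0)=0$), I get $q^U>q^M$, hence the uniform threshold $z^U:=\oF^{-1}(q^U)<\oF^{-1}(q^M)=p^M$. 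The uniform price first-order condition reads $\oF(z^U)=p^Uf(z^U)$, i.e.\ $p^U=1/h(z^U)$, so $z^U<p^M$ and IHR give $p^U=1/h(z^U)\ge1/h(p^M)=p^M$. I would also record that the uniform value is strictly positive, whence $c_d(d^U)<p^U\oF(z^U)\le p^U$; the same argument at any price $p$ in the support gives $c_d(d^U(p))<p$, where $d^U(p)$ is the cost-optimal uniform shift at price $p$.

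\textbf{Step 2 ($p^U<p^*$).} By \cref{prop: produceroptimal}, $p^*$ maximizes $V^F(p)=\max_{a\le p}\{p\oF(a)-\int_a^p c_d(p-x)\,\de F(x)\}$, with maximizer $a=\underline p(p)=p-c_d^{-1}(p)$, so by the envelope theorem $V^{F\prime}(p)=\oF(\underline p(p))-\int_{\underline p(p)}^p c_d'(p-x)\,\de F(x)$. The key estimate: substituting $u=p-x$ and using $f$ decreasing together with $p-u\ge\underline p(p)$ on the range, $\int_{\underline p(p)}^p c_d'(p-x)\,\de F(x)=\int_0^{c_d^{-1}(p)}c_d'(u)f(p-u)\,\de u\le f(\underline p(p))\,c_d(c_d^{-1}(p))=p\,f(\underline p(p))$, strictly when $f$ is strictly decreasing; hence $V^{F\prime}(p)\ge f(\underline p(p))\bigl(1/h(\underline p(p))-p\bigr)$. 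From Step 1, $c_d(d^U(p))<p$ gives $\underline p(p)=p-c_d^{-1}(p)<p-d^U(p)=z(p)$, the cost-optimal uniform threshold at $p$; and wherever $V^{U\prime}(p)>0$ the uniform envelope logic gives $z(p)<h^{-1}(1/p)$, so $\underline p(p)<h^{-1}(1/p)$, i.e.\ $h(\underline p(p))\le 1/p$, making $V^{F\prime}(p)>0$. Thus $V^{F\prime}>0$ on the whole set $\{V^{U\prime}>0\}$, which contains $(0,p^U)$, and also at $p^U$ itself; since $V^F$ is continuous, attains its maximum at an interior critical point (it equals $p^M\oF(p^M)>0$ at $p^M$ but tends to $0$ as $p\to\infty$), and has no critical point on $(0,p^U]$, we get $p^*>p^U$.

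\textbf{Step 3 (part (ii)).} Here I would just evaluate welfare using the known structure of the three plans. No advertising: $CS^A(\pi^0)=CS^P(\pi^0)=\int_{p^M}^{\infty}\oF(y)\,\de y$. Uniform producer-optimal: buyers are $\{x\ge z^U\}$ and the target survival function is $\oF(\cdot-d^U)$, so $CS^A(\pi^U_{PS})=\int_{z^U}^{\infty}(x-p^U)\,\de F(x)$ and $CS^P(\pi^U_{PS})=\int_{z^U}^{\infty}\oF(y)\,\de y$. Flexible producer-optimal (intermediate interval, threshold $\underline p(p^*)$, price $p^*$): buyers are $\{x\ge\underline p(p^*)\}$ and the target survival equals $\oF$ above $p^*$, so $CS^A(\pi^*_{PS})=\int_{\underline p(p^*)}^{\infty}(x-p^*)\,\de F(x)$ and $CS^P(\pi^*_{PS})=\int_{p^*}^{\infty}\oF(y)\,\de y$. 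The ex-post inequalities are then immediate from $z^U<p^U<p^*$ and $p^M<p^*$ (monotonicity of $r\mapsto\int_r^\infty\oF$). For the ex-ante chain, $CS^A(\pi^U_{PS})\le\int_{z^U}^{\infty}(x-p^M)\,\de F(x)=CS^A(\pi^0)+\int_{z^U}^{p^M}(x-p^M)\,\de F(x)<CS^A(\pi^0)$ using $p^U\ge p^M$ and $z^U<p^M$; and since $q^*\ge q^U$ is exactly $\underline p(p^*)\le z^U$, $CS^A(\pi^*_{PS})=\int_{\underline p(p^*)}^{z^U}(x-p^*)\,\de F(x)+\int_{z^U}^{\infty}(x-p^*)\,\de F(x)\le 0+\int_{z^U}^{\infty}(x-p^U)\,\de F(x)=CS^A(\pi^U_{PS})$, strict because $p^*>p^U$.

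\textbf{Main obstacle.} The delicate point is the last move in Step 2: deducing $p^*>p^U$ from $V^{F\prime}(p^U)>0$ requires that $V^F$ has no critical point on $(0,p^U]$, which I reduced to $V^{U\prime}>0$ on $(0,p^U)$, i.e.\ single-peakedness of the uniform value function. I would establish this separately from strict convexity of $c_d$ and IHR by showing the relevant derivative is single-crossing; this — rather than the explicit monotonicity comparisons above — is where I expect the real work to lie, and it is where the full strength of ``$f$ decreasing'' (beyond IHR) is used.
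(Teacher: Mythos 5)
Your Step 1 and Step 3 are essentially correct and largely coincide with the paper's argument: $p^M\le p^U$ follows from combining the two uniform first-order conditions with the increasing hazard rate (your quantity-space/convex-conjugate reformulation is a harmless repackaging of adding \eqref{eq: U-FOC-u} and \eqref{eq: U-FOC-l}), and part (ii) is, in both treatments, a direct computation of the three surpluses from the known shapes of the plans once the price ordering is in hand.

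The genuine gap is exactly where you flag it, in Step 2, and it is not a technicality you should expect to patch under the stated hypotheses: your route to $p^U<p^*$ needs $V^{U\prime}>0$ on all of $(0,p^U)$, i.e.\ that the uniform value function has no critical point below its global maximizer. This is not implied by strict convexity of $c_d$ plus IHR plus $f$ decreasing. The inner problem $\max_{d\ge 0}\,p\oF(p-d)-c_d(d)$ has derivative $pf(p-d)-c_d'(d)$ in $d$, and when $f$ is decreasing \emph{both} terms are increasing in $d$, so the selection $d^U(p)$ need not be unique or continuous and $V^U$ can have several local maxima; reducing the obstacle to ``single-crossing of the relevant derivative'' is therefore optimistic. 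The paper sidesteps this entirely by never comparing value functions globally: it compares first-order conditions at the two optima. From \eqref{eq: FOC-u} and strict convexity of $c_d$, $\oF(\underline p)=\int_{\underline p}^{p^*}c_d'(p^*-x)\de F(x)<c_d'(p^*-\underline p)\oF(\underline p)$ forces $c_d'(p^*-\underline p)>1$, while \eqref{eq: U-FOC-u} gives $c_d'(p^U-\underline p^U)=\oF(\underline p^U)\le 1$, hence $p^*-\underline p>p^U-\underline p^U$. If $\underline p\ge\underline p^U$ the price ranking is immediate; if $\underline p<\underline p^U$, the uniform FOCs give $p^U\le\oF(\underline p^U)/f(\underline p^U)\le\oF(\underline p)/f(\underline p)$ by IHR, and then \eqref{eq: FOC-u}, $f$ decreasing, and \eqref{eq: FOC-l} give $\oF(\underline p)/f(\underline p)\le c_d(p^*-\underline p)\le p^*$, with strictness tracked by whether $f$ is locally constant. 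Your ``key estimate'' is the same inequality as the paper's \eqref{eq: PS comp}, but the paper needs it only \emph{at} the optimum, whereas you need it globally together with the unproven single-peakedness lemma. To close your argument, replace Step 2 with this local comparison. A minor further point: at boundary solutions $\underline p=\min X$ the lower first-order conditions hold only as inequalities, which your formula $\underline p(p)=p-c_d^{-1}(p)$ ignores.
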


\cref{prop:produceroptimal_vs_uniform} states that under additional assumptions on the distribution of initial valuations, the additional flexibility from targeted advertising tends to raise prices and hurt consumer welfare relative to uniform advertising.\footnote{Note that although \cref{prop:produceroptimal_vs_uniform} is stated for distributions with decreasing densities, we have found the ordering of prices to hold for canonical distributions---for instance, \cref{fig:comparison_beta} depicts prices and welfare under the beta distribution which has a non-monotone density.} Indeed, as the cost of advertising grows, the \emph{relative} increase in prices under targeted advertising vis-a-vis uniform advertising also grows unboundedly: if manipulation costs are proportional to the quadratic distance i.e., $c_d(y-x) = a\cdot (y-x)^2$ then 
    \vspace{-1em}
    \[
    \vspace{-0.5em}
    \lim_{a\to +\infty} \cfrac{p^* - p^M}{p^U - p^M} = +\infty.
    \]
This highlights how targeting is, on the margin, `infinitely' more cost-efficient than uniform advertising, and this drives the difference in relative price increases.

\begin{proposition}[Comparison under ex-ante consumer-optimal plans] 
\label{prop:consumeroptimal_vs_uniform} \leavevmode
    \begin{itemize}[nosep]
        \item[(i)] \textbf{Prices decrease:} 
        \[\underbrace{p^U}_{\substack{\text{$CS^A$-optimal} \\ \text{uniform} \\ \text{advertising}}} = p^M > \underbrace{p^*}_{\substack{\text{$CS^A$-optimal} \\ \text{flexible} \\ \text{advertising}}}\]
        \item[(ii)] \textbf{Consumer welfare increase:} $CS^A(\pi^*_{CS^A}) > CS^A(\pi^U_{CS^A}) =  CS^A(\pi^0)$,
    \end{itemize}
    where ${\pi^*_{CS^A}}$ is an ex-ante CS-optimal plan under flexible advertising, and $\pi^U_{CS^A}$ is an ex-ante CS-optimal plan under uniform advertising.
\end{proposition}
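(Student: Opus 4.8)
The plan is to treat the uniform and the flexible problems separately, obtaining the two equalities along the way and then the two strict comparisons. For the uniform side I would first show that the ex-ante consumer-optimal uniform plan is the null plan $\pi^0$, which simultaneously delivers $p^U = p^M$ and $CS^A(\pi^U_{CS^A}) = CS^A(\pi^0)$. Fix a uniform shift $d \ge 0$ and let $p^U(d)$ denote the monopolist's price against $F(\cdot - d)$. Writing the monopolist's problem in the shifted coordinate $q = p - d$---i.e.\ maximizing $(q+d)\oF(q)$---the increasing-hazard-rate assumption maintained in this section makes the log-derivative $\tfrac{1}{q+d} - h(q)$ strictly decreasing in $q$, so the maximizer $q^{\star}(d)$ is unique and decreasing in $d$, the induced price $p^U(d) = q^{\star}(d) + d$ is increasing in $d$, and in particular $p^U(d) \ge p^M$ and $q^{\star}(d) = p^U(d) - d \le p^M$. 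Ex-ante consumer surplus under $\pi^d$ is then $\int_{q^{\star}(d)}^{\infty}(x - p^U(d))\,dF(x) = CS^A(\pi^0) - (p^U(d)-p^M)\,\oF(p^M) + \int_{q^{\star}(d)}^{p^M}(x - p^U(d))\,dF(x)$, where the middle term is $\le 0$ and the last term is $\le 0$ (on $[q^{\star}(d),p^M]$ we have $x \le p^M \le p^U(d)$). Subtracting the strictly positive advertising cost $c_d(d)$ for $d > 0$ shows the uniform objective is strictly below $CS^A(\pi^0)$ away from $d = 0$, so $d = 0$ is the unique uniform optimum.

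On the flexible side, $p^* < p^M$ is exactly the final assertion of \cref{prop: exanteconsumeroptimal}, which together with the previous paragraph gives part (i). For part (ii) it remains to prove $CS^A(\pi^*_{CS^A}) > CS^A(\pi^0)$. Since $\pi^0$ is feasible and costless, and since $\pi^*_{CS^A}$ maximizes $CS^A - \int c\,d\pi$ with $c \ge 0$, it suffices to produce a single feasible plan whose value of $CS^A - \int c\,d\pi$ strictly exceeds $CS^A(\pi^0)$. I would take, for small $\e > 0$, the intermediate-interval plan with target price $p^*_\e = p^M - \e$ and lower threshold $\underline p(\e)$ pinned down by the binding upward-deviation constraint $p^*_\e\,\oF(\underline p(\e)) = p^M\,\oF(p^M)$. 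By the discussion preceding \cref{prop: exanteconsumeroptimal} and the tie-breaking convention this plan induces price $p^*_\e$, and strict single-peakedness of $p\oF(p)$ under the increasing hazard rate (using $p^M\oF(p^M) > p^*_\e\oF(p^*_\e)$) gives $\underline p(\e) < p^*_\e$. Its value minus $CS^A(\pi^0)$ is $\e\,\oF(p^M) + \int_{\underline p(\e)}^{p^M}(x - p^*_\e)\,dF(x) - \int_{\underline p(\e)}^{p^*_\e} c_d(p^*_\e - x)\,dF(x)$.

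The crux is that the price-reduction gain $\e\,\oF(p^M)$, which is first order in $\e$ and strictly positive, dominates the other two terms. Because $p^M$ is an interior maximizer of $p\oF(p)$ (guaranteed by $p^M > \min X$, finiteness of the mean, and $f > 0$ in the interior of $X$), its first-order condition $\oF(p^M) = p^M f(p^M)$ forces $\underline p(\e) = p^M - \e + O(\e^2)$, so the interval $[\underline p(\e), p^*_\e)$ has length of order $\e^2$. Hence $\int_{\underline p(\e)}^{p^M}(x - p^*_\e)\,dF(x) = O(\e^2)$ (an integrand of size $O(\e)$ over an interval of length $O(\e)$), and since $c_d(0) = c_d'(0) = 0$ forces $c_d(t) = O(t^2)$ the cost term is of order $\e^2 \cdot \e^4 = \e^6$. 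Thus the value difference equals $\e\,\oF(p^M)\bigl(1 + o(1)\bigr) > 0$ for $\e$ small, exhibiting the required strictly-improving plan and completing part (ii).

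I expect the last step to be the main obstacle: one has to verify that $\underline p(\e)$ collapses toward $p^*_\e$ fast enough---which is precisely what the interior first-order condition at $p^M$ buys---so that both the manipulation cost and the negative ex-ante surplus of the newly-manipulated consumers in $[\underline p(\e), p^*_\e)$ are genuinely of higher order than the first-order gain from the price reduction. Everything else reduces to facts already in hand: single-peakedness and monotone comparative statics of monopoly pricing under an increasing hazard rate, and \cref{prop: exanteconsumeroptimal}.
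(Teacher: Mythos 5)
Your proof is correct, and while your treatment of the uniform benchmark matches the paper's (the paper likewise evaluates the derivative of $p\mapsto p\oF(p-d)$ at $p^M$, uses the increasing hazard rate to conclude the price rises for any $d>0$, and infers that $\pi^0$ is uniform-optimal — you merely make the quasi-concavity and the resulting $CS^A$ decomposition explicit), your argument for $CS^A(\pi^*_{CS^A})>CS^A(\pi^0)$ is a genuinely different route. The paper evaluates consumer surplus \emph{at the optimum} using the full characterization in \cref{prop: exanteconsumeroptimal}: since the binding constraint gives $p^*\oF(\underline p)=p^M\oF(p^M)$, the revenue terms cancel and $CS^A(\pi^*_{CS^A})=\int_{\underline p}^{\infty}x\,\de F - p^M\oF(p^M)>\int_{p^M}^{\infty}x\,\de F - p^M\oF(p^M)=CS^A(\pi^0)$ in one line, because $\underline p<p^M$. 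You instead exhibit an explicit $\e$-perturbation and show the first-order price gain $\e\,\oF(p^M)$ dominates the higher-order manipulation and cost terms. Your route is valid (the reduction from the net objective to gross $CS^A$ via $c\ge 0$ is fine, and the plan does implement $p^*_\e$ as the unique optimal price), and it has the virtue of needing only feasibility of one plan rather than the optimality characterization, while making the envelope-style economics transparent; the cost is that it leans on additional regularity — the interior first-order condition $\oF(p^M)=p^M f(p^M)$ and a Taylor expansion of $\oF$ at $p^M$ — to get $\underline p(\e)=p^*_\e+O(\e^2)$. That sharp rate is actually more than you need: since $\oF(\underline p(\e))-\oF(p^M)=\oF(p^M)\e/(p^M-\e)=O(\e)$, continuity and strict monotonicity of $\oF$ already give $p^*_\e-\underline p(\e)=o(1)$ and $F(p^*_\e)-F(\underline p(\e))=O(\e)$, so both the negative-surplus term and the cost term are $o(\e)$ without invoking the FOC. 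The paper's identity-based argument is shorter and exact, but yours is a legitimate and self-contained alternative.
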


\cref{prop:consumeroptimal_vs_uniform} states that consumer surplus can generally be strictly improved via flexible demand manipulation. This is in contrast to uniform advertising which is powerless to do so: a designer aiming to maximize consumer surplus subject to the constraint that the advertising map is uniform \emph{cannot} improve consumer surplus over the level under no advertising. This poses a conundrum for regulators: outright bans on targeted advertising i.e., restricting advertising to uniform shifts will generally not safeguard consumer welfare.

\section{Joint Design of Manipulation and Information} \label{sec:joint}
We now augment our model by considering a designer that employs both persuasive and informative advertising. This connects our analysis of flexible persuasive advertising to the extant literature on informative advertising, as well as delivers insights into how these instruments interact. As in our main model, the distribution of consumer valuations under full information is given by $\mu$. Differently, we now allow the designer to \emph{jointly} and \emph{flexibly} shape both preferences and information: 

\begin{figure}[h!]
    \centering
    \vspace{-0.5em}
    {\includegraphics[width=0.85\textwidth]{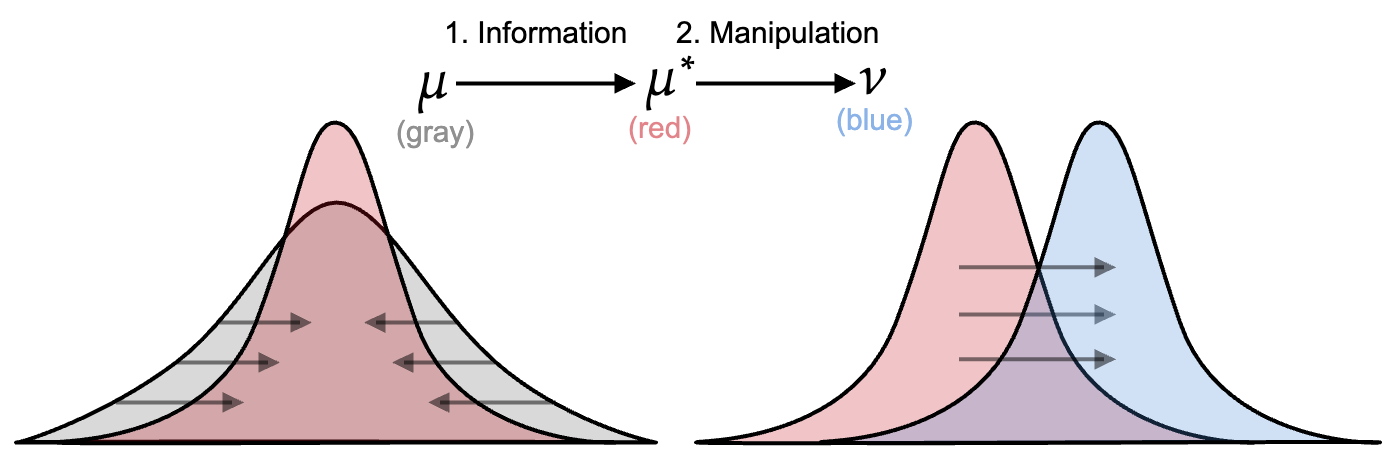}}     
    \vspace{-0.5em}
      \caption{Illustration of a joint plan} 
    \label{fig:joint_plan_illust}
\end{figure}

\begin{enumerate}
    \item Consumers receive private information about their valuation which induces the distribution $\mu^* \preceq_{\text{cx}} \mu$, where $\preceq_{\text{cx}}$ denotes the convex order.\footnote{That is, $\int h(x) \de \mu^*(x) \leq \int h(x) \de \mu(x)$ for all convex functions $h$ on $\Re_+$.} This mean-preserving contraction is illustrated on the left of \cref{fig:joint_plan_illust}. 
    Informative advertising preserves the mean valuation due to Bayesian updating, while the ability to implement any mean-preserving contraction reflects the designer’s flexibility in choosing the information structure.
    \item Consumers are then shown targeted persuasive advertising which induces the final distribution $\nu \succeq_{\text{st}} \mu^*$, where $\succeq_{\text{st}}$ denotes the first-order stochastic dominance order. This shift is illustrated on the right of \cref{fig:joint_plan_illust}. The total cost is  $\int c(x,y) d\pi$ and the individual cost is $c(x,y) = c_d(|y - x|)$, an increasing convex function of distance valuations that are shifted as in \cref{sec:comparison}. Fixing the first-stage choice of $\mu^*$, the second stage is identical to our main model in \cref{sec:model}.\footnote{With the caveat that $\mu^*$ might not be atomless---this does not make any substantive difference. The atomless assumption was made in \cref{sec:model} for expositional clarity since optimal advertising plans are induced by deterministic (Monge) maps. With atoms, all our results hold essentially unchanged, though advertising plans must now be stated directly in terms of $\pi$ where mass may need to be split.}
\end{enumerate}

Our modeling choices reflect economically distinct roles of informative and persuasive advertising. Informative advertising helps consumers learn the \emph{suitability of the product} i.e., where along the distribution their true valuation falls, while persuasive advertising makes consumers \emph{desire the product more}. We can have the former without the latter (e.g., personalized recommendations based on objective characteristics), the latter without the former (e.g., uninformative advertising that induces `hype' or other forms of behavioral manipulation), or mixtures of both \citep{kaldor1950economic}.\footnote{\cite{kaldor1950economic} offers a classic discussion of the distinction, noting the informative-persuasive split is ultimately `one of degree'. The term `hype' is borrowed from \cite{johnson2006simple}.} Importantly, both dimensions of advertising are \emph{targeted and personalized}, reflecting distinctive features of new technologies. 
We are interested in how information and preferences should be jointly delivered to maximize either producer or consumer surplus and, in so doing, offer a unified analysis of different channels for shaping the demand curve that has thus far been analyzed in isolation \citep{johnson2006simple}. 

For the objective $f \in \{PS, CS^A,CS^P\}$, the designer's problem is: 
\begin{align*}
    &\sup_{\substack{\mu^* \preceq_{\text{cx}} \mu 
    \\
    \pi \in \Pi[\mu^*]
    }} 
    f(\nu) - \int c(x,y) d\pi 
\end{align*}
where $\Pi[\mu^*]$ is the set of feasible advertising plans where the first marginal is equal to $\mu^*$.\footnote{We impose identical assumptions to our baseline model: it must continue to obey (i) the directional constraint; (ii) the target distribution must have finite mean. The only difference is that the first marginal is now $\mu^*$ rather than the true distribution of initial valuations $\mu$.} We call the designer's choice of $\mu^*$ (information) and $\pi \in \Pi[\mu^*]$ (manipulation) a \emph{joint plan}. Our main result for this section analyzes the form and value of joint plans: 

\begin{proposition}\label{prop:joint}
\phantom{} 
    \begin{itemize}[nosep]
        \item[(i)] \textbf{Producer-optimal.} The optimal joint plan is to offer no information i.e., $\mu^* = \delta_{\int x \de F}$, and increase all consumers' valuations by $[c_d']^{-1}(1)$. The additional value of manipulation over information is always strictly positive. 
        \item[(ii)] \textbf{Consumer-optimal (ex-ante).} The optimal joint plan is to induce the consumer-optimal information design of \cite{roesler2017buyer} and not manipulate valuations (i.e., the advertising plan induced by the identity map). The additional value of manipulation over information is always zero. 
        \item[(iii)] \textbf{Consumer-optimal (ex-post).} The additional value of manipulation over information is always strictly positive.  
    \end{itemize}
\end{proposition}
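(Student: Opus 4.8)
I would treat the three cases separately. In each, the argument has two halves: (a) establishing the claimed form of the optimal joint plan (or an upper bound that it attains), and (b) the comparison with what information alone can do. Parts (i) and (iii) reduce to clean envelope / first‑order arguments; part (ii) is the delicate one.

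\textbf{Part (i): producer‑optimal.} Here I would use an envelope bound that kills the joint problem in one stroke. For any feasible $(\mu^*,\pi)$ with target $\nu = \marg_y \pi$, producer surplus is $PS = \max_{p\ge 0} p\,\oF_\nu(p) \le \int y\, d\nu = \int y\, d\pi$; since information is mean‑preserving, $\int y\,d\pi = \int x\, d\mu^* + \int (y-x)\, d\pi = \Ex_F[X] + \int (y-x)\,d\pi$. Hence the objective is at most $\Ex_F[X] + \int\big[(y-x) - c_d(y-x)\big]\,d\pi \le \Ex_F[X] + \max_{d\ge 0}\big(d - c_d(d)\big)$, using $y\ge x$ pointwise. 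This bound is attained by $\mu^* = \delta_{\Ex_F[X]}$ (no information) together with the deterministic uniform shift by $d^\star := [c_d']^{-1}(1) = \arg\max_{d\ge 0}(d - c_d(d))$: then $\nu = \delta_{\Ex_F[X]+d^\star}$ and the monopolist's unique optimal price is $\Ex_F[X]+d^\star$, so $PS = \Ex_F[X]+d^\star$ and the cost is $c_d(d^\star)$, giving objective $\Ex_F[X] + d^\star - c_d(d^\star)$. For the strict‑positivity claim: with information alone, $PS \le \int x\, d\mu^* = \Ex_F[X]$ for every $\mu^*$ and equality holds under full pooling, so the info‑only optimum is exactly $\Ex_F[X]$; since $d^\star > 0$ (as $c_d'(0)=0$ and $c_d'$ grows unboundedly) and $d - c_d(d)$ is concave with derivative $1$ at $0$, we get $d^\star - c_d(d^\star) > 0$, so the joint optimum strictly exceeds $\Ex_F[X]$.

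\textbf{Part (ii): ex‑ante consumer‑optimal.} Let $V_R$ denote the value of \cite{roesler2017buyer}'s buyer‑optimal information design. The inequality $(\text{joint}) \ge V_R$ is immediate: take $\mu^*$ to be Roesler--Stegeman's information structure and $\pi$ the identity map, which incurs no cost and yields $CS^A = CS^P = V_R$. For the reverse, fix any $\mu^* \preceq_{\text{cx}} \mu$; by \cref{thrm:exante_general} and \cref{prop: exanteconsumeroptimal} the best attainable second‑stage value is $V(\mu^*)$, corresponding to an intermediate‑interval plan with target price $p^*$ and lower threshold $\underline p$ pinned down by the binding upward deviation to $p^M(\mu^*)$. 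I would first bound $V(\mu^*) \le \Ex_{\mu^*}\big[(X - p^*)^+\big]$ by discarding the (nonnegative) manipulation‑plus‑transport cost and the negative ex‑ante surplus of consumers who are manipulated into buying; by convexity of $x\mapsto (x-p^*)^+$ and $\mu^*\preceq_{\text{cx}}\mu$ this is $\le \Ex_\mu\big[(X-p^*)^+\big]$. The remaining step is to show $\Ex_\mu\big[(X-p^*)^+\big] \le V_R$, which I would do by turning the incentive constraint "$p^*$ is the monopolist's optimal price under the manipulated $\nu$" (together with the fact that $\nu$ agrees with $\mu^*$ above $p^*$) into an explicit mean‑preserving contraction of $\mu$ whose monopoly‑pricing consumer surplus is at least $V(\mu^*)$ — obtained by routing the low‑value mass of $\mu^*$ \emph{downward} in a way that preserves the low‑price revenue making $p^*$ implementable — and then invoking Roesler--Stegeman's characterization of $V_R$ as the maximum of consumer surplus over all contractions of $\mu$.

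\textbf{Part (iii): ex‑post consumer‑optimal.} With no manipulation $CS^P = CS^A$, so the info‑only optimum is again $V_R$. To show the joint optimum strictly exceeds $V_R$, exhibit a strictly improving plan: start from Roesler--Stegeman's $\mu^*_R$ (price $p_R$, value $V_R$) and, holding the monopolist's price at $p_R$, apply a constrained‑greedy manipulation (\cref{thrm:expost_general}) with target price $p_R$. On the region where the demand is exactly unit‑elastic, the locally‑greedy target and the unit‑elastic envelope both coincide with the identity, so nothing changes there; but wherever the monopolist's upward‑deviation constraint at $\mu^*_R$ is slack, the locally‑greedy map prescribes a strictly positive shift ($\Lambda(x) = x + [c_d']^{-1}(1) > x$ for buyers), and since the Inada condition $c_y(x,x)=0$ makes the marginal cost of this shift second order while the marginal gain to $CS^P$ is first order, the objective strictly increases while $p_R$ remains optimal. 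If $\mu^*_R$ leaves no such slack (purely unit‑elastic with a top atom), an infinitesimal coarsening of the information at the top creates the needed slack at second‑order cost to $V_R$ but enables a first‑order manipulation gain.

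\textbf{Main obstacle.} The producer and ex‑post cases are essentially envelope/first‑order arguments. The binding difficulty is the upper bound in Part (ii): proving that optimally manipulating an already‑contracted distribution cannot beat the pure‑information benchmark. This forces one to combine \cref{prop: exanteconsumeroptimal}, the convex‑order relation $\mu^*\preceq_{\text{cx}}\mu$, and Roesler--Stegeman's characterization of $V_R$, and the crux is constructing the right mean‑preserving contraction of $\mu$ that dominates the manipulated outcome — the naive candidate (flooring $\mu^*$ at $p^*$) raises the mean and therefore fails the convex‑order test, so the low‑value mass must be routed downward rather than upward while still preserving the low‑price revenue that makes the target price implementable.
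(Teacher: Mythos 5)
Your parts (i) and (iii) track the paper. For (i) you use exactly the paper's envelope argument: $PS \le \int y \,\de\nu$, mean preservation of the information stage, and the pointwise bound $(y-x)-c_d(y-x)\le d^\star-c_d(d^\star)$ with $d^\star=[c_d']^{-1}(1)$, attained by full pooling plus a uniform shift; your explicit justification of strict positivity (information alone caps $PS$ at $\Ex_F[X]$, while $d^\star-c_d(d^\star)>0$) is a useful addition, as the paper only asserts it. For (iii) the paper offers only the informal argument of spreading the top atom of the Roesler--Szentes distribution along the lower envelope of the unit-elastic curve and the locally greedy map; your first-order-gain/second-order-cost perturbation of the top atom is the same idea, and your fallback of coarsening information at the top is unnecessary (deviations to prices above the top of the support are already slack because of the atom) but harmless.

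Part (ii) has a genuine gap, and it sits exactly where you flag the ``crux.'' The chain $V(\mu^*)\le \Ex_{\mu^*}[(X-p^*)^+]\le \Ex_{\mu}[(X-p^*)^+]\le V_R$ cannot close, because the last inequality is false for admissible $p^*$. Take $\mu$ uniform on $[0,1]$ and $\mu^*$ equal to the Roesler--Szentes distribution itself (unit-elastic on $[p^{RS},1)$ with an atom at $1$, $p^{RS}\approx 0.187$): the second stage is then optimally the identity with $p^*=p^{RS}$, yet $\Ex_{\mu}[(X-p^{RS})^+]=(1-p^{RS})^2/2\approx 0.331$, which exceeds $V_R=\Ex_F[X]-p^{RS}\approx 0.313$. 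The failure comes from discarding the producer-surplus term: after passing to $\mu$ you retain only $\Ex_{\mu}[X\wedge p^*]$ as the ``payment,'' and this can fall below $p^{RS}$. The paper's proof keeps that term and is much shorter: for the intermediate-interval optimum, $CS^A(\pi^*)=\int \mathbb{I}[y\ge p^*](x-p^*)\,\de\pi^*\le \Ex_{\mu^*}[X]-PS(\pi^*)$, and the binding upward-deviation constraint of \cref{prop: exanteconsumeroptimal} (which you cite but never exploit) gives $PS(\pi^*)=\max_{p}p\,\oF_{\mu^*}(p)\ge p^{RS}$ by the very definition of $p^{RS}$ as the minimal monopoly profit over mean-preserving contractions; hence $CS^A\le \Ex_F[X]-p^{RS}=V_R$, with no auxiliary contraction of $\mu$ required. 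Your proposed repair---constructing an MPC of $\mu$ by routing low-value mass downward---is left unconstructed, and the example above shows that any repair must reintroduce the revenue information your first bound throws away.
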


Part (i) of \cref{prop:joint} pins down producer-optimal joint plans. This consists of \emph{all hype and no information}---the designer withholds all information about idiosyncratic product fit such as to maximally contract the distribution $\mu^*$ in the first stage. In the second stage, all consumers' valuations are increased up to the point at which the marginal cost ($c_d'$) is equal to the marginal contribution to producer surplus $(1)$. The underlying logic combines the basic observation that (i) manipulation costs are convex so contractions in the first stage make implementing any candidate optimal target distribution (given by \cref{thrm:exante_general}) in the second-stage cheaper; and (ii) the usual reasoning that a seller who cannot engage in price discrimination will strictly prefer to offer no information to consumers \citep{lewis1994supplying}.

Part (ii) of \cref{prop:joint} states that consumer-optimal joint plans is \emph{no hype and partial information} where the information in question is exactly the consumer-optimal information design of \cite{roesler2017buyer}.\footnote{This generates a unit-elastic demand curve with an atom at top.} Thus, although we have seen  from \cref{prop: exanteconsumeroptimal} that flexible demand manipulation can generically strictly improve ex-ante consumer surplus, it has no residual value over flexible information design.\footnote{\cref{prop: exanteconsumeroptimal} that under the consumer-optimal design prices strictly decrease does not apply here because under the consumer-optimal information design of \cite{roesler2017buyer}, the monopolist's optimal price is at the bottom of the support.} This arises for two reasons. First, mean-preserving contractions via information allow the designer to drive down prices without offering a profit guarantee. By contrast, because we assumed that advertising weakly improves valuations, upward deviation constraints are more severe---they must ensuring that the monopolist's profits are at least as high as under the initial distribution. Second, manipulation is both (i) costly and (ii) increasing the extensive margin requires  manipulating consumers into buying against their interests. By contrast, we have followed the literature by assuming information provision is costless. 

A few remarks are in order. First, as a matter of practical implementation, fully flexible informative advertising is typically infeasible and/or costly. For instance, consumers might learn about the suitability of products via third-party reviews which constrains how finely the designer can shape demand through information provision.\footnote{A subsequent literature has accordingly worked to incorporate more realistic features by studying information design with noise, exogenous information, or costly information generation. With exogenous information, this corresponds to an outside signal generating distribution $\underline{\mu}$ of valuations. Then, the distribution of expected valuations $\mu^*$ in the first-stage must now satisfy the more stringent constraint $\underline \mu \preceq_{\text{cx}} \mu^* \preceq_{\text{cx}} \mu$ so the consumer-optimal design might no longer be feasible.} In these environments, the consumer-optimal information design might no longer be feasible which creates room for flexible demand manipulation---precisely because it is an economically distinct instrument---to strictly improve ex-ante consumer surplus.

Second, \cref{prop:joint} (iii) states that if the designer is interested in maximizing ex-post consumer surplus, flexible persuasive advertising always has strictly positive value. The quickest way to see this is to start from the consumer-optimal information design of \cite{roesler2017buyer} denoted by $\mu^{\mathsf{RS}}$ depicted by the black curve in \cref{fig:info_manipulation_strictimprovement}.    

\begin{figure}[H]
\begin{minipage}[t]{0.5\linewidth}
By construction there is an atom at the top of $\mu^{\mathsf{RS}}$. We can then strictly improve the joint plan by `spreading' this atom over the lower-envelope of the unit-elastic demand curve and the locally-greedy map (as in \cref{thrm:expost_general}) and this delivers a strict improvement in ex-post consumer surplus net of advertising costs.
\end{minipage}%
\hfill%
\begin{minipage}[t]{0.5\textwidth}\vspace{0pt}
\vspace{-1em}
\centering
{\includegraphics[width=0.98\textwidth]{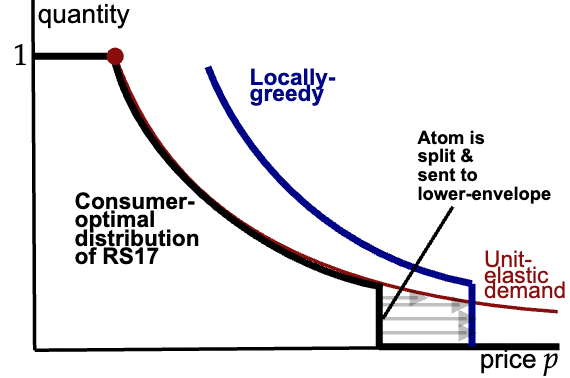}}
    \vspace{-1em}
    \caption{}\label{fig:info_manipulation_strictimprovement}
\end{minipage}
\end{figure}

Third, we have restricted our attention to advertising plans that improve valuations. While we think this is the relevant case for many product markets, we can interpret negative advertising as making some outside option more attractive so valuations for the monopolist's product decreases. Without such directional constraints, flexible demand manipulation can, on its own, deliver strictly more consumer surplus than is possible under flexible information design. Online Appendix \ref{appendix:directional} develops a complete analysis of such plans (Theorems \ref{thrm:exante_U} and \ref{thrm:expost_U}): they resemble those we have seen, but have an extra \emph{twist} structure to reflect both forward and backward comonotone shifts.

\section{Extensions and Generalizations}  \label{sec:extensions}
We now discuss how our framework of flexible demand manipulation can be extended to speak to regulation (\ref{ext:intermediated}), intermediation (\ref{ext:intermediated}), welfare uncertainty (\ref{ext:welfare_uncertainty}), and hetrogeneous suscepbitility to manipulation (\ref{ext:multidim}), and to derive manipulation-robust welfare bounds (\ref{ext:robust_welfare}). Details are collected in Online Appendices \ref{appendix:regulation}-\ref{appendix:het}.

\subsection{Regulation}\label{ext:regulation}
The additional flexibility afforded by targeted advertising can substantially shape the consumer-producer surplus split. This raises the natural question of how advertising plans might be regulated to safeguard consumer surplus. We have seen that \emph{coarse instruments} like outright bans on targeted advertising can ameliorate the extent to which consumer surplus is hurt when firms are in control of the advertising plan (\cref{prop:produceroptimal_vs_uniform}) but are powerless to improve consumer surplus (\cref{prop:consumeroptimal_vs_uniform}).\footnote{At least for the ex-ante case; for the ex-post case, it increases consumer surplus strictly less than flexible advertising.} Can we do better? To shed light on this question, we develop a simple but flexible framework for thinking about regulation:

\begin{definition}
    A regulation of advertising plans $R = \big(\mathcal{G}, \mathcal{L} \big)$ consists of: 
    \begin{itemize}[nosep]
        \item[(i)] \textbf{Limits on targeting.} A partition $\mathcal{G}$ of $X$. Consumers within the same partition must be transported by the same amount.
        \item[(ii)] \textbf{Limits on manipulation.} A function $\mathcal{L}: \mathcal{G} \to \Re_+\cup \{+\infty\}$. Consumers in partition element $g$ can be transported by at most $\mathcal{L}(g)$. 
    \end{itemize}
\end{definition}

$R$ induces a set of feasible transport maps $\Pi[R]$ in the natural way.\footnote{A formal definition of $\Pi[R]$ is in Online Appendix \ref{appendix:regulation}.} Our definition implicitly encodes \emph{free disposal}: the monopolist faces a limit on how much each group can be manipulated, but is free to spend less on advertising. A profit-maximizing monopolist then chooses from the set $\Pi[R]$. Let $\mathcal{R}$ be the set of feasible regulations, our regulator's problem of maximizing (ex-ante) consumer surplus is then 
\begin{align*}
    &\sup_{R \in \mathcal{R}, \pi^* \in \Pi[R]} CS^A(\pi^*) \tag{REG} \label{eqn:regulation}
    \\ 
    &\text{s.t. } \pi^* \text{ solves } \sup_{\pi \in \Pi[R]} PS(\pi) - \int c(x,y) \de\pi
\end{align*}

Our regulatory framework is quite rich, and nests as special cases (i) laissez-faire (no regulation); (ii) bans on targeting (so advertising must be uniform in the spirit of \cite{dixit1978advertising}); and (iii) linear taxes on advertising (the `digital ad tax' proposed by \cite{romer2021taxing,acemoglu2024online}).\footnote{That is, every outcome which can be implemented in equilibrium via linear taxes can also be implemented within our framework of flexible regulation.} \emph{Complex} regulatory instruments that employ a mixture of limits on targeting for some consumers and limits on manipulation on others not only do better, but can achieve the first-best outcome, delivering the maximum amount of consumer surplus while holding the monopolist down to its outside option of selling without advertising.

The regulation that achieves this is formalized by \cref{thrm:regulation} in Online Appendix \ref{appendix:regulation}. It takes an \emph{intermediate interval} form (resembling that of \cref{thrm:exante_general}), prescribing that consumers in the interval $[\underline p, p^*)$ can be precisely targeted, and that their valuations can be manipulated up until $p^*$ but no further. All other consumers cannot be targeted.\footnote{$[\underline p, p^*)$ is chosen such that the firm is exactly indifferent between shifting valuations of intermediate consumers to $p^*$, and forgoing advertising altogether and simply pricing as a monopolist.} Although the firm can freely advertise less, it never chooses to do so---thus, the careful design of a limit on manipulation effectively serves as a quota. 

This model of regulation is, of course, stylized and in practice will be constrained by information or political concerns. Nonetheless, we view this as a benchmark for the {first best} regulatory policy. Moreover, there is growing precedent and/or proposals for employing more complex regulatory instruments---for instance, prohibiting targeted ads conditioned on location or protected class (the proposed `Banning Surveillance Advertising Act'), as well as at children (the `Kids Online Safety and Privacy Act').\footnote{See \url{https://www.congress.gov/bill/117th-congress/house-bill/6416} and \url{https://www.congress.gov/bill/118th-congress/senate-bill/2073/}.} These proposals have been largely motivated by privacy concerns---we highlight the wider implications on prices and consumer surplus, as well as offer a unified framework amenable to more careful analysis.

\subsection{A simple model of intermediated advertising} \label{ext:intermediated}
We now embed flexible demand manipulation into a richer model of the market for digital advertising with endogenous entry. The timing of the game is as follows: 
\begin{itemize}[nosep,leftmargin=*]
    \item \textbf{Consumers enter at $t=1$.} There is a unit measure of consumers with heterogeneous outside options distributed $\eta$ with CDF $H$. At time $t = 1$ they decide on whether to use the platform by comparing their idiosyncratic outside option with their expected surplus on-platform.\footnote{See recent work by \cite{ichihashi2023addictive,acemoglu2024online} among many others for related models incorporating endogenous consumer decisions on whether to use platforms.} 
    \item \textbf{Intermediary offers plan at $t=2$.} Conditional on participation, each consumer's valuation $x \in X$ is drawn independently from the distribution $\mu$.\footnote{We have in mind the intermediary as a gateway to many products, and dispersion in valuations for each product is driven by horizontal differentiation so that aggregating across products, expected individual consumer welfare is independent from the outside option.\label{footnote:intermediary_manyproducts}} These valuations are observed by the intermediary which raises revenue by selling such plans to the monopolist.\footnote{This might correspond to ``supply side platforms'', or a large e-commerce platform like Amazon. \cite{bergemann2024data} develop a model of how platforms might raise revenue by selling access to consumers; the crucial difference in that here, our platform is selling targeted advertising plans to shape valuations.} That is, the intermediary chooses a take-or-leave offer comprising an advertising plan and transfer pair $(\pi,t)$.
    \item \textbf{Firm chooses plan at $t=3$.} The firm decides whether or not to accept the offer $(\pi,t)$. If so, it pays the intermediary the transfer $t$ and the intermediary advertises on the firm's behalf. The firm then sets an optimal price against the transformed demand curve and payoffs are realized.
\end{itemize}
The intermediary's problem is then 
\[
\sup_{\pi\in \Pi} \, \underbrace{H\Big(CS(\pi)\Big)}_{\substack{\text{Measure of} \\ \text{consumers}}}
\cdot 
\underbrace{\Big(PS(\pi) - \int c(x,y) \de\pi\Big)}_{\substack{\text{Average platform} \\ \text{revenue per consumer}}}
\]
and appropriately charging the firm to extract the residual surplus.\footnote{That is, letting $\pi^*$ be the solution, setting $t^* :=  H(CS(\pi^*)) \cdot PS(\pi^*) - H(CS(\pi^0))\cdot PS(\pi^0)$.} 
This multiplicative functional form is nested within our general objective $\phi$ so \cref{thrm:exante_general} (for $CS^A$) and \cref{thrm:expost_general} (for $CS^P$) apply.\footnote{It is without loss to take the maximum of the objective function and $0$ since the designer can always guarantee herself weakly positive surplus by not advertising (so $C = 0$).}

Likewise, a slightly different extensive-form modeling endogenous entry by the monopolist with outside option $\eta \sim G$\footnote{At time $t = 1$, the platform chooses and implements an advertising plan $\pi$. At time $t = 2$, the monopolist decides whether to enter, breaking ties in favor of entering i.e., if and only if $\eta \leq PS(\pi)$. At time $t = 3$, if the monopolist entered it sets optimal prices and payoffs are realized; if the monopolist did not enter, $CS = PS = 0$ but the advertising costs are sunk.}
implies that a platform wishing to maximize consumer surplus now solves
\[
 \sup_{\pi\in \Pi} \, \underbrace{G\Big(PS(\pi)\Big)}_{\substack{\text{Probability} \\ \text{that firm enters}}} \cdot \underbrace{\Big(CS(\pi)\Big)}_{\substack{\text{Consumer welfare} \\ \text{upon entry}}} - \underbrace{\int c(x,y) \de\pi}_{\substack{\text{Sunk advertising} \\ \text{costs}}}
\]

Indeed, a durable insight from the literature on two-sided markets is that platforms often have incentives to deliver surplus to both sides of the market \citep{rochet2003platform}. \cref{thrm:exante_general,thrm:expost_general} speak directly to such settings.

\subsection{Welfare Uncertainty} \label{ext:welfare_uncertainty} We have discussed both advertising as behavioral manipulation (\cref{sec:manipulation}) and as complements (\cref{sec:complements}). What is the right measure welfare, and how should uncertainty be reflected in advertising plans? 

Welfare uncertainty can be either \emph{positive} e.g., the designer is unsure whether advertising works by deceiving consumers or by making them genuinely happier, or \emph{normative} e.g., the designer knows that advertising changes preferences, but is unsure whether this constitutes welfare.\footnote{Non-normative welfare uncertainty is in the spirit of behavioral welfare economics \citep{bernheim2016good}. Normative welfare uncertainty lies at the intersection of economics and philosophy; see \cite{broome1991weighing}.  \cite*{macaskill2020moral} have recently defended taking expectations to aggregate normative uncertainty e.g., about the right moral theory.} In the absence of such uncertainty---e.g., if it is known that advertising works by inducing mistakes---the designer has enough information to compute `true' welfare.\footnote{This is because we \emph{assume} that the designer knows the distribution of initial preferences $\mu$ and can pick the final distribution $\nu$ at some cost. In this regard, our exercise is distinct from the literature on how to identify welfare from choices. This exercise is loosely related to the notion of `idealized welfare analysis' proposed in \cite*{ambuehl2022evaluating}.} But many cases are not so clear cut; we now analyze consumer-optimal advertising plans under welfare uncertainty.
\begin{enumerate}[leftmargin =*]
    \item \textbf{Expectations.} Suppose that our designer assigns probability $\beta \in [0,1]$ on the correct measure being ex-ante. Taking expectations, she solves 
    \[
    \sup_{\pi \in \Pi} \bigg( \beta\, CS^A(\pi) + (1-\beta)\, CS^P(\pi) - \int c \de\pi \bigg).
    \]
    The solution is a \emph{$\beta$-distorted constrained greedy map} that qualitatively resembles the solution in \cref{thrm:expost_general} ($\beta = 0$) with two main differences. First, the gain to inframarginal consumers in the highest quantile is scaled down by $(1-\beta)$ so the technological frontier is shifted in. Second, consumers in the middle quintile shifted to the target price $p^*$ might now be purchasing against their own interests and this deduction to consumer welfare is accordingly scaled by $\beta$.\footnote{The first difference changes our definition of locally-greedy plans since the technological constraint is scaled down. The second difference is only reflected in the designer's optimal target price and quantity $(p^*,q^*)$.} 
    \item \textbf{Max-min.} Suppose that our designer has non-Bayesian uncertainty and wishes to maximize her consumer welfare guarantee. Then, she solves 
    \[
    \sup_{\pi \in \Pi}\inf_{\beta \in [0,1]} \bigg( \beta\, CS^A(\pi) + (1-\beta)\, CS^P(\pi) - \int c \de\pi \bigg).
    \]
    The solution is simply the consumer-optimal solution under ex-ante welfare analyzed in \cref{prop: exanteconsumeroptimal}, precisely because our ex-ante notion of welfare is inherently conservative. 
\end{enumerate}

\subsection{Heterogeneity in susceptibility} \label{ext:multidim} In our model, consumers only differed along initial valuations and this determined their manipulation costs. In practice, consumers might differ in how susceptible they are to persuasive advertising. This sort of heterogeneity can be quite straightforwardly embedded within our framework by augmenting the type space to encode consumers' susceptibility---more susceptible consumers are cheaper to transport (e.g., they need to be shown fewer advertisements) while less susceptible consumers are more expensive---or even impossible---to transport. Online Appendix \ref{appendix:het} formalizes this and shows that optimal advertising plans take qualitatively similar forms. 

\subsection{Manipulation-Robust Welfare Bounds} \label{ext:robust_welfare}  We have analyzed how a designer facing initial distribution $\mu$ might flexibly manipulate demand. But for an analyst `outside the model', only the final distribution $\nu$ is observed since they determine purchase decisions. Our analyst might nonetheless suspect that consumers are behaviorally manipulated, and thus seeks to `invert' the (unknown) advertising plan to compute ex-ante welfare. In ongoing work, we draw on the current framework to derive \emph{robust welfare bounds} in the presence of manipulation technologies: given (i) (partially) observed demand; (ii) observed prices; (iii) total expenditure on advertising; and (iv) bounds on the elasticity of manipulation,\footnote{(i) can be straightforwardly estimated with market data; (ii) and (iii) are readily available; (iv) can be estimated experimentally and in the field.} we extend techniques from this paper to derive lower-bounds on ex-ante welfare when nature chooses the advertising plan adversarially. 

\section{Discussion}\label{sec:discussion}
We have offered a simple but novel framework to study the impact of targeted persuasive advertising on prices and welfare. Our main results (\cref{thrm:exante_general,thrm:expost_general}) fully characterize the form of optimal advertising plans under general objective functions. These results yield qualitative insights into the workings of flexible persuasive advertising, as well as efficient numerical procedures for computation. We used them to shed light on the power of targeted advertising to shape both efficiency and the surplus split---targeting \emph{per se} is substantially more powerful than uniform advertising to either harm or benefit consumers. Moreover, to understand how flexible persuasive and informative advertising interact, we characterize the form and value of using both instruments together vs separately. Finally, we demonstrated how our framework and results can be readily extended to answer a richer set of economic questions related to regulation, intermediation, heterogeneous susceptibility, and deriving manipulation-robust welfare bounds.

Our model is deliberately stylized along several dimensions. We focused on a designer with the ability to both target and persuade precisely. Advertising is, of course, noisy in practice: not every consumer type is precisely known, and not every consumer's valuation can be precisely shifted. Nonetheless, the rise of internet advertising over the past two decades has seen the proliferation of precise targeting via digital advertising \citep{goldfarb2014different}. In the coming years, it is conceivable that new technologies like artificial intelligence might also facilitate precise manipulation. Indeed, there is already substantial experimental evidence that large language models are able to engage in `superhuman persuasion' \citep{matz2024potential,salvi2024conversational,sabour2025human,schoenegger2025large} and systematically influence users' choices. In this regard, our analysis can be viewed as the limiting case in which advertising becomes increasingly precise. 

If preferences can, in fact, be systematically and finely manipulated by these new technologies, this has profound consequences for the prevailing view within economics that \emph{de gustibus non est disputandum} (see an influential account by \cite{stigler1977gustibus}) which, by definition, pushes toward an \emph{ex-post} measure of welfare. Indeed, we think the increasing deployment of such technologies are reasons to---in a growing number of contexts---suspect a rift between choices and welfare: consumer behavior need not be welfare-revealing \citep{bernheim2009beyond}, and an \emph{ex-ante} view can be warranted. Nonetheless, as economists \emph{qua} economists, we have chosen not to take a normative stance on which measure is  constitutive of welfare. Instead, we hope that our results have highlighted the interplay between preference manipulation and market power---the latter of which economists understand well---as a rich and important topic for future work.

\titleformat{\section}
		{\bfseries\center\scshape}     
         {Appendix \thesection:}
        {0.5em}
        {}
        []

\clearpage 
\appendix 
\setstretch{1.2}
\begin{center}
    \textbf{APPENDIX TO FLEXIBLE DEMAND MANIPULATION}
\end{center}

Appendix \ref{appendix:proofs_main} collects proofs of results from \cref{sec:manipulation,sec:complements} characterizing the optimal advertising plan under both ex-ante and ex-post measures of consumer welfare. Appendix \ref{appendix:proofs_compstat} collects proofs of results from \cref{sec:comparison,sec:joint}, which developed comparative statics, as well as analyzed the joint design of manipulation and information.

\section{Proofs of results in Sections \ref{sec:manipulation} \& \ref{sec:complements}} \label{appendix:proofs_main}

\subsection{Outline and preliminaries} 

For clarity, we first offer an overview of the proof of \cref{thrm:exante_general,thrm:expost_general} and how various lemmas fit together (\cref{fig:proof_outline}). 

\textbf{Sketch of \cref{thrm:exante_general}.} \cref{lem: existence} establishes existence of solutions to problems \eqref{eq: designer_exante} and \eqref{eq: designer_expost}. \cref{lemma:domination_lemma} establishes that all solutions to \eqref{eq: designer_exante} must have an intermediate interval structure which, combined with existence, establishes \cref{thrm:exante_general}.

\begin{figure}[h!]
    \centering
    \caption{Proof outline for \cref{thrm:exante_general,thrm:expost_general}} 
    \vspace{-0.5em}
    {\includegraphics[width=0.7\textwidth]{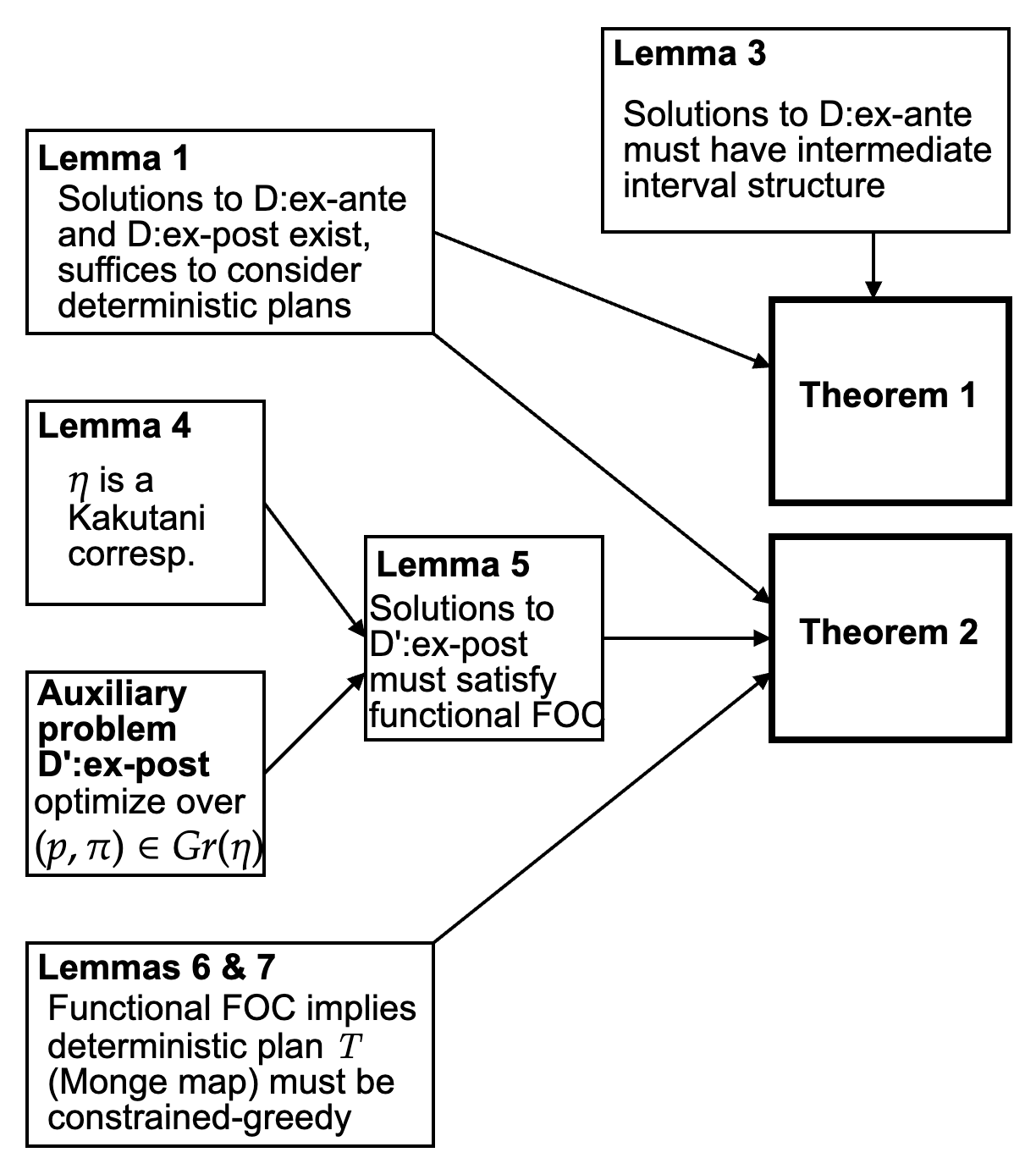}}     
    \vspace{-0.5em}
    \label{fig:proof_outline}
\end{figure}

\textbf{Sketch of \cref{thrm:expost_general}.} For optimal plans under ex-post welfare, we first set up an auxiliary problem \eqref{eq: D' expost} which consists of jointly picking a target price $p$, and an advertising plan $\pi$ that can implement $p$. \cref{Lem: eta property} establishes that the problem is well-behaved. Then, \cref{Lem: expost linearization} establishes that solutions to this auxiliary problem must satisfy a functional first-order condition. Finally, \cref{Lem: expost deterministic,Lem: local soln form} bridges the auxiliary problem with that of optimizing over deterministic plans (Monge maps) so show that all solutions to the original problem \eqref{eq: designer_expost} must be constrained-greedy which establishes \cref{thrm:expost_general}. 

\underline{Definitions.} We now introduce definitions we will use throughout the proofs. For every $\pi \in \Pi$, write $\oF_\pi := \oF_{\marg_y \pi}$ so we can keep track of the second marginal as the transport plan $\pi$ changes. If $\pi$ is induced by the Monge map $T$, we use also $\oF_T$ to denote $\oF_{\pi}$. Let \[I\coloneqq \Big\{T:\Re_+\to \Re_+: T(x) \geq x, \text{$T$ non-decreasing}\Big\}\]
to denote the set of increasing Monge maps that induce $\pi \in \Pi$. For every $\pi \in \Pi$, let $\underline B(\pi) \coloneqq \min \arg\max_{p \geq 0} p\oF_{\pi}(p)$ be the monopolist's lowest optimal price given $\pi$.\footnote{The minimum of the argmax is attained since $\oF_{\pi}(p)$ is upper semi-continuous and so is $p\oF_{\pi}(p)$. The argmax exists by \cref{lem: tail}.} Let $L > 0$ be a global upper bound on the partial derivatives of $\phi$. Finally, it will be convenient to define $r^M := p^M\oF(p^M)$ which is the monopoly's profit in the absence of any advertising. 

A map $f: U \to V$ between normed linear spaces is Fr\'echet differentiable at $u\in U$ if there is a bounded linear map $\mathcal{L}: U \to V$ such that 
\[
\lim_{u' \to 0} \frac{\|f(u+u') - f(u) - \mathcal{L}(u')\|}{\|u'\|} = 0, 
\]
and this map $\mathcal{L}$ is called the Fr\'echet derivative of $f$ at $u$, denoted by $Df(u)$. This linear map $\mathcal{L}$ is bounded if and only if there exists $\kappa > 0$ such that $\|\mathcal{L}(u)\|_V \leq \kappa \|u\|_U$ for all $u \in U$.

The following lemma establishes the existence of solutions and sufficiency of deterministic plans; it is shown in Online Appendix \ref{app: existence}.
\begin{lemma} \label{lem: existence}
    Both \eqref{eq: designer_exante} and \eqref{eq: designer_expost} admit a solution. Moreover, any solution is induced by some $T\in I$ with a finite total cost of advertising.
\end{lemma}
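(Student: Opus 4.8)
\textbf{Proof plan for Lemma \ref{lem: existence}.}

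The plan is to establish existence and sufficiency of deterministic plans by a standard direct-method argument, with care taken around the non-compactness of $\Re_+$ and the directional constraint. First I would argue that it suffices to restrict attention to advertising plans with total cost bounded by some constant $\bar C$: since the designer can always choose the identity map (no advertising), which yields a finite objective value $\phi(PS(\pi^0), CS(\pi^0), 0)$, and since $\phi$ is strictly decreasing in total cost with $\partial\phi/\partial C < \ell < 0$ while its other partials are bounded, any plan with sufficiently large cost is dominated by $\pi^0$. This also shows that all minimizing sequences can be taken within $\Pi_{\bar C} \coloneqq \{\pi \in \Pi : C(\pi) \leq \bar C\}$.

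Next I would establish tightness and compactness. The first marginal of every $\pi \in \Pi$ is fixed at $\mu$, which is tight; and the cost bound $C(\pi) \leq \bar C$ together with the Inada condition (ii)(c) — which forces $c_y(x,x+M) \to \infty$ uniformly in $x$, hence $c(x,y)$ grows superlinearly in $y - x$ uniformly — gives a uniform integrability bound on $\int y \, d\pi$, and in particular tightness of the second marginals. Hence $\Pi_{\bar C}$ is tight, so by Prokhorov's theorem any sequence in it has a weakly convergent subsequence $\pi_n \rightharpoonup \pi^\infty$. I would then check that $\pi^\infty \in \Pi$: the first-marginal constraint and the forward-shift constraint $\pi(\mathbb{H}) = 1$ are preserved under weak limits since $\mathbb{H}$ is closed, and the finite-mean constraint follows from the uniform integrability. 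Lower semicontinuity of $C$ along the sequence (Portmanteau, since $c \geq 0$ is continuous) gives $C(\pi^\infty) \leq \bar C$, and I would argue continuity (or at least the appropriate semicontinuity) of $PS$, $CS^A$, $CS^P$ under weak convergence — here the tie-breaking conventions and the fact that $\oF_\pi$ is upper semicontinuous matter, and one uses that $\mu$ atomless rules out mass concentrating pathologically at the optimal price. Combining with the monotonicity and continuity properties of $\phi$ yields that $\pi^\infty$ attains the supremum.

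For the ``moreover'' claim — that any solution is induced by some $T \in I$ — I would invoke the structure of least-cost transport under submodular (supermodular-negated) costs. Given an optimal $\pi^*$ with second marginal $\nu^* = \marg_y \pi^*$, the plan $\pi^*$ must be a least-cost coupling of $\mu$ and $\nu^*$ subject to the directional constraint (otherwise we could strictly lower $C$ without changing $PS$ or $CS$, improving $\phi$). By the theory of directional optimal transport with strictly submodular cost \citep{nutz2022directional}, the unique least-cost coupling is comonotone (positively assortative), and since $\mu$ is atomless this comonotone coupling is induced by a non-decreasing Monge map $T$; the directional constraint forces $T(x) \geq x$ a.e., so $T \in I$ (after modification on a null set). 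Finiteness of the total cost is immediate from the cost bound established in the first step.

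\textbf{Main obstacle.} The delicate point is the (upper semi)continuity of the welfare functionals $PS$ and $CS$ along the weakly convergent minimizing sequence: because profit is $\sup_p p \oF_{\pi}(p)$ and the survival function can jump under weak limits, one must argue carefully — using the atomlessness of $\mu$ together with the finite-mean/uniform-integrability bound — that the monopolist's optimal price and the induced surpluses behave well in the limit, and that the tie-breaking conventions do not create a discontinuity that destroys attainment. Establishing the tightness of the second marginals from the Inada growth condition is routine but must be stated precisely; the transport-theoretic ``moreover'' part is then a clean citation once the right least-cost subproblem is isolated.
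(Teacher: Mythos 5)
Your ``moreover'' step is essentially the paper's: any finite-cost plan is weakly dominated by the comonotone rearrangement $T=F_{\marg_y\pi}^{-1}\circ F$ of the same target marginal, which the paper justifies via Galichon's results on strictly submodular costs (rather than \cite{nutz2022directional}, though the idea is the same). One slip there: rearranging the coupling while fixing both marginals does \emph{not} leave $CS^A$ unchanged --- ex-ante surplus depends on the joint law, not just on $\marg_y\pi$ --- so you cannot argue ``we could strictly lower $C$ without changing $PS$ or $CS$.'' The correct statement, which the paper uses, is that the comonotone rearrangement weakly \emph{increases} $CS^A$ while strictly reducing cost. For existence, your direct method is a genuinely different route from the paper's. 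The paper instead optimizes over the graph of the correspondence $\eta(p)=\{\pi: p \text{ is optimal under }\pi\}$, shows prices can be restricted to a compact interval $[r^M,\bar p]$, proves $\eta$ is upper hemicontinuous with nonempty compact convex values --- with tightness coming not from a cost bound but from the unit-elastic envelope $\oF_{\pi}(x)\leq \hat p/x$ that price-optimality forces on the target marginal --- and closes with a Berge-type argument (Aliprantis--Border 17.30).

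There are two genuine gaps in your route. First, ``the other partials of $\phi$ are bounded'' does not by itself imply that high-cost plans are dominated by $\pi^0$: a plan with enormous cost can also generate enormous $PS$ and $CS^P$, and $L\cdot(PS+CS)-|\ell|\,C$ need not tend to $-\infty$ unless you first show surplus grows sublinearly in cost. The Inada condition delivers this --- for every $K$ there is $M_K$ with $c(x,y)\geq K\,(y-x-M_K)^+$, hence $PS+CS^P\leq \int y\,\de\pi \leq \int x\,\de\mu + M_K + C(\pi)/K$ --- but you invoke it only later, for tightness. Second, and more seriously, the step you yourself identify as the crux is not resolved, and the proposed fix rests on a false premise: atomlessness of $\mu$ does \emph{not} prevent the target marginal from having atoms --- optimal target distributions have an atom exactly at the monopolist's price, so $\oF_{\pi_n}$ genuinely can jump in the weak limit. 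Worse, the selection $\underline B(\pi)$ of the lowest optimal price is discontinuous in $\pi$: if $\underline B(\pi^\infty)$ exceeds $\lim_n\underline B(\pi_n)$, then $CS$ can drop at the limit while $PS$ rises, and since $\phi$ is only \emph{weakly} increasing in $PS$ this destroys upper semicontinuity of the composite objective. The paper's formulation over $\Gr(\eta)$ sidesteps exactly this: the price is a separate choice variable constrained only to be optimal (so closedness of the constraint follows from upper hemicontinuity of $\eta$, whose proof handles the Portmanteau issues at continuity points of $\oF_\pi$), and the selection of the \emph{lowest} optimal price is recovered ex post because any lower optimal price would only raise $CS^P$. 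To make your direct method work you would need to supply an argument of comparable care for the joint behavior of $\underline B(\pi_n)$, $\oF_{\pi_n}$, and the sign-changing integrand in $CS^A$ along the minimizing sequence.
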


\begin{lemma} \label{lem: tail}
    If $h$ is a non-increasing function on $[0,+\infty)$ and $\int_0^{+\infty} h(x) \de x < +\infty$,  then $p h(p) \to 0$ as $p\to +\infty$.
\end{lemma}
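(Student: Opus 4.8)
The plan is to use the elementary "telescoping box" argument: monotonicity converts the pointwise quantity $p\,h(p)$ into an integral over a shrinking-tail window, and integrability forces that window-integral to vanish.

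First I would record a preliminary observation: a non-increasing integrable function on $[0,+\infty)$ must be non-negative. Indeed, if $h(x_0)<0$ for some $x_0$, then $h(x)\le h(x_0)<0$ for all $x\ge x_0$, whence $\int_{x_0}^{+\infty}h(x)\de x=-\infty$, contradicting $\int_0^{+\infty}h(x)\de x<+\infty$. So $h\ge 0$ everywhere; in particular all the integrals below are of a non-negative function and the tail $\int_a^{+\infty}h(x)\de x$ is well-defined and decreasing in $a$, tending to $0$ as $a\to+\infty$ by integrability.

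Next, fix $p>0$. Since $h$ is non-increasing, $h(x)\ge h(p)$ for every $x\in[p/2,p]$, so integrating over that interval of length $p/2$ gives
\[
\frac{p}{2}\,h(p)\ \le\ \int_{p/2}^{p}h(x)\de x\ \le\ \int_{p/2}^{+\infty}h(x)\de x.
\]
Now let $p\to+\infty$: the right-hand side is the tail of a convergent integral and hence tends to $0$, and the left-hand side is non-negative, so $\tfrac{p}{2}h(p)\to 0$, i.e. $p\,h(p)\to 0$. (Equivalently, via the Cauchy criterion: given $\e>0$ pick $N$ with $\int_N^{+\infty}h<\e/2$; then for all $p\ge 2N$ one has $[p/2,p]\subseteq[N,+\infty)$ and therefore $p\,h(p)\le 2\int_N^{+\infty}h<\e$.)

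There is no real obstacle here; the only points worth a line of care are the non-negativity reduction above and the remark that the inequality $\int_{p/2}^{p}h\ge \tfrac{p}{2}h(p)$ relies only on monotonicity, not on continuity of $h$, so no regularity hypothesis beyond what is stated is needed.
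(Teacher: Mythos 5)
Your proof is correct and follows essentially the same route as the paper's: bound $\tfrac{p}{2}h(p)$ by $\int_{p/2}^{p}h$ using monotonicity and let the tail of the convergent integral kill it. The extra remark that $h$ must be non-negative is a reasonable bit of added care but does not change the argument.
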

\begin{proof}
    Since $\int_0^{+\infty} h(x) \de x $ converges, for any $\e > 0$, there exists $N > 0$ such that $\int_{x}^\infty h(t) \de t < \e/2$ for every $x > N$. It follows that for every $p > 2N$, $p h(p) \leq 2\int_{p/2}^p h(t) \de t < 2\tfrac{\e}{2} = \e$, where the first inequality follows from the fact that $h$ is non-increasing. 
\end{proof}
\cref{lem: tail} implies that for any $\pi \in \Pi$, the target demand function $\oF_\pi$ is a decreasing function with $\int_0^{+\infty} \oF_\pi(y) \de y = \int_{\Re_+^2} y \de \pi(x,y) < +\infty$, so $p\oF_\pi(p) \to 0$ as $p\to +\infty$ and $\underline B(\pi)$ is well-defined.

\subsection{Proof of \cref{thrm:exante_general}}
The following lemma shows that any optimal solution of \eqref{eq: designer_exante} has an intermediate interval structure. \cref{thrm:exante_general} then follows from Lemma \ref{lem: existence} and Lemma \ref{lemma:domination_lemma}.

\begin{lemma}\label{lemma:domination_lemma}
    For every $\pi \in \Pi$ induced by $T\in I$, $\pi' \in \Pi$ induced by 
    \begin{align*}
        T'(x) = \begin{cases}
        x \quad &\text{if $x \geq p^*$ or $x < \oF^{-1}(q^*)$} \\
        p^*  \quad &\text{if $x \in [\oF^{-1}(q^*), p^*)$} 
        \end{cases}
    \end{align*}  
    weakly dominates $\pi$ in \eqref{eq: designer_exante}, where $p^* = \underline B(\pi)$ and $q^* = \oF_{\pi}(p^*)$. Moreover, the dominance is strict if $\pi \neq \pi'$.
\end{lemma}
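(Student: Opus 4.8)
\textbf{Proof plan for Lemma \ref{lemma:domination_lemma}.} The plan is to compare the three components of the objective---producer surplus, ex-ante consumer surplus, and advertising cost---one at a time, showing that the replacement map $T'$ is weakly better on each and strictly better on at least one whenever $\pi \neq \pi'$. The key structural observation is that, by construction, $T'$ leaves the target demand curve $\oF_{\pi'}$ unchanged at and above $p^*$: it maps everything in $[\oF^{-1}(q^*), p^*)$ to exactly $p^*$, and leaves everything else fixed, so $\oF_{\pi'}(y) = \oF_\pi(y)$ for $y > p^*$ and $\oF_{\pi'}(p^*) = \oF_\pi(p^*) = q^*$ (here one uses atomlessness of $\mu$ to get $\oF(\oF^{-1}(q^*)) = q^*$, together with the fact that $T \in I$ already satisfies $\oF_\pi(p^*) = q^*$ by definition of $q^*$).

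\textbf{Step 1: Producer surplus does not decrease.} Since $p^* = \underline B(\pi)$ is the monopolist's lowest optimal price under $\pi$, we have $p^* \oF_\pi(p^*) = \max_{p \geq 0} p\, \oF_\pi(p)$. I would show that under $\pi'$, $p^*$ remains optimal: for $p > p^*$, $p\,\oF_{\pi'}(p) = p\,\oF_\pi(p) \leq p^*\oF_\pi(p^*) = p^* \oF_{\pi'}(p^*)$ using that the demand curve is unchanged above $p^*$; for $p \leq p^*$, the map $T'$ only moved mass upward (weakly), so $\oF_{\pi'}(p) \leq \oF_\pi(p) + (\text{mass added at }p^*) \le \dots$ — more carefully, $\oF_{\pi'}(p) \le \oF_\pi(p)$ fails in general below $p^*$, so instead I argue directly that $p\,\oF_{\pi'}(p) \le p\, \oF_\pi(p') $ cannot beat $p^* \oF_{\pi'}(p^*)$ since $\oF_{\pi'}(p) \le 1 = \oF_{\pi'}(0)$-type bounds combined with $\oF_{\pi'}$ agreeing with $\oF_\pi$ except that mass in $[\oF^{-1}(q^*),p^*)$ has been pushed up to $p^*$, which can only weakly help revenue at any $p \le p^*$ relative to $\oF_\pi$ — and $\oF_\pi$ already had $p^*$ as a maximizer. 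Hence $PS(\pi') \geq p^*\oF_{\pi'}(p^*) = p^*\oF_\pi(p^*) = PS(\pi)$, where the last equality holds because $p^*$ was also optimal for $\pi$ (ties broken toward the lowest price, and revenue at the selected price is the max in both cases).

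\textbf{Step 2: Ex-ante consumer surplus does not decrease.} Under either plan the set of purchasing consumers (those with final valuation $\ge$ the realized price) and the price are the same up to the identification above, but what matters is $\int \mathbb{I}(y \ge p^*)(x - p^*)\,d\pi$. Under $T'$, the purchasing consumers are exactly those with $x \ge \oF^{-1}(q^*)$ (the interval $[\oF^{-1}(q^*), p^*)$ is lifted to $p^*$ and buys, and $x \ge p^*$ buys), each contributing $x - p^*$ with respect to \emph{initial} valuation. Under $T$, any consumer who buys either started at $x \ge p^*$ (contributing the same $x-p^*$, since such $x$ are fixed by both maps) or was transported from some $x < p^*$ up past $p^*$ (contributing $x - p^* < 0$) or was an inframarginal consumer whose valuation was raised — but raising inframarginal valuations does not change their ex-ante contribution $x-p^*$. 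The point is that $T'$ gives \emph{exactly} the consumers in $[\oF^{-1}(q^*), p^*)$ their (possibly negative) contribution and no one below that, whereas $T$ may additionally drag consumers with $x < \oF^{-1}(q^*)$ above the price (strictly worse, negative contribution) or fail to include some of $[\oF^{-1}(q^*), p^*)$ — but the latter cannot happen while maintaining $\oF_\pi(p^*) = q^*$ unless mass came from below, again weakly worse. I would make this precise by writing $CS^A(\pi') - CS^A(\pi)$ as an integral of $(x - p^*)$ over the symmetric difference of purchasing sets and checking signs region by region.

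\textbf{Step 3: Cost does not increase, and strictness.} This is where optimal-transport structure enters. Since $c$ is strictly submodular, $c(x,x) = 0$, and $c_y(x,x)=0$ with $c_{yy} > 0$ (Assumption \ref{main assumption}), for any $x < p^*$ we have $c(x, p^*) \le c(x, T(x))$ whenever $T(x) \ge p^*$ (monotonicity of $c$ in $y$ above the diagonal, which follows from $c_y(x,x) = 0$ and $c_{yy}>0$), and $c(x,p^*) \le c(x, T(x))$ trivially covers the case $T(x) \in (x,p^*)$ only if... — actually here I use that $T'$ transports each consumer \emph{no farther} than necessary to hit $p^*$, so $T'(x) \le \max(T(x), p^*) \wedge p^* $... the clean statement: for consumers with $x \in [\oF^{-1}(q^*),p^*)$, $T'(x) = p^*$; I claim $\int_{[\oF^{-1}(q^*),p^*)} c(x,p^*)\,dF(x) \le \int c(x,T(x))\,d\pi$ because every such consumer under $T$ must have either been left alone (impossible if $\oF_\pi(p^*)=q^*$ requires their mass, contradiction with mass conservation unless replaced from below, which costs even more) or moved somewhere, and moving to exactly $p^*$ is the cheapest way to have final valuation $\ge p^*$ since $c(x,\cdot)$ is increasing above the diagonal. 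Consumers outside $[\oF^{-1}(q^*),p^*)$ and below $p^*$ pay zero under $T'$ but possibly positive under $T$. And consumers above $p^*$ pay zero under $T'$ but possibly positive under $T$. So $C(\pi') \le C(\pi)$, with the partial derivative of $\phi$ with respect to $C$ strictly negative. Finally, if $\pi \neq \pi'$ then on a positive-measure set $T$ differs from $T'$, and by the case analysis this difference is either strictly wasteful (cost strictly higher, hence $\phi$ strictly lower) or strictly lowers ex-ante CS or PS; in every case $\phi(\pi') > \phi(\pi)$.

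\textbf{Main obstacle.} The delicate point is Step 1 together with the mass-accounting in Steps 2--3: I must rule out the possibility that $T$ achieves the \emph{same} target quantity $q^* = \oF_\pi(p^*)$ at $p^*$ more cheaply by, say, lifting consumers from $x < \oF^{-1}(q^*)$ to exactly $p^*$ while leaving some of $[\oF^{-1}(q^*), p^*)$ untouched. The resolution is comonotonicity: among plans achieving a fixed second marginal at the relevant threshold, the cheapest one (by submodularity / the directional optimal transport logic of \citet{nutz2022directional}) is the monotone one, which lifts the \emph{highest} sub-$p^*$ consumers first — exactly $T'$. I would either invoke the rearrangement inequality for submodular costs directly or give a short swapping argument: if $x_1 < x_2 < p^*$ with $T(x_1) \ge p^* > T(x_2)$, swapping their targets weakly lowers cost by submodularity and weakly improves the objective. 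Iterating (or passing to the monotone rearrangement) yields $T'$ as the dominant map. This swapping/rearrangement argument is the technical heart of the lemma; everything else is sign-checking.
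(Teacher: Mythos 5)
Your decomposition into producer surplus, ex-ante consumer surplus, and cost is exactly the paper's, and Steps 2--3 are essentially the paper's argument: since $T\in I$ is increasing and $\oF_\pi(p^*)=q^*$, the buyer set under $T$ at price $p^*$ is an up-set of $\mu$-measure $q^*$, hence coincides with $[\oF^{-1}(q^*),\infty)$ up to a null set, so $CS^A$ is unchanged; and $T'(x)\le T(x)$ pointwise together with $c_y(x,x)=0$, $c_{yy}>0$ gives the (strict, when $\pi\neq\pi'$) cost reduction. The one genuine gap is in Step 1: you argue that $p^*$ remains \emph{an} optimal price under $\pi'$, but the monopolist breaks ties toward the \emph{lowest} optimal price, so you must also verify $\underline B(\pi')=p^*$; otherwise the realized price under $\pi'$ could be some $p'<p^*$ and your CS and cost comparisons (all carried out at price $p^*$) would not apply. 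The paper closes this by noting that any candidate $p\in[\oF^{-1}(q^*),p^*)$ yields revenue $pq^*<p^*q^*$, while any candidate $p<\oF^{-1}(q^*)$ with $p\oF_{\pi'}(p)=p\oF(p)=p^*q^*$ would satisfy $p\oF_\pi(p)\ge p\oF(p)=p^*q^*$ and hence be optimal under $\pi$ as well, contradicting $p^*=\underline B(\pi)$ being the lowest such price. Relatedly, your hedging in Step 1 is unnecessary: the clean fact is that $\oF_{\pi'}\le\oF_\pi$ pointwise everywhere (below $\oF^{-1}(q^*)$ and above $p^*$ one has $\oF_{\pi'}=\oF\le\oF_\pi$; on $(\oF^{-1}(q^*),p^*]$ one has $\oF_{\pi'}=q^*=\oF_\pi(p^*)\le\oF_\pi$), which immediately gives $p\oF_{\pi'}(p)\le p\oF_\pi(p)\le p^*q^*$ for all $p$.

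On your ``main obstacle'': the swapping/rearrangement argument you describe is not needed for this lemma, because the hypothesis already restricts to $\pi$ induced by an \emph{increasing} map $T\in I$ --- monotonicity of $T$ is what forces $T(x)\ge p^*$ on all of $[\oF^{-1}(q^*),p^*)$ and rules out the scenario of lifting mass from below $\oF^{-1}(q^*)$ while skipping intermediate types. The reduction from arbitrary plans in $\Pi$ to comonotone Monge maps is a separate step in the paper (\cref{lem: existence}, proved via the cited optimal-transport results for submodular costs), not part of this lemma. Your instinct that comonotonicity resolves the mass-accounting is right, but you should either invoke it as a hypothesis here or recognize that the rearrangement step belongs to the existence lemma, where iterating pairwise swaps over a continuum requires the measure-theoretic machinery the paper cites rather than an informal iteration.
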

\begin{proof}
    Fix a $\pi \in \Pi$, let $\pi'$ as constructed in the lemma statement. We first show that $p^* = \underline B(\pi')$. By construction, $\oF_{\pi'}(p) = q^*$ for $p \in [\oF^{-1}(q^*), p^*]$ and coincides with $\oF(p)$ otherwise. Hence, in the former case, $p \oF_{\pi'}(p) = p q^* \leq p^* q^*$ with equality holds only at $p^*$; in the latter case, $p \oF_{\pi'}(p) = p \oF(p) \leq p\oF_{\pi}(p) \leq p^* q^*$, where the first inequality holds because $\marg_y \pi \succeq_{\text{st}} \mu$. It remains to show there's no $p < \oF^{-1}(q^*) \leq p^*$ that is an optimal price for the monopolist. Suppose there is such a $p$, then $p \oF(p) = p^* q^*$ by our previous argument, it follows that $p\oF_{\pi}(p) = p^* q^*$, violates the fact that $p^* = \underline B(\pi)$. 

    Since $p^* = \underline B(\pi')$, $PS(\pi') = p^*q^* = PS(\pi)$ and $CS(\pi') = \int_{\oF^{-1}(q^*)}^{\infty}(x-p^*) \de F(x)$. Next, we show $CS(\pi') \geq CS(\pi)$. Note that $T(\oF^{-1}(q^*)) \geq p^*$ as otherwise $\oF_{\pi}(p^*) < q^*$. Therefore,  
    \[
    CS(\pi') = \int_{\oF^{-1}(q^*)}^{\infty}(x-p^*) \de F(x) \geq \int_{T(x) \geq p^*} (x-p^*) \de F(x) =  CS(\pi),
    \]
    where the first inequality holds because $T$ is increasing and $\oF^{-1}(q^*) \leq p^*$.

    As $T$ is increasing, $T(x) \geq p^*$ for every $x \in [\oF^{-1}(q^*), p^*]$, hence by monotonicity of $c$, $\int c \de \pi' = \int_{\oF^{-1}(q^*)}^{p^*} c(x, p^*) \de F(x) \leq \int c(x,T(x)) \de F(x) = \int c \de \pi$. Moreover, this inequality is strict if $\pi \neq \pi'$. Therefore, $\pi'$ leads to weakly higher PS and CS while (strictly) reducing the cost, which leads to a higher objective value in \eqref{eq: designer_exante}.
\end{proof} 

\subsection{Proof of \cref{thrm:expost_general}} \label{appendix: proof of expost} Before we show \cref{thrm:expost_general} formally, we briefly outline the key ideas.

The existence of solution follows from Lemma \ref{lem: existence}. We show that any solution to \eqref{eq: designer_expost} is constrained-greedy. By \cref{lem: existence}, it suffices to focus on advertising plans with finite total cost: $\Pi^* = \{\pi \in \Pi: \int c \de \pi < +\infty\}$.

Define $\varphi: \Re_+ \times \Pi^* \to \Re^3$ as
\[
\varphi(p,\pi) = \left(\int \mathbb{I}[y\geq p] p \de \pi(x,y), \int \mathbb{I}[y\geq p](y-p)\de \pi(x,y), \int c(x,y) \de \pi(x,y) \right),
\]
which maps a pair of price and advertising plan to the corresponding vector of PS, ex-post CS, and total cost. Let $\eta: [r^M,\infty) \rightrightarrows \Pi^*$ be a correspondence that maps a price $p$ to the set of advertising plans such that $p$ is an optimal price after advertising: $\eta(p)  \coloneqq \{\pi \in \Pi^*: p \oF_{\pi}(p) \geq p'\oF_{\pi}(p')\text{ for every } p'\geq 0\}$.\footnote{For $p < r^M$ the set is empty.} 
\begin{lemma} \label{Lem: eta property}
    $\eta$ is upper hemi-continuous and $\eta(p)$ is non-empty, convex and compact.
\end{lemma}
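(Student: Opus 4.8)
The plan is to verify the four properties in turn, the organizing observation being that, writing the target survival function as $\oF_\pi(q)=\int \mathbb{I}[y\ge q]\,\de\pi$, both the constraints cutting out $\Pi^*$ and the monopoly‑optimality constraint $p\oF_\pi(p)\ge q\oF_\pi(q)$ are affine in $\pi$.

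\emph{Non-emptiness and convexity.} Fix $p\ge r^M$. If $p$ already maximizes $\mu$'s revenue ($p\oF(p)=r^M$), the null plan $\pi^0$ lies in $\eta(p)$. Otherwise $p\oF(p)<r^M$; since $r^M/p\le 1=\oF(\min X)$ and $\oF(p)<r^M/p$, the threshold $\underline p(p):=\inf\{x:\oF(x)\le r^M/p\}$ satisfies $\min X\le\underline p(p)<p$, and the plan $\pi$ that transports every type in $[\underline p(p),p)$ to $p$ (and fixes all others) has $\oF_\pi\equiv r^M/p$ on $[\underline p(p),p]$ and $\oF_\pi=\oF$ elsewhere; hence $q\oF_\pi(q)=q\,r^M/p\le r^M$ on the middle interval (equality only at $q=p$) and $q\oF_\pi(q)=q\oF(q)\le r^M$ off it, so $p$ attains the maximal revenue, while the transport moves bounded mass over bounded distances, so $\int y\,\de\pi<\infty$, $\int c\,\de\pi<\infty$, i.e. $\pi\in\Pi^*$. (This is the same intermediate‑interval shape as in \cref{thrm:exante_general}.) Convexity is immediate: for each $q$, $\pi\mapsto\oF_\pi(q)$ is affine, so $\{\pi:p\oF_\pi(p)\ge q\oF_\pi(q)\}$ is a half‑space, and $\eta(p)$ is the intersection of these with the convex set $\Pi^*$.

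\emph{Compactness and upper hemicontinuity.} The key estimate is that $\pi\in\eta(p)$ forces $\oF_\pi(q)\le p\oF_\pi(p)/q\le p/q$ for all $q>0$, so the target marginals of plans in $\bigcup_{p'\le\bar p}\eta(p')$ have uniformly vanishing tails; since their first marginals all equal the tight measure $\mu$, the family is uniformly tight, hence relatively weakly compact by Prokhorov. For closedness along $\pi_n\to\pi$ weakly: $\marg_x\pi=\mu$ and $\pi(\mathbb{H})=1$ survive by the portmanteau theorem ($\mathbb{H}$ closed). For monopoly‑optimality I would use the identity $R(\nu):=\sup_{q\ge 0}q\oF_\nu(q)=\sup\{q\oF_\nu(q):q\text{ not an atom of }\nu\}$ — the value at an atom $a$ is approached by $q\oF_\nu(q)\ge q\oF_\nu(a)$ as $q\uparrow a$ through non‑atoms — so, with $\nu_n=\marg_y\pi_n$, $\nu=\marg_y\pi$ and using continuity of $\oF_{\nu_n}$ at non‑atoms of $\nu$,
\[
R(\nu)=\sup_{q\text{ non-atom of }\nu}\lim_n q\oF_{\nu_n}(q)\;\le\;\liminf_n R(\nu_n)\;=\;\liminf_n p\oF_{\nu_n}(p)\;\le\;p\oF_\nu(p)\;\le\;R(\nu),
\]
the penultimate step being portmanteau on the closed set $[p,\infty)$; hence $p$ is optimal for $\nu$. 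Upper hemicontinuity follows from the same computation with $p_n\to p$: the only change is that $\limsup_n p_n\oF_{\nu_n}(p_n)\le p\oF_\nu(p)$ now uses portmanteau on $[p-\varepsilon,\infty)$ and $\varepsilon\downarrow 0$; combined with uniform tightness on $\bigcup_{p'\in[p-\delta,p+\delta]}\eta(p')$ this is precisely the closed‑graph‑plus‑local‑boundedness criterion for upper hemicontinuity.

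\emph{Main obstacle.} The delicate point is genuine compactness — i.e. that the finiteness constraints $\int y\,\de\pi<\infty$ and $\int c\,\de\pi<\infty$ survive the weak limit. Weak convergence only gives lower semicontinuity of $\pi\mapsto\int c\,\de\pi$ (via the coercive Inada condition on $c$), and in fact $\int c\,\de\pi$ is \emph{not} bounded over $\eta(p)$, since one can append arbitrarily wasteful shifts of vanishing mass without disturbing optimality of $p$. The resolution is that it is without loss for the application to replace $\eta(p)$ by the cost‑sublevel sets $\eta(p)\cap\{\int c\,\de\pi\le B\}$ — because $\phi$ is strictly decreasing in cost and the null plan is always feasible, any candidate optimum lies in such a slice for $B$ large — and these slices are weakly closed (lower semicontinuity of $\int c$) and uniformly tight, hence genuinely compact; equivalently one topologizes $\Pi^*$ so that the cost is continuous. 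Everything else above is routine once the sup‑over‑non‑atoms identity for $R$ is in hand.
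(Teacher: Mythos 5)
Your proposal is correct, and on the three substantive steps it follows essentially the same route as the paper's proof: non-emptiness via an explicit plan (the paper uses the even simpler map $T(x)=p$ for all $x<p$, which is in $\eta(p)$ because $p\geq r^M$; your intermediate-interval construction is a refinement of the same idea), convexity via linearity of $\pi\mapsto\oF_\pi(q)$, and closedness via the revenue bound on the target tail, uniform tightness, Prokhorov, and a portmanteau argument evaluated at continuity points of the limit. Your sup-over-non-atoms identity for $R(\nu)$ plays exactly the role of the paper's device of perturbing a putative profitable deviation $p'$ to a nearby continuity point $\tilde p$ of $\oF_\pi$; the two arguments are interchangeable.

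Where you genuinely depart from the paper is the ``main obstacle'' paragraph, and your concern there is well founded: the paper's proof extracts a weak limit and verifies only the revenue-optimality constraint, never that the limit lies in $\Pi^*$. It need not. For instance, with $\mu$ uniform on $[0,1]$, the maps $T_n(x)=1/(1-x)$ on $[0,1-1/n]$ and $T_n(x)=n$ above induce plans in $\eta(1)$ with finite mean and finite cost for each $n$, yet the weak limit has target survival function $\min(1,1/y)$ and hence infinite mean; so $\eta(p)$ is not weakly closed as a subset of $\Pi^*$, and the compactness claim fails as literally stated. Your repair---intersecting $\eta(p)$ with cost-sublevel sets $\{\int c\,\de\pi\leq B\}$, which is without loss for the downstream optimization because $\phi$ penalizes cost at rate at least $|\ell|$ while the bounded partials together with the Inada condition cap the surplus gained per unit of cost---is sound. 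To close it fully, note that the Inada condition yields $c(x,y)\geq y-x-M$ whenever $y-x\geq M$ for $M$ large enough, so a bound on $\int c\,\de\pi$ also bounds $\int y\,\de\pi$ given $\int x\,\de\mu<+\infty$; hence the sublevel slices are uniformly tight and weakly closed (by lower semicontinuity of $\int c\,\de\pi$), i.e.\ genuinely compact. In short: same skeleton as the paper, plus a legitimate patch that the paper's own argument is missing.
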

We prove \cref{Lem: eta property} in Online Appendix \ref{app: existence}.

We consider an auxiliary problem which has the same solution as \eqref{eq: designer_expost}\footnote{In the sense that if $(p,\pi)$  solves the auxiliary problem \eqref{eq: D' expost}, then $p = \underline B(\pi)$ and hence $\pi\in \Pi^*$ solves the original problem \eqref{eq: designer_expost}. Suppose not, then $p^*(\pi) < p$ and thereby $(p^*(\pi),\pi)$ induces a strictly higher CS while the PS and cost are the same, violating the optimality of $(p,\pi)$.}
\begin{align}\label{eq: D' expost}
    \sup_{(p,\pi) \in \Gr(\eta)} \phi \circ \varphi(p,\pi) \tag{D': ex-post}
\end{align}
where $\Gr(\eta)$ is the graph of $\eta$.

The following lemma establishes a first-order necessary condition for a pair $(p,\pi) \in \Gr(\eta)$ to be optimal in \eqref{eq: D' expost}.  Let $M(\Re_+^2)$ be the set of finite signed Borel measures on $\Re_+^2$ with finite weighted total variation norm $\|\pi\|_{WTV} = \int 1 + y + c(x,y) \de |\pi|(x,y)$. By definition, $\Pi^* \subseteq M(\Re_+^2)$. We use $\phi_i$ to denote the partial derivative of $\phi$ with respect to the $i$-th entry.

\begin{lemma} \label{Lem: expost linearization}
    If $(p,\pi) \in \Gr(\eta)$ solves \eqref{eq: D' expost}, then $\pi$ solves 
    \begin{equation} \label{eq: expost local} 
        \max_{\pi'\in \eta(p)} \int \mathbb{I}[y\geq p] \big( \phi_1(\varphi(p,\pi)) p + \phi_2(\varphi(p,\pi))(y-p) \big) + \phi_3(\varphi(p,\pi)) c(x,y) \de \pi'(x,y) \tag{$(p,\pi)$-local}
    \end{equation}
\end{lemma}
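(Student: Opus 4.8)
The plan is to apply a standard first-order variational argument on the graph of $\eta$, exploiting convexity of $\eta(p)$ in the second coordinate to perform admissible perturbations of $\pi$ while holding $p$ fixed. First I would observe that since $\eta(p)$ is convex (by \cref{Lem: eta property}), for any competitor $\pi' \in \eta(p)$ and any $t \in [0,1]$ the mixture $\pi_t := (1-t)\pi + t\pi'$ again lies in $\eta(p)$, so $(p, \pi_t) \in \Gr(\eta)$ and hence $\phi\circ\varphi(p,\pi_t) \leq \phi\circ\varphi(p,\pi)$ by optimality of $(p,\pi)$. The key point is that $\varphi(p,\cdot)$ is \emph{affine} in the measure argument: each of the three components is an integral of a fixed (price-$p$-dependent) integrand against $\pi$. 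Thus $\varphi(p,\pi_t) = (1-t)\varphi(p,\pi) + t\varphi(p,\pi') = \varphi(p,\pi) + t\big(\varphi(p,\pi') - \varphi(p,\pi)\big)$, a genuine line segment in $\Re^3$.

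Next I would differentiate $t \mapsto \phi\big(\varphi(p,\pi_t)\big)$ at $t = 0^+$. Since $\phi$ is differentiable with bounded partial derivatives (a maintained assumption on the objective), the chain rule gives a one-sided derivative equal to $\nabla\phi\big(\varphi(p,\pi)\big) \cdot \big(\varphi(p,\pi') - \varphi(p,\pi)\big)$. Optimality forces this one-sided derivative to be $\leq 0$, i.e.
\[
\sum_{i=1}^{3} \phi_i\big(\varphi(p,\pi)\big)\,\big(\varphi_i(p,\pi') - \varphi_i(p,\pi)\big) \leq 0 \qquad \text{for all } \pi' \in \eta(p).
\]
Rearranging and writing out the three components of $\varphi$ explicitly, the left-hand side is precisely the difference of the linear functional appearing in \eqref{eq: expost local} evaluated at $\pi'$ versus at $\pi$. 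Hence $\pi$ maximizes
\[
\pi' \mapsto \int \mathbb{I}[y\geq p]\big(\phi_1(\varphi(p,\pi))\, p + \phi_2(\varphi(p,\pi))(y-p)\big) + \phi_3(\varphi(p,\pi))\, c(x,y)\, \de\pi'(x,y)
\]
over $\pi' \in \eta(p)$, which is the claim.

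A few technical points need care, and these are where I expect the real work to lie. (i) One must justify interchanging the limit in $t$ with the integrals defining $\varphi$ — but affinity makes this trivial, since $\varphi(p,\pi_t)$ is literally linear in $t$, so no dominated-convergence argument is needed; the only requirement is that $\varphi(p,\pi')$ and $\varphi(p,\pi)$ are finite, which holds because $\pi,\pi' \in \Pi^*$ have finite total cost and finite mean (so all three integrals converge). (ii) One should confirm that the relevant integrands are integrable against every $\pi' \in \eta(p)$: $\mathbb{I}[y\geq p]p$ is bounded, $\mathbb{I}[y\geq p](y-p) \leq y$ is controlled by the finite-mean constraint, and $c(x,y)$ is controlled by membership in $\Pi^*$; this is why the lemma restricts attention to $\Pi^*$ and why the weighted total variation norm $\|\cdot\|_{WTV}$ was introduced. (iii) Strictly one does not even need the full Fr\'echet-derivative machinery set up in the preliminaries for \emph{this} lemma — a directional (Gateaux) derivative along segments suffices — though the Fr\'echet framework will be invoked in the subsequent lemmas that pass from \eqref{eq: expost local} to the constrained-greedy form. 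The main obstacle, such as it is, is bookkeeping: correctly matching the three terms of $\nabla\phi \cdot (\varphi(p,\pi') - \varphi(p,\pi))$ with the integrand in \eqref{eq: expost local} and noting that the $\pi$-dependent constant terms cancel out of the maximization, so that $\pi$ maximizing the difference is equivalent to $\pi$ maximizing the functional itself.
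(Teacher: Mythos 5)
Your proposal is correct and follows essentially the same route as the paper's proof: both exploit the linearity (affinity) of $\varphi(p,\cdot)$ in the measure argument together with the convexity of $\eta(p)$ from \cref{Lem: eta property} to obtain the first-order inequality $\langle \nabla\phi(\varphi(p,\pi)), \varphi(p,\pi') - \varphi(p,\pi)\rangle \leq 0$, the paper phrasing this via the Fr\'echet chain rule while you phrase it via one-sided Gateaux derivatives along segments. Your version makes the use of convexity more explicit than the paper does, but the substance is identical.
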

\begin{proof}
    If $(p, \pi)$ solves \eqref{eq: D' expost}, then $\phi\circ\varphi(p,\pi) \geq \phi\circ\varphi(p,\pi')$ for every $\pi' \in \eta(p)$. For any $p \geq r^M$, define $\varphi^p:M(\Re_+^2) \to \Re^3$ by $\varphi^p(\pi) = \varphi(p,\pi)$ while extending $\varphi$ to $\Re_+ \times M(\Re_+^2)$. $\varphi^p$ is a bounded linear map and hence it is Fr\'echet differentiable at $\pi$, with $D\varphi^p(\pi) = \varphi^p$ by linearity. Since $\phi: \Re^3\to \Re$ is differentiable, the chain rule of Fr\'echet derivative applies, so we have
\begin{equation}\label{eq: chain rule}
    D(\phi\circ \varphi^p)(\pi) = D\phi(\varphi^p(\pi)) \circ D \varphi^p(\pi) = D\phi(\varphi^p(\pi)) \circ \varphi^p,
\end{equation}
where the first equality follows from the chain rule and the second follows from the linearity of $\varphi^p$. 

By optimality of $(p,\pi)$, we have $D(\phi\circ \varphi^p)(\pi)(\pi' - \pi) \leq 0$ for every $\pi' \in \eta(p)$. By \cref{eq: chain rule}, 
\[
D(\phi\circ \varphi^p)(\pi)(\pi' - \pi) = (D\phi(\varphi^p(\pi)) \circ \varphi^p)(\pi' - \pi) = \langle \nabla \phi(\varphi^p(\pi)), \varphi^p(\pi'-\pi) \rangle,
\]
where $\nabla \phi$ is the gradient of $\phi$. It follows that $\langle\nabla\phi(\varphi^p(\pi)),  \varphi^p(\pi) \rangle \geq \langle\nabla\phi(\varphi^p(\pi)),  \varphi^p(\pi') \rangle$ for every $\pi'\in \eta(p)$.
\end{proof}

For any $(p,\pi) \in \Gr(\eta)$, let 
\[\lambda_{p,\pi}(x,y) \coloneqq \mathbb{I}[y\geq p] \Big( \phi_1(\varphi(p,\pi)) p + \phi_2(\varphi(p,\pi))(y-p) \big) + \phi_3(\varphi(p,\pi)) c(x,y)\] and 
\[\Lambda_{p,\pi}(x) \coloneqq \max \Big(\arg\max_{y\geq x} \lambda_{p,\pi}(x,y)\Big).\] Note that $\Lambda_{p,\pi}(x) \in \{x\} \cup [p,\infty)$ by  the fact that $\phi_3<0$ and $c(x,y) > 0$ for $y > x$. Since $c_{xy} < 0$, $\lambda_{p,\pi}$ is strictly supermodular and hence $\Lambda_{p,\pi}(x)$ is increasing. Recall that by \cref{lem: existence}, it suffices to consider $\pi \in \Pi^*$ induced by some $T \in I$ in \eqref{eq: designer_expost}, hence so is \eqref{eq: D' expost}.

\begin{lemma} \label{Lem: expost deterministic}
    For every $(p,\pi) \in \Gr(\eta)$ such that $\pi$ is induced by some $T \in I$. Let $q = \oF_{T}(p)$, we have
    \begin{enumerate}[nosep]
        \item[(a)] $T(x) \geq p$ for every $x > \oF^{-1}(q)$; 
        \item[(b)] $T(x) \leq \tfrac{pq}{\oF(x)}$ for every $x \geq 0$;
        \item[(c)] If $q < \oF([\Lambda_{p,\pi}]^{-1}(p))$, then $\pi$ is suboptimal in \eqref{eq: expost local}.
    \end{enumerate}
\end{lemma}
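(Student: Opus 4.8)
The three claims are independent; (a) and (b) are short consequences of the monotonicity of $T$ and optimality of $p$ under $\pi$, while (c) requires exhibiting an explicit improving perturbation. For (a), write $q := \oF_T(p)$ and suppose some $x_0 > \oF^{-1}(q)$ had $T(x_0) < p$; then $T(x) \le T(x_0) < p$ for all $x \le x_0$, so $\{x: T(x) \ge p\} \subseteq (x_0,\infty)$ and hence $q = \oF_T(p) \le \oF(x_0)$, contradicting $x_0 > \oF^{-1}(q)$ since $\oF$ is continuous and strictly decreasing on the interior of $X$. For (b), fix $x$ with $\oF(x) > 0$ and consider the deviation price $p' := T(x)$: monotonicity of $T$ gives $\{x' \ge x\} \subseteq \{T \ge p'\}$, so $\oF_\pi(p') \ge \oF(x)$, and optimality of $p$ gives $pq = p\,\oF_\pi(p) \ge p'\,\oF_\pi(p') \ge T(x)\,\oF(x)$; rearranging gives the bound, the case $\oF(x)=0$ being vacuous.

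\textbf{Part (c).} Put $q^\dagger := \oF\big([\Lambda_{p,\pi}]^{-1}(p)\big)$ and assume $q < q^\dagger$. Since $\oF^{-1}(q^\dagger) = [\Lambda_{p,\pi}]^{-1}(p)$, the \emph{under-served interval} $I^\dagger := \big([\Lambda_{p,\pi}]^{-1}(p),\ \oF^{-1}(q)\big)$ has $\mu$-mass exactly $q^\dagger - q > 0$. Moreover (a) together with $\oF_T(p) = q$ forces $\{T \ge p\}$ to equal $(\oF^{-1}(q),\infty)$ up to a $\mu$-null set, so $T(x) < p$ for a.e.\ $x \in I^\dagger$, even though the locally-greedy map sends each such $x$ to $\Lambda_{p,\pi}(x) \ge p$. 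Split on whether $\pi$ advertises to these consumers. If $T(x) > x$ on a non-null subset of $I^\dagger$, then re-routing $T(x) \mapsto x$ on $I^\dagger$ (and leaving $\pi$ unchanged elsewhere) keeps the plan in $\eta(p)$ — demand at $p$ and at every $p'' \ge p$ is unchanged while demand below $p$ weakly drops — and strictly raises $\int \lambda_{p,\pi}\,d\pi'$, because on $[x,p)$ we have $\lambda_{p,\pi}(x,\cdot) = \phi_3\,c(x,\cdot)$, which is strictly negative for $y > x$ and vanishes at $y=x$; hence $\pi$ is suboptimal in \eqref{eq: expost local}. Otherwise $T$ is the identity on $I^\dagger$ a.e., and instead re-route each $x \in I^\dagger$ to the constrained-greedy target $\Lambda_{p,\pi}(x) \wedge \tfrac{pq^\dagger}{\oF(x)}$. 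The resulting plan $\pi'$ then has demand exactly $q^\dagger$ at $p$, and for every price $p''$, bounding the extra mass $\pi'$ places above $p''$ by the unit-elastic curve $pq^\dagger/p''$ and using $p''\oF_\pi(p'') \le pq$ gives $p''\oF_{\pi'}(p'') \le pq^\dagger$, so $\pi' \in \eta(p)$; finiteness of cost is clear since $I^\dagger$ and the targets on it are bounded.

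\textbf{The main obstacle.} What remains, and what I expect to be the crux, is to show this re-routing strictly raises $\int \lambda_{p,\pi}\,d\pi'$, i.e.\ (using $T = \mathrm{id}$ on $I^\dagger$) that $\int_{I^\dagger} \lambda_{p,\pi}\big(x,\ \Lambda_{p,\pi}(x)\wedge \tfrac{pq^\dagger}{\oF(x)}\big)\,d\mu(x) > 0$. Strict submodularity of $c$ makes $x \mapsto c(x,p)$ strictly decreasing and hence $x \mapsto \max_{y \ge p}\lambda_{p,\pi}(x,y)$ strictly increasing; as this quantity vanishes at the endpoint $[\Lambda_{p,\pi}]^{-1}(p)$ by definition of that endpoint, it is strictly positive throughout the open interval $I^\dagger$. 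Where the unit-elastic cap is slack ($\Lambda_{p,\pi}(x) \le pq^\dagger/\oF(x)$) the integrand equals exactly this strictly positive quantity; the delicate point is the region where the cap binds, since there the target is pulled below the pointwise greedy optimum, and one must verify — again using the monotonicity of $\Lambda_{p,\pi}$ and of $c(\cdot,p)$ afforded by strict submodularity — that the per-consumer local gain stays nonnegative and is strictly positive on a positive-measure subset of $I^\dagger$. Everything else — (a), (b), the $\eta(p)$-membership of $\pi'$, and the cost bound — is routine once the comparison price in (b) and the perturbation in (c) are written down.
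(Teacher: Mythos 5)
Your parts (a) and (b) are correct and coincide with the paper's arguments (monotonicity of $T$ for (a); the deviation price $p'=T(x)$ together with optimality of $p$ for (b)). Your Case 1 of part (c) --- re-routing $T(x)\mapsto x$ when $\pi$ wastefully advertises below $p$ on a non-null subset of $I^\dagger$ --- is also a valid strict improvement, and your verification that the Case 2 perturbation stays in $\eta(p)$ is essentially right.

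The gap is exactly where you place it, and it is not a routine verification. In Case 2 you send $x\in I^\dagger$ to $\Lambda_{p,\pi}(x)\wedge\tfrac{pq^\dagger}{\oF(x)}$, and where the unit-elastic cap binds this target is pulled down toward $p$; but $\lambda_{p,\pi}(x,p)=\phi_1(\varphi(p,\pi))\,p+\phi_3(\varphi(p,\pi))\,c(x,p)$ can be strictly negative for $x\in I^\dagger$ (take $\phi_1=0$, which is permitted since $\phi$ is only weakly increasing in $PS$, and $\Gamma_{p,\pi}(x)>p$: then $\max_{y\geq p}\lambda_{p,\pi}(x,y)>0$ is attained at $\Gamma_{p,\pi}(x)$ while $\lambda_{p,\pi}(x,p)=\phi_3\,c(x,p)<0$ because $x<p$). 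Since the consumers you re-route currently contribute $\lambda_{p,\pi}(x,x)=0$, the per-consumer change is negative near the left endpoint of $I^\dagger$, where the cap forces the target to be approximately $p$. When $q$ is only slightly below $q^\dagger$ the entire interval $I^\dagger$ sits in this bad region, so the integral you need to be positive can in fact be negative; monotonicity of $\Lambda_{p,\pi}$ and of $c(\cdot,p)$ cannot rescue the pointwise sign. The paper avoids this with a much smaller perturbation: it moves only the set $S=\{x\in([\Lambda_{p,\pi}]^{-1}(p),p):T(x)<p\}$ to the atom at $p$ (not to the greedy/unit-elastic envelope) and invokes its characterization of the locally-greedy map, under which $x>[\Lambda_{p,\pi}]^{-1}(p)$ entails $\lambda_{p,\pi}(x,p)\geq 0$ (strictly positive by strict submodularity), so the improvement $\lambda_{p,\pi}(x,p)>0\geq\lambda_{p,\pi}(x,T(x))$ is pointwise and membership in $\eta(p)$ is immediate. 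If you want to retain your Case 2 construction you would first have to establish that same sign property of $\lambda_{p,\pi}(\cdot,p)$ on $([\Lambda_{p,\pi}]^{-1}(p),p)$ --- at which point the atom-at-$p$ perturbation already closes the lemma, and the envelope construction is not needed here (it belongs to \cref{Lem: local soln form}, which treats the complementary case $q\geq\oF([\Lambda_{p,\pi}]^{-1}(p))$).
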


\begin{proof} We show each part in turn. 

\underline{Part (a):} Direct consequence of the monotonicity of $T$. 

    \underline{Part (b):}  Suppose toward a contradiction (b) does not hold, then there exists $x\geq 0$ such that $T(x) > \frac{pq}{\oF(x)}$, then by monotonicity of $T$, we have $\oF_{T}(T(x)) \geq \oF(x)$, which implies that $T(x) \cdot\oF_{T}(T(x)) > \frac{pq}{\oF(x)}\oF(x) = pq$, contradicts with the fact that $\pi \in \eta(p)$.

    \underline{Part (c):}  Suppose $q < \oF([\Lambda_{p,\pi}]^{-1}(p))$. Then, the set $S \coloneqq \{x \in ([\Lambda_{p,\pi}]^{-1}(p), p): T(x) < p\}$ has measure $\mu(S) > 0$. Consider an alternative map $T'$ such that $T'(x) = p$ for every $x\in S$ and $T' =T$ outside of $S$. By construction, $T' \in I$ and $\pi'$ induced by $T'$ is in $\eta(p)$. It follows that $\pi'$ strictly dominates $\pi$ in \eqref{eq: expost local} since $ \lambda_{p,\pi}(x,p) > 0 \geq \lambda_{p,\pi}(x,T(x))$ for every $x \in S$. 
    \end{proof}

    Let $\Gamma_{p,\pi}(x)$ be the unique maximizer of $\max_{y \geq x} \phi_2(\varphi(p,\pi)) y + \phi_3(\varphi(p,\pi)) c(x,y)$. For any $x \geq 0$, $\Lambda_{p,\pi}(x) = x$ if $\lambda_{p,\pi}(x,p) < 0$; $\Lambda_{p,\pi}(x) = p$ if $\lambda_{p,\pi}(x,p) \geq 0$ and $\Gamma_{p,\pi}(x) \leq p$; and $\Lambda_{p,\pi}(x) = \Gamma_{p,\pi}(x)$ otherwise.
\begin{lemma}  \label{Lem: local soln form}
    For every $(p,\pi) \in \Gr(\eta)$ such that $\pi$ is induced by some $T \in I$ and $q = \oF_{\pi}(p) \geq \oF([\Lambda_{p,\pi}]^{-1}(p))$, $\pi'\in \eta(p)$ induced by 
    \begin{align*}
        T'(x) = \begin{cases}
        \left(\Lambda_{p,\pi}(x) \wedge \tfrac{pq}{\oF(x)}\right) \vee p \quad &\text{if $x \geq \oF^{-1}(q)$} \\
        x \quad &\text{if $x < \oF^{-1}(q)$} 
        \end{cases}
    \end{align*}  
    weakly dominates $\pi$ in \eqref{eq: expost local}. Moreover, the dominance is strict if $\pi \neq \pi'$.
\end{lemma}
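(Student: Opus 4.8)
The plan is to prove two assertions in turn: that $\pi'$ is admissible, i.e.\ $\pi' \in \eta(p)$; and that $\pi'$ attains a weakly larger value than $\pi$ in \eqref{eq: expost local}, strictly so unless $\pi = \pi'$. Since the objective of \eqref{eq: expost local} is linear in its argument and $\lambda_{p,\pi}$ is supermodular (its only joint term is $\phi_3(\varphi(p,\pi))\,c(x,y)$, and $\phi_3 c_{xy} > 0$ because $\phi_3 < 0$ while $c$ is strictly submodular), the value of any coupling with given marginals is maximized by the comonotone one, which for atomless $\mu$ is a Monge map; this is why $T'$ is taken increasing, and it lets me compare $\pi'$ to the given $\pi$ (itself induced by $T \in I$) through the pointwise difference $\int_{\Re_+}\big[\lambda_{p,\pi}(x,T'(x)) - \lambda_{p,\pi}(x,T(x))\big]\de F(x)$, which I then bound region by region.

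First I would verify $\pi' \in \eta(p)$. Monotonicity of $T'$ follows from monotonicity of $\Lambda_{p,\pi}$ and of $x \mapsto pq/\oF(x)$; the forward-shift inequality $T'(x) \geq x$ uses $\Lambda_{p,\pi}(x) \geq x$ together with $x\oF(x) \leq x\oF_\pi(x) \leq p\oF_\pi(p) = pq$ (from $\pi \in \eta(p)$ and $\marg_y\pi \succeq_{\text{st}} \mu$), which gives $pq/\oF(x) \geq x$; finite mean and finite cost follow from the a priori bound $\Lambda_{p,\pi}(x) - x \leq K$ for a universal $K$, a consequence of \cref{main assumption}(ii)c applied to the first-order condition defining $\Gamma_{p,\pi}$, combined with finiteness of $\int x \de F$ and the $C^2$-regularity of $c$. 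For the pricing condition, since $q = \oF_\pi(p) \geq \oF(p)$ forces $\oF^{-1}(q) \leq p$, one has $T'(x) = x < p$ for $x < \oF^{-1}(q)$ and $T'(x) \geq p$ for $x \geq \oF^{-1}(q)$, so $\oF_{\pi'}(p) = q$; for $y > p$, $T'(x) \geq y$ forces $pq/\oF(x) \geq y$, hence $\oF_{\pi'}(y) \leq pq/y$ and $y\oF_{\pi'}(y) \leq pq$; the prices $y \leq p$ are handled similarly via $y\oF_{\pi'}(y) \in \{y\oF(y),\, yq\}$ and $y\oF(y) \leq y\oF_\pi(y) \leq pq$. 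Thus $p\oF_{\pi'}(p) = pq$ is the monopolist's maximal revenue, so $\pi' \in \eta(p)$.

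For the dominance, I split by region. On $\{x < \oF^{-1}(q)\}$: since $T$ is increasing and $\oF_\pi(p) = q$, necessarily $T(x) < p$, so $\lambda_{p,\pi}(x,T(x)) = \phi_3(\varphi(p,\pi))\,c(x,T(x)) \leq 0 = \lambda_{p,\pi}(x,x) = \lambda_{p,\pi}(x,T'(x))$, strictly when $T(x) > x$. On $\{x \geq \oF^{-1}(q)\}$: \cref{Lem: expost deterministic}(a),(b) place $T(x)$ in the window $[\max(x,p),\, pq/\oF(x)]$, and $pq/\oF(x) \geq \max(x,p)$ puts $T'(x)$ in the same window; on $[\max(x,p),\infty)$ the section $y \mapsto \lambda_{p,\pi}(x,y)$ equals $\phi_1 p + \phi_2(y-p) + \phi_3 c(x,y)$, strictly concave (as $\phi_3 c_{yy} < 0$), hence single-peaked, and the plan is to show $T'(x) = (\Lambda_{p,\pi}(x) \wedge \tfrac{pq}{\oF(x)}) \vee p$ is precisely the maximizer of this section over the window, using the explicit description of $\Lambda_{p,\pi}$ recorded before the lemma and the identity $(\Lambda \wedge M)\vee p = (\Lambda \vee p)\wedge M$ valid when $M = pq/\oF(x) \geq p$. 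Given that, $\lambda_{p,\pi}(x,T'(x)) \geq \lambda_{p,\pi}(x,T(x))$ with equality only if $T(x) = T'(x)$, and integrating over the two regions yields the weak dominance, strict unless $\pi = \pi'$.

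The main obstacle is the last claim on the \emph{intermediate} sub-region $\{\oF^{-1}(q) \leq x < [\Lambda_{p,\pi}]^{-1}(p)\}$, where $T'$ sends a consumer to the bottom $p$ of the window even though the locally-greedy target $\Lambda_{p,\pi}(x)$ lies strictly below $p$: there the section need not be monotone on $[p,\, pq/\oF(x)]$, so the comparison is not a clean pointwise domination and must be argued at the level of the measure. This is exactly where the hypothesis $q \geq \oF([\Lambda_{p,\pi}]^{-1}(p))$ enters — it pins down the right endpoint of this region — and together with the fact that any competitor $\pi'' \in \eta(p)$ induced by a Monge map is itself forced by \cref{Lem: expost deterministic}(a) to carry these consumers up to at least $p$ while being simultaneously capped by $pq''/\oF(\cdot)$ on the high quantile, one shows no feasible competitor (in particular $\pi$) can gain on the intermediate region without a compensating loss on the high quantile; \cref{Lem: expost deterministic}(c) is what makes the hypothesis $q \geq \oF([\Lambda_{p,\pi}]^{-1}(p))$ innocuous downstream.
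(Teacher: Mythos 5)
Your architecture is the same as the paper's: verify $\pi'\in\eta(p)$, then compare $\lambda_{p,\pi}(x,T'(x))$ against $\lambda_{p,\pi}(x,T(x))$ pointwise, using \cref{Lem: expost deterministic}(a),(b) to confine $T(x)$ to the window $[\max(x,p),\,pq/\oF(x)]$ and concavity of the section on $\{y\geq p\}$ to identify the window maximizer. Your admissibility step, your treatment of $\{x<\oF^{-1}(q)\}$ (via $T(x)<p$ rather than the paper's appeal to $\Lambda_{p,\pi}(x)=x$), and your treatment of the high quantile where $\Lambda_{p,\pi}(x)\geq p$ are all fine. Where you diverge is the intermediate sub-region where $\Lambda_{p,\pi}(x)=x<p$ but $\Gamma_{p,\pi}(x)>p$: the paper's proof simply asserts there that $T'(x)$ maximizes $\lambda_{p,\pi}(x,\cdot)$ over $[p,\,pq/\oF(x)]$, whereas you correctly observe that the section is strictly increasing at $y=p$ for such $x$, so its maximum over the window sits at $\Gamma_{p,\pi}(x)\wedge pq/\oF(x)>p=T'(x)$ and the pointwise comparison breaks.

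The gap is that your proposed repair does not close this. The promised measure-level compensation argument ("no feasible competitor can gain on the intermediate region without a compensating loss on the high quantile") is never carried out, and it cannot work in the form stated: let $T''$ agree with $T'$ everywhere except on $S:=\{x\geq \oF^{-1}(q):\ \Lambda_{p,\pi}(x)=x<p<\Gamma_{p,\pi}(x)\}$, where you set $T''(x)=\Gamma_{p,\pi}(x)\wedge pq/\oF(x)$. Then $T''$ is still increasing, still satisfies $T''(x)\leq pq/\oF(x)$ everywhere and $T''(x)\geq p$ exactly on $\{x\geq\oF^{-1}(q)\}$, so $y\,\oF_{T''}(y)\leq pq$ for all $y$ and $\oF_{T''}(p)=q$, i.e.\ $\pi''\in\eta(p)$; yet $T''$ strictly beats $T'$ pointwise on $S$ while leaving the high quantile untouched, so no compensating loss materializes. (Concretely, in \cref{eg:expost_uniform_eg} with $p=1/4$, $a=4$, $q=1$ and $x=0.1$, moving $x$ to $pq/\oF(x)\approx 0.278$ yields roughly $0.028-2(0.178)^2\approx -0.035$, versus $-2(0.15)^2=-0.045$ at the atom.) So whenever $S$ has positive $\mu$-measure the dominance claim fails for $T'$ as written; the comparison is only salvageable by replacing $\Lambda_{p,\pi}$ with $\Gamma_{p,\pi}\vee p$ on that sub-region (equivalently, taking the lower envelope of $\Gamma_{p,\pi}\vee p$ and the unit-elastic curve), which amends the statement rather than the proof. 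You have put your finger on exactly the step the paper's proof passes over in one sentence, but flagging it and then gesturing at an unworkable compensation argument leaves the lemma unproved.
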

\begin{proof}
    We first show that $\pi' \in \eta(p)$. By construction, $\oF_{\pi'}(x) = \oF(x)$ for $x < \oF^{-1}(q)$ and $\oF_{\pi'}(x) = q$ for $x \in [\oF^{-1}(q), p]$, so $x \oF_{\pi'}(x) \leq pq$ for $x \leq p$. For $x > p$, note that $\oF_{\pi'}(x) = \oF(\inf_t\{T'(t)\geq x\}) \leq pq/x$, where the last inequality holds because $T'(t) \geq x$ implies $\oF(t) \leq pq/x$ by construction. Therefore, $x \oF_{\pi'}(x) \leq pq$ for any $x \geq 0$. 
    
    It remains to show that $\marg_y \pi'$ has a finite mean and the total cost of $\pi'$ is finite. For simplicity, we abuse the notation and let $\phi_i$ denote $\phi_i(\varphi(p,\pi))$. By \cref{main assumption}, there exists $M > 0$ such that $y - x > M$ implies that $c_y(x,y) > L/|\ell|$. It follows that $\Gamma_{p,\pi}(x) \leq x + M$ as otherwise $c_y(x, \Gamma_{p,\pi}(x)) > L/|\ell| \geq \phi_2/|\phi_3|$. Hence, $T'(x) \leq \Lambda_{p,\pi}(x) \vee p \leq  \max \{p, x+M\}$ and $\int T'(x) \de F(x) < +\infty$ since $F$ has a finite mean. By convexity of $c$ in $y$, $c(x, \Gamma_{p,\pi}(x)) \leq (\Gamma_{p,\pi}(x) - x) \phi_2/|\phi_3| \leq M \phi_2/|\phi_3|$. Therefore, $\int c(x,T'(x)) \de F(x) \leq \int_0^{p} c(x,p+M) \de F(x) + \int_{p}^\infty c(x, \Gamma_{p,\pi}(x)) \de F(x) \leq \int_0^{p} c(x,p+M) \de F(x) + M\phi_2/|\phi_3| \oF(p) < +\infty$.

    By Lemma \ref{Lem: expost deterministic}, $T(x) \in [p, \tfrac{pq}{\oF(x)}]$ for every $x > \oF^{-1}(q)$. By definition of $\Lambda_{p,\pi}(x)$ and the assumption that $c_{yy} > 0$, $ T'(x)$ maximizes $\lambda_{p,\pi}(x,y)$ over $y \in [p, \tfrac{pq}{\oF(x)}]$ for every $x > \oF^{-1}(q)$, and thus $\lambda_{p,\pi}(x, T'(x)) \geq \lambda_{p,\pi}(x,T(x))$. For every $x \leq \oF^{-1}(q)$, $\Lambda_{p,\pi}(x) = x$ since $q \geq \oF([\Lambda_{p,\pi}]^{-1}(p))$, so $\lambda_{p,\pi}(x,x) \geq \lambda_{p,\pi}(x,T(x))$. It follows that $\int \lambda_{p,\pi} \de \pi' = \int \lambda_{p,\pi}(x,T'(x)) \de F(x) \geq \int \lambda_{p,\pi}(x,T(x)) \de F(x) = \int \lambda_{p,\pi} \de \pi$. If $\pi \neq \pi'$, then $T' \neq T$ on a non-zero measure set (under $\mu$), so the inequality becomes strict.
\end{proof}

For any solution $\pi \in \Pi$ that solves \eqref{eq: designer_expost}, $(p, \pi)\in \Gr(\eta)$ solves \eqref{eq: D' expost} where $p = \underline B(\pi)$. Lemma \ref{Lem: expost linearization} then implies that $\pi$ solves \eqref{eq: expost local}. By \cref{lem: existence}, it suffices to consider $\pi$ induced by some $T \in I$, and \cref{Lem: expost deterministic,Lem: local soln form} imply that $\pi$ is constrained-greedy for the pair $(p, \oF_\pi(p))$.

\section{Proofs of results in \cref{sec:comparison,sec:joint}} \label{appendix:proofs_compstat}
We first consider the producer-optimal comparison.
By \cref{prop: produceroptimal}, the designer's problem when maximizing producer surplus can be rewritten as finding an optimal pair $(p', p)$, where $p$ is the price posted by the monopolist and the advertisement increases the valuations of all types $x \in [p', p)$ to $p$. That is, the designer's problem is $\max_{p \geq p' \geq \min X} r(p', p)$, where $r(p', p) = p \cdot \oF(p')- \int^{p}_{p'} c_d(p-x) \de F(x)$.

The first-order necessary conditions for $(\underline p, p^*)$ to be optimal are
\begin{align}
    \frac{\de}{\de p}  r (\underline{p}, p^*) &=1-F(\underline p)- \int^{p^*}_{\underline p} c_d'(p^*-x) \de F(x) =0 \label{eq: FOC-u}
    \tag{FOC-u}\\
    \frac{\de}{\de p'} r (\underline{p}, p^*)&= f(\underline p) [c_d(p^*-\underline p) - p^*] =(\leq) 0 \label{eq: FOC-l} \tag{FOC-l},
\end{align}

If the advertising plan is restricted to $\Pi^U$, then the designer's problem can be rewritten as 
\begin{equation} \label{eq: uniform ad}
    \sup_{\min X \leq p' \leq p} p \overline F(p') - c_d(p - p'). \tag{M-U}
\end{equation}
The corresponding first-order conditions for $(\underline p^U, p^U)$ to be optimal are 
\begin{align}
    \overline F(\underline p^U) - c_d'(p^U - \underline p^U) &=0 \label{eq: U-FOC-u} \tag{U-FOC-u} \\
    -p^U f(\underline p^U) + c_d'(p^U - \underline p^U) &= (\leq) 0. \label{eq: U-FOC-l} \tag{U-FOC-l}
\end{align}

\begin{proof}[Proof of \cref{prop:produceroptimal_vs_uniform}]
    Throughout the proof, we fix a solution $(p^*, \underline p)$ of the monopolist's problem under targeted advertising and $(p^U, \underline p^U)$ under uniform advertising. We denote the induced final value distribution under optimal targeted (uniform) advertising by $\nu, \nu'$, respectively. 

    \textbf{Proof of (i).} Observe that since $c_d'(0) = 0$, under either targeted or uniform advertising the monopolist finds it optimal to shift valuations out by some positive value: $ p^* > \underline p $ and $ p^U > \underline p^U $. We first show that $p^* - \underline p > p^U - \underline p^U$. By \eqref{eq: FOC-u}, we have $\oF(\underline p) = \int^{p^*}_{\underline p} c_d'(p^*-x) \de F(x) < c_d'(p^*- \underline p) \oF(\underline p)$, where the inequality holds as $c_d'$ is strictly increasing. It follows that $c_d'(p^*- \underline p) > 1$. By \eqref{eq: U-FOC-u}, we have $F(\underline p^U) = c_d'(p^U - \underline p^U)$. Therefore, $c_d'(p^U - \underline p^U) \leq 1 < c_d'(p^*- \underline p)$, and thereby $p^* - \underline p > p^U - \underline p^U$.
    
    It follows that if $\underline p \geq \underline p^U$, then $p^* > p^U$. If $\underline p < \underline p^U$, then by the first order conditions, 
        \[
        p^U = \frac{\oF(\underline p^U)}{f(\underline p^U)} \leq \frac{\oF(\underline p)}{f(\underline p)} 
        \]
    where the equality follows from combining \eqref{eq: U-FOC-u} and \eqref{eq: U-FOC-l}, and the inequality follows from the fact that $F$ has an increasing hazard rate.\footnote{Note that $\underline p^U > \underline p \geq \min X$ so the first-order conditions for uniform advertising hold with equality. $f(\underline p) > 0$ since $f$ is decreasing and $f > 0$  in the interior of $X$.} By \eqref{eq: FOC-u}, we have
    \begin{align} \label{eq: PS comp}
        \frac{\oF(\underline p)}{f(\underline p)} = \int^{p^*}_{\underline p} c_d'(p^*-x) \frac{f(x)}{f(\underline p)} \de x \leq  \int^{p^*}_{\underline p} c_d'(p^*-x) \de x = c_d(p^* - \underline p) \leq p^*,
    \end{align}
    where the first inequality holds because $f$ is decreasing and the second inequality follows from \eqref{eq: FOC-l}. Note that if $f$ is constant on $(\underline p, \underline p^U)$, then $\frac{\oF(\underline p^U)}{f(\underline p^U)} < \tfrac{\oF(\underline p)}{f(\underline p)}$; otherwise, the first inequality in \eqref{eq: PS comp} is strict, so in either case $p^* > p^U$.

    Next, we show $p^U \geq p^M$. Adding \eqref{eq: U-FOC-u} and \eqref{eq: U-FOC-l}, we have 
    \[
    p^U \geq \cfrac{\oF (\underline p^U)}{f(\underline p^U)} \geq \cfrac{\oF (p^U)}{f(p^U)} 
    \]
    while on the other hand, the solution to the monopolist problem without advertising, $p^M$, fulfills $p^M = \oF (p^M)/f(p^M)$. Since $F$ has an increasing hazard rate, $\bar F/f$ is decreasing which allows us to conclude $p^U \geq p^M$. 

    \textbf{Proof of (ii).} We first show the result under the ex-ante measure. Since $p^U > \underline p^U$, we have 
    \begin{align*}
        CS^A(\pi^U_{PS})  
        = \int^{p^U}_{\underline p^U} (x-p^U) \de F(x) +  \int^{+\infty}_{p^U} (x-p^U) \de F(x) < CS^A(\pi^0).   
    \end{align*}
    The last inequality holds since the first term is strictly negative, and the second term is bounded above by $CS^A(\pi^0)$ since $p^U \geq p^M$. 

    We now show $CS^A(\pi^*_{PS}) < CS^A(\pi^U_{PS})$. 
    Note that
    \begin{align*}
        CS^A(\pi^*_{PS}) = \int_{\underline p}^{+\infty} (x - p^*) \de F(x) = \int_{\underline p}^{+\infty} x \de F(x) - p^*\oF(\underline p)
    \end{align*}
    and 
    \begin{align*}
        CS^A(\pi^U_{PS}) = \int_{\underline p^U}^{+\infty} (x - p^U) \de F(x) =  \int_{\underline p^U}^{+\infty} x \de F(x) - p^U \oF(\underline p^U)
    \end{align*}
    By assumption, $\oF(\underline p) = q^* \geq q^U = \oF(\underline p^U)$, so $\underline p \leq \underline p^U$ and we have 
    \begin{align*}
        \int^{\underline p^U}_{\underline p} x \de F(x)  \leq \underline p^U (\oF(\underline p) - \oF(\underline p^U)) < p^* \oF(\underline p) - p^U \oF (\underline p^U),
    \end{align*}
    where the second inequality holds by the fact that $p^* > p^U$ and $\underline p \leq \underline p^U$. It follows that $CS^A(\pi^*_{PS}) < CS^A(\pi^U_{PS})$.

    For the comparison under ex-post measure, observe that 
    \begin{align*}
        CS^P(\pi^*_{PS}) = \int^{+\infty}_{p^*} (x-p^*) \de F_{\nu}(x) =& \int^{+\infty}_{p^*} (x-p^*) \de F(x)\\
        <& \int^{+\infty}_{p^M} (x-p^M) \de F(x) = CS^P(\pi^0),
    \end{align*}
    where the second equality follows from the fact that $F_{\nu}(x) = F(x)$ for $x > p^*$, as suggested by \cref{thrm:exante_general}, and the inequality holds since $p^* > p^M$.
    Let $d = p^U - \underline p^U > 0$, we have
    \begin{align*}
     CS^P(\pi^*_{PS}) = \int^{+\infty}_{p^*} (x-p^*) \de F(x) &< \int^{+\infty}_{p^*} (x + d -p^*) \de F(x) \\
     &= \int^{+\infty}_{p^* + d} (x -p^*) \de F_{\nu'}(x) \\
     &< \int^{+\infty}_{p^U} (x -p^U) \de F_{\nu'}(x) = CS^P(\pi^U_{PS}).
    \end{align*}
    The first equality holds because $F_{\nu}(x) = F(x)$ for $x > p^*$, the second equality follows from $F_{\nu'} (x + d) = F(x)$ for $x \in X$, and the last inequality follows from $p^* > p^U$. 
\end{proof}

Next, we consider the Consumer-optimal comparison under the ex-ante metric. 
\begin{proof}[Proof of \cref{prop:consumeroptimal_vs_uniform}] \leavevmode

    \textbf{Proof of (i).} 
    By \cref{prop: exanteconsumeroptimal}, the price $p^*$ under optimal targeted advertising is strictly lower than $p^M$.
    Suppose the advertising plan is restricted to $\Pi^U$. For any $\pi^d \in \Pi^U$ with $d>0$, the monopolist's optimal price given $\pi^d$ is $\arg\max_{p \geq d} p \oF(p-d)$. The derivative of the objective is $\oF(p-d) - p f(p-d)$, and $\oF(p^M-d) - p^M f(p^M-d) > 0$ since $p^M = \oF(p^M) / f(p^M)$ and $F$ has increasing hazard rate. Therefore, the optimal price given $\pi^d$ is strictly higher than $p^M$, and hence the ex-ante CS-optimal advertising plan in $\Pi^U$ is $\pi^0$.

    \textbf{Proof of (ii).} By \cref{prop: exanteconsumeroptimal}, the ex-ante CS optimal plan is of an intermediate interval form with cutoffs $\underline p$ and $p^*$ such that $p^* < p^M$ and $p^*\oF(\underline p) = p^M\oF(p^M)$. Therefore,
    \begin{align*}
        CS^A(\pi^*_{CS}) &= \int^{+\infty}_{\underline p} (x-p^*) \de F(x) = \int^{+\infty}_{\underline p} x \de F(x) - p^*\oF(\underline p)\\
        &>  \int^{+\infty}_{p^M} x \de F(x) - p^M\oF(p^M) = CS^A(\pi^0),
    \end{align*}
    where the strict inequality holds because $\underline p < p^* < p^M$. By our proof of (i), $\pi^U_{CS^A} = \pi^0$ so the induced CS are the same.
\end{proof}

\begin{proof}[Proof of \cref{prop:joint}]\leavevmode

\textbf{Proof of (i).} Given the cost function $c(x,y) = c_d(|y-x|)$, the pointwise optimization problem $\max_{y\geq x} y - c_d(y-x)$ is solved by $ x + d$ where $d = (c_d')^{-1}(1)$. Hence, the total (ex-post) surplus after advertising net of cost is bounded above by $\int_{\Re_+^2} y - c(x,y) \de \pi \leq \int_{\Re_+} x+d - c_d(d) \de F(x) = \int_{\Re_+} x\de F(x) +d - c_d(d)$. This is also an upper bound for the producer surplus net of cost, and it is attained by first contracting to $\delta_{\int x \de F}$ and transporting all consumers by $d$.

\textbf{Proof of (ii).}
    Note that \cref{prop: exanteconsumeroptimal} on the ex-ante consumer-optimal advertising plan can be extended for any $\mu^* \preceq_{\text{cx}} \mu$ even if $\mu^*$ has atoms. The optimal advertising plan in the second stage is still of an intermediate interval form that increases valuations in $[\underline p, p^*)$ to $p^*$, but potentially randomizes among consumers with valuation $\underline p$ if there's an atom, so the producer is exactly indifferent between setting price $p^*$ and $\min \arg\max_{p \geq 0} p \oF_{\mu^*}(p)$. Let $\pi_{CS^A}^*$ be the $CS^A$-optimal plan in the advertising second stage, we have
    \begin{align*}
        CS^A(\pi_{CS^A}^*) = \int \mathbb{I}[y\geq p^*] (x - p^*) \de \pi_{CS^A}^*(x,y) \leq \mathbb{E}_{F^*}(x) - PS(\pi_{CS^A}^*) \leq \mathbb{E}_{F}(x) - p^{RS},
    \end{align*}
    where $p^{RS}$ denotes the seller's lowest profit in the information design first stage as in \cite{roesler2017buyer}, and the last inequality follows from definition of $p^{RS}$ and the fact that $PS(\pi_{CS^A}^*) = \max_{p \geq 0} p \oF_{\mu^*}(p)$, the seller's profit in the first stage. \cite{roesler2017buyer} show this upper bound $\mathbb{E}_{F}(x) - p^{RS}$ can be attained without the second stage advertising, so the result follows.
\end{proof}

\setstretch{0.8}
\small
\setlength{\bibsep}{0pt}
\bibliography{targeting}
\normalsize
\setstretch{1.2}

\clearpage

\appendix 
\titleformat{\section}
		{\normalsize\bfseries\center\scshape}     
         {Online Appendix \thesection:}
        {0.5em}
        {}
        []
\renewcommand{\thesection}{\Roman{section}}

\renewcommand{\theHsection}{\Roman{section}}
\setcounter{section}{0}

\setcounter{page}{1}
{\renewcommand{\thefootnote}{\fnsymbol{footnote}}
\begin{center}
    \large{\textbf{\MakeUppercase{Online Appendix to \\
    `Flexible Demand Manipulation'}}} \\
    \normalsize{{\textbf{YIFAN DAI\footnote[1]{Department of Economics, MIT, \href{mailto:yfdai@mit.edu}{yfdai@mit.edu}} \quad  ANDREW KOH\footnote[2]{Department of Economics, MIT, \href{mailto:ajkoh@mit.edu}{ajkoh@mit.edu}}}\\
    \small{FOR ONLINE PUBLICATION ONLY}}}
\end{center}}
Online Appendix \ref{app: existence} provides some omitted technical details in the proofs. 
Online Appendix \ref{appendix:regulation} details a model of regulating advertising maps and characterizes the consumer-optimal regulatory policy. Online Appendix \ref{appendix:het} details an extension of \cref{thrm:exante_general} to consumers who are heterogeneous across both valuations and susceptibility to manipulation. 
Online Appendix \ref{appendix:directional} characterizes optimal advertising plans without the directional constraint imposed in the main text. Online Appendix \ref{appendix:dixitnorma_comparison} connects our analysis to that of \cite{dixit1978advertising}.

\section{Technical Details} \label{app: existence} 

\begin{proof}[Proof of Lemma \ref{Lem: eta property}] We show each property in turn: 
\begin{enumerate}
    \item  \underline{Non-emptiness:} If $p\geq r^M$, the deterministic plan induced by $T(x) = p$ for every $x < p$ is in $\eta(p)$. 
    \item 
    \underline{Convexity}: For any $\pi,\pi' \in \eta(p)$. If there exists $p' > 0$ such that 
    \[p' \oF_{\alpha \pi + (1-\alpha)\pi'}(p') > p \oF_{\alpha \pi + (1-\alpha)\pi'}(p),\] 
    then either $p' \oF_{\pi }(p') > p \oF_{\pi }(p)$ or $p' \oF_{\pi' }(p') > p \oF_{\pi' }(p)$, contradicting the fact that $p$ is an optimal price under $\pi$ and $\pi'$.
    \item  \underline{Compactness and UHC:} Finally, we show that for any sequence $\{(p_n, \pi_n)\}$ in $\Gr(\eta)$ satisfies $p_n \to p$, then $\{\pi_n\}$ has a limit point in $\eta(p)$, which is then equivalent to $\eta$ being upper-hemicontinuous and compact-valued by Theorem 17.20 of \cite{aliprantis06}.

    Since $p_n \to p$, it is without loss to assume $p_n \leq \hat p$ for some $\hat{p} > p$. Let $\nu$ be a measure such that $\oF_{\nu}(x) = 1$ for $x < \hat{p}$ and $\oF_{\nu}(x) = \hat{p}/x$ otherwise. Since $p_n \leq \hat{p}$ is an optimal price given $\pi_n$, we have $\marg_y \pi_n \preceq_{\text{st}} \nu$ for every $n$. Therefore, the sequence $\{\pi_n\}$ is tight because for any $\e > 0$, there exists $a,b > 0$ such that $\mu(a,\infty) < \e/2$ and $\nu(b,\infty) < \e/2$ by Ulam's Theorem, hence $\pi_n(\Re_+^2 \setminus [0,a]\times[0,b]) \leq \mu(a,\infty) + \nu(b,\infty) < \e$ for any $n$. Since $\Re_+^2$ is Polish, Prokhorov's Theorem implies that $\{\pi_n\}$ has a convergent subsequence. 

    Taking subsequence if necessary, we may assume that $p_n \to p$ and $\pi_n\to \pi$, and we want to show $\pi \in \eta(p)$. By the Portmanteau Theorem, $\oF_{\pi}(x) \geq \limsup \oF_{\pi_n}(x)$ for every $x \geq 0$ and $\oF_{\pi_n}(x) \to \oF_{\pi}(x)$ for any $x$ at which $\oF_{\pi}$ is continuous. Assume to the contrary that there exists $p' > 0$ such that $p' \oF_{\pi}(p') > p\oF_{\pi}(p)$. Since there are at most countably many points where $\oF_{\pi}$ is discontinuous, there exists $\tilde{p}$ in a neighbourhood of $p'$ such that $\tilde{p}\oF_{\pi}(\tilde{p}) > p\oF_{\pi}(p)$ and $\oF_\pi$ is continuous at $\tilde p$. It follows that $\tilde{p}\oF_{\pi}(\tilde{p}) = \lim_n \tilde{p}\oF_{\pi_n}(\tilde{p}) \leq \limsup_n p_n\oF_{\pi_n}(p_n) \leq p \oF_{\pi}(p)$, which is a contradiction.\footnote{The former inequality holds by optimality of $p_n$. The latter holds because the product of non-negative upper semi-continuous function is upper-semi-continuous.}\qedhere
    \end{enumerate}
\end{proof}

\begin{proof}[Proof of \cref{lem: existence}]
We first show that any solution of \eqref{eq: designer_exante} or \eqref{eq: designer_expost} has finite total cost and is induced by some $T\in I$. For any $\pi \in \Pi$, since $\marg_y \pi$ has a finite mean, \cref{lem: tail} implies the corresponding $PS(\pi)$ and $CS(\pi)$ are finite (regardless of welfare measure). If $C(\pi)$ is not finite, then $\phi(PS(\pi),CS(\pi),C(\pi)) = -\infty$ since $\phi_3 < \ell <0$, so $\pi$ is suboptimal. Now consider any advertising plan with finite cost $\pi \in \Pi^*$, it is weakly dominated by $\pi'$ induced by $T = F_{\marg_y \pi}^{-1} \circ F \in I$. This is because $\marg_y \pi = \marg_y \pi'$ so $PS(\pi) = PS(\pi')$ and the price under $\pi$ and $\pi'$ are the same. Since $c$ is strictly submodular and $\mu$ is atomless, Theorems 4.3 and 4.7 of \citep{galichon2018optimal} imply that $\pi'$ weakly increases CS and reduces the cost, and the cost reduction is strict if $\pi \neq \pi'$. Therefore, any solution is induced by some $T \in I$.

Next, we show the existence of solutions for the ex-post problem, the ex-ante case can be established with similar arguments. We show that there exists a $(p,\pi)\in\Gr(\eta)$ that solves the auxiliary problem \eqref{eq: D' expost}. Dropping the incentive constraints in \eqref{eq: expost local}, we consider the following relaxed locally linearized problem at $(p,\pi) \in \Gr(\eta)$:
\begin{equation} \label{eq: local relaxed} 
        \max_{\pi'\in \Pi^*} \int \lambda_{p,\pi}(x,y) \de \pi'(x,y) \tag{relaxed $(p,\pi)$-local}
\end{equation}

\begin{lemma} \label{Lem: existence upper bound}
    There exists $M > 0$ such that for every $p > 4M$ and $\pi \in \Pi^*$ induced by some $T \in I$, if in the interior of $X$, $T(x) > \max\{p, x + M\}$ for some $x \geq p/2$, or $T(x) > x$ for some $x < p/2$, then $\pi$ is suboptimal in \eqref{eq: local relaxed}.
\end{lemma}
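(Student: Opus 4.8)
The plan is to exhibit, for a well-chosen $M$, an explicit ``clean-up'' of any candidate plan that strictly improves the linearized objective once the stated bounds are violated. First I would fix $M>0$ so that $\kappa_M := \inf_{x\geq 0} c_y(x,x+M) > 8L/|\ell|$; such an $M$ exists because, by the Inada condition of \cref{main assumption}, $c_{yy}>0$ and $c_y(x,x)=0$ force $\kappa_M>0$ for every $M>0$, while $\inf_x c_y(x,x+M)\to+\infty$ as $M\to\infty$. Crucially, $M$ depends only on $c,L,\ell$ — not on $p$ or $\pi$ — which is what makes the bound uniform. I record two facts used repeatedly: $c(x,y)>0$ whenever $y>x$; and, since $c_{yy}>0$, for $y\geq x+M$ one has $c_y(x,y)\geq c_y(x,x+M)\geq\kappa_M$, hence $c(x,y)\geq (y-x-M)\kappa_M$ when $y>x+M$.

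Given $p>4M$ and $\pi\in\Pi^*$ induced by $T\in I$, I would define a competitor $\pi'\in\Pi^*$ via the map $T'(x):=x$ for $x<p/2$ and $T'(x):=T(x)\wedge\max\{p,\,x+M\}$ for $x\geq p/2$. Since $T'\leq T$ pointwise and $T'(x)\geq x$, the plan $\pi'$ again lies in $\Pi^*$ (forward shifts preserved; mean and total cost only decrease). The goal is to show $\int\lambda_{p,\pi}\,\de\pi'\geq\int\lambda_{p,\pi}\,\de\pi$ always, with strict inequality whenever $T$ violates one of the stated bounds — the latter because $T$ is non-decreasing, so a single violating point $x_0$ forces a violation on an interval to the right of $x_0$ (e.g. $T(x_0)>x_0$ gives $T(x)>x$ on $[x_0,T(x_0))$), hence on a set of positive $\mu$-measure in the interior of $X$ since $f>0$ there. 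Both claims reduce to the pointwise comparison $\lambda_{p,\pi}(x,T'(x))\geq\lambda_{p,\pi}(x,T(x))$, strict when $T'(x)\neq T(x)$, which I verify in three cases.

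(i) $x<p/2$ and $x<T(x)<p$: the consumer does not purchase, so $\lambda_{p,\pi}(x,T(x))=\phi_3\,c(x,T(x))<0=\lambda_{p,\pi}(x,x)$. (ii) $x<p/2$ and $T(x)\geq p$: then $T(x)-x>p/2>2M$ and $x<p/2\leq T(x)/2$, so $T(x)-x>T(x)/2$ and thus $c(x,T(x))\geq(T(x)-x-M)\kappa_M>\tfrac14 T(x)\kappa_M$; using $\phi_1,\phi_2\leq L$, $\phi_3<\ell<0$, and $p\leq T(x)$,
\[
\lambda_{p,\pi}(x,T(x)) \;=\; \phi_1 p + \phi_2\bigl(T(x)-p\bigr) + \phi_3\, c(x,T(x)) \;<\; 2L\,T(x) - \tfrac{|\ell|\kappa_M}{4}\,T(x) \;<\; 0 \;=\; \lambda_{p,\pi}(x,x),
\]
by the choice of $M$. (iii) $x\geq p/2$ and $T(x)>\max\{p,x+M\}$: for every $y\in[\max\{p,x+M\},T(x)]$ one has $y\geq x+M$, so $\partial_y\lambda_{p,\pi}(x,y)=\phi_2+\phi_3\,c_y(x,y)\leq L-|\ell|\kappa_M<0$; hence $y\mapsto\lambda_{p,\pi}(x,y)$ is strictly decreasing there, giving $\lambda_{p,\pi}(x,\max\{p,x+M\})>\lambda_{p,\pi}(x,T(x))$. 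In each case $T'(x)$ realizes the larger value while $\max\{p,x+M\}\geq p$ keeps the consumer buying, so $\pi'$ strictly dominates $\pi$ in \eqref{eq: local relaxed}.

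The step I expect to be the crux is case (ii): the purchase benefit $\phi_2(T(x)-p)$ grows linearly in $T(x)$, so a crude lower bound on transportation cost is insufficient — one needs the geometric leverage from $x<p/2$ together with $p>4M$, which forces $T(x)-x$ both to exceed $2M$ and to be at least half of $T(x)$, so that the uniformly-bounded-below marginal cost $\kappa_M$ overwhelms the benefit. The only other point requiring care is that $\kappa_M$ is genuinely uniform in $x$ (hence in $p$ and $\pi$) — exactly the content of the Inada condition in \cref{main assumption}, which is why one fixed $M$ serves for all $p>4M$ and all $\pi\in\Pi^*$.
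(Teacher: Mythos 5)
Your proof is correct and follows essentially the same route as the paper: both choose $M$ uniformly from the Inada condition so that the marginal cost $c_y(x,x+M)$ dominates the bounded marginal benefit $L/|\ell|$, and both reduce suboptimality in the linearized problem to pointwise comparisons of $\lambda_{p,\pi}$ (the paper by showing the pointwise maximizer $\Lambda_{p,\pi}$ obeys the stated bounds, you by truncating $T$ to an explicit competitor — a presentational rather than substantive difference). If anything you are more careful than the paper about why a single violating point yields a positive-measure violation.
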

\begin{proof}
    By \cref{main assumption}, there exists $M > 0$ such that $y-x > M$ implies that $c_y(x,y) > 4L/|\ell|$. Take a pair $(p, \pi) \in \Gr(\eta)$ such that $p > 4M$ and $\pi$ is induced by some $T \in I$. For simplicity, denote 
    $\phi_i = \phi_i(\varphi(p,\pi))$ for $i = 1,2,3$. Recall that $\Gamma_{p,\pi}(x)$ is the unique maximizer of $\max_{y \geq x} \phi_2 y + \phi_3 c(x,y)$ and $\Gamma_\pi(x) \leq x + M$.

    For any $x \geq 0$, $\Lambda_{p,\pi}(x) \leq \max\{x, p, \Gamma_{p,\pi}(x)\} \leq  \max\{p, x+M\}$. For any $x < p/2$, $c(x,p) = \int_{x}^{p} c_y(x,t) \de t \geq \int_{3p/4}^{p} c_y(x,t) \de t > \tfrac{p}{4}\tfrac{4L}{|\ell|}> \tfrac{L}{|\ell|} p \geq \tfrac{\phi_2}{|\phi_3|}p$. It follows that $\lambda_{p,\pi}(x,p) = \phi_2 p + \phi_3 c(x,p) < 0 $ for $x < p/2$, and hence $\Lambda_{p,\pi}(x) = x$. 

    Without the incentive constraints, \eqref{eq: local relaxed} is solved by the distribution induced by the pointwise optimal $\Lambda_{p,\pi}$. Suppose there exists some $x$ in the interior of $X$ such that $T(x) > \Lambda_{p,\pi}(x)$, then $\pi$ is suboptimal in \eqref{eq: local relaxed}.
\end{proof}

\begin{lemma}
    There exists $\bar p > 0$ such that for any $p > \bar p$ and $\pi \in \eta(p)$, $\phi\circ\varphi(p,\pi) < \phi\circ\varphi(p^M,\pi^0)$.
\end{lemma}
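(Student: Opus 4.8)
The plan is to bound $\phi\circ\varphi(p,\pi)$ from above, uniformly over $\pi\in\eta(p)$, by a function of $p$ alone that diverges to $-\infty$; since $\phi\circ\varphi(p^M,\pi^0)$ is a fixed finite number, choosing $\bar p$ large enough then gives the claim. The engine is a lower bound on the total cost $C(\pi)$ that grows without bound in $p$: sustaining an arbitrarily high optimal price forces the designer to transport a non-negligible mass of consumers an arbitrarily long way, which by the Inada condition is arbitrarily expensive per unit.

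\emph{Cost lower bound.} Fix $\pi\in\eta(p)$. Since $p$ is an optimal price under $\pi$ it weakly beats $p^M$, and $\marg_y\pi\succeq_{\text{st}}\mu$, so $p\oF_\pi(p)\geq p^M\oF_\pi(p^M)\geq p^M\oF(p^M)=r^M$, whence $\oF_\pi(p)\geq r^M/p$. At most $\oF(p/2)$ mass of consumers has initial valuation above $p/2$, so $\pi(\{(x,y):x<p/2,\ y\geq p\})\geq\oF_\pi(p)-\oF(p/2)\geq r^M/p-\oF(p/2)$. Every such $(x,y)$ has $y-x>p/2$, so, writing $g(M):=\inf_{x\geq0}c_y(x,x+M)$ --- which is non-decreasing in $M$ because $c_{yy}>0$ and tends to $+\infty$ by \cref{main assumption}(ii)(c) --- we get $c(x,y)\geq c(x,x+p/2)\geq\int_{x+p/4}^{x+p/2}c_y(x,t)\,\de t\geq\tfrac{p}{4}g(p/4)$. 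Integrating against $\pi$,
\[
C(\pi)\ \geq\ \tfrac{p}{4}g(p/4)\left(\tfrac{r^M}{p}-\oF(p/2)\right)\ =\ \tfrac{g(p/4)}{4}\left(r^M-p\oF(p/2)\right).
\]
By \cref{lem: tail}, $p\oF(p/2)\to0$, so for all sufficiently large $p$, $C(\pi)\geq\tfrac{r^M}{8}g(p/4)$ for every $\pi\in\eta(p)$.

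\emph{Objective upper bound.} With $M$ as in \cref{main assumption}(ii), exactly as in \cref{Lem: existence upper bound}, one has the pointwise inequality $y-x\leq M+\tfrac{|\ell|}{4L}c(x,y)$; integrating it against $\pi$ bounds the mean of $\marg_y\pi$ by $\int x\,\de F(x)+M+\tfrac{|\ell|}{4L}C(\pi)$. The first two coordinates of $\varphi(p,\pi)$ sum to $\int_{\{y\geq p\}}y\,\de\pi$, which is at most this mean. Applying the fundamental theorem of calculus to $\phi$ along the segment joining $0$ to $\varphi(p,\pi)$ in $\Re^3_+$, and using $0\leq\phi_1,\phi_2\leq L$ and $\phi_3<\ell$,
\[
\phi\circ\varphi(p,\pi)\ \leq\ \phi(0,0,0)+L\left(\int x\,\de F(x)+M\right)-\tfrac{3|\ell|}{4}C(\pi).
\]

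\emph{Conclusion.} Combining the two bounds, $\phi\circ\varphi(p,\pi)\leq\phi(0,0,0)+L(\int x\,\de F(x)+M)-\tfrac{3|\ell|r^M}{32}g(p/4)$ for all large $p$ and all $\pi\in\eta(p)$. Since $g(p/4)\to+\infty$, the right-hand side tends to $-\infty$, so it lies strictly below the constant $\phi\circ\varphi(p^M,\pi^0)$ once $p$ exceeds some $\bar p$; this $\bar p$ works. I expect the cost lower bound to be the main obstacle: the objective bound is routine bookkeeping with the global derivative bounds, but the cost estimate genuinely needs the unbounded-marginal-cost clause \cref{main assumption}(ii)(c) (not just $c_{yy}>0$), without which one could imagine sustaining large prices cheaply by transporting many nearby consumers.
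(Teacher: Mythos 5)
Your proof is correct, but it runs on a different engine than the paper's. The paper's argument first invokes the local first-order necessary condition (so that \cref{Lem: existence upper bound} applies), which forces $T(x)=x$ below $p/2$ and $T(x)\leq\max\{p,x+M\}$ above; this caps $PS$ by $p\oF(p/2)$ and $CS^P$ by $\int_{p-M}^{\infty}(x+M-p)\de F(x)$, both of which vanish as $p\to\infty$ by \cref{lem: tail}, while the cost is merely bounded below by $0$. You instead leave the benefits essentially unbounded-in-principle and show that the \emph{cost} of any $\pi\in\eta(p)$ explodes: the incentive constraint forces $\oF_\pi(p)\geq r^M/p$, so mass at least $r^M/p-\oF(p/2)$ must travel distance $p/2$, each unit costing at least $\tfrac{p}{4}\inf_x c_y(x,x+p/4)$, and the Inada condition makes this diverge; the derivative bounds on $\phi$ then show the $PS+CS$ gains grow at most like $M+\tfrac{|\ell|}{4L}C(\pi)$ scaled by $L$, which is dominated by the $-|\ell|C(\pi)$ cost term. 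Each route buys something: the paper's is shorter given its surrounding machinery and bounds the sup over the larger set of locally optimal plans in \eqref{eq: local relaxed}; yours applies verbatim to \emph{every} $\pi\in\eta(p)$ (which is literally what the lemma asserts, with no detour through local optimality), and because your upper bound diverges to $-\infty$ rather than converging to $\phi(0,0,0)$, the strictness of the final inequality is automatic even in the degenerate case where $\phi$ is constant in its first two arguments. All the individual estimates check out: the mass bound $\pi(\{x<p/2,\,y\geq p\})\geq r^M/p-\oF(p/2)$, the per-unit cost bound via $c_{yy}>0$ and \cref{main assumption}(ii)(c), the pointwise inequality $y-x\leq M+\tfrac{|\ell|}{4L}c(x,y)$, and the fundamental-theorem-of-calculus step using $0\leq\phi_1,\phi_2\leq L$ and $\phi_3<\ell<0$.
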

\begin{proof}
    Similar arguments via Fr\'echet derivative as in Lemma \ref{Lem: expost linearization} show that a necessary condition for $\pi \in \Pi^*$ to solve the relaxed problem without incentive constraint $\sup_{\pi \in \Pi^*} \phi\circ\varphi(p,\pi)$ is that $\pi$ solves \eqref{eq: local relaxed}. It suffices to focus on $\pi \in \Pi^*$ induced by some $T\in I$ since $\lambda_{p,\pi}$ is strictly supermodular.
    
    For any $p > 4M$, 
    \[
    \sup_{\pi \in \Pi^*} \phi\circ \varphi(p,\pi) \leq \phi\left(p\oF(p/2), \int_{p-M}^{\infty}(x+M - p) \de F(x), 0\right),
    \]
    where the first and second entries of $\phi$ are bounded above by $p\oF(p/2)$ and $\int_{p-M}^{\infty}(x+M - p) \de F(x)$ by Lemma \ref{Lem: existence upper bound}, respectively, and the third entry is bounded below by $0$. 

    Note that $\oF$ is a decreasing function and $\int_0^{+\infty} \oF(x) \de x = \int_0^{\infty} x \de F(x) < +\infty$. \cref{lem: tail} implies that as $p\to +\infty$, $p \oF(p/2) \to 0$ and 
    \[
    \int_{p-M}^{\infty}(x+M - p) \de F(x) = \int_{p-M}^{\infty} \oF(x) \de x \to 0,
    \]
    where the equality follows from changing the order of integration. Since $\phi$ is monotone in the first two entries, there exists $\bar p > 4M$ such that $\phi\circ\varphi(p,\pi) < \phi\circ\varphi(p^M,\pi^0)$ for every $p > \bar p$ and $\pi \in \eta(p) \subseteq \Pi^*$.
\end{proof}

By the preceding lemma, it is without loss to restrict to $p \in [r^M,\bar p]$ in \eqref{eq: D' expost}:
\[
\sup_{(p,\pi)\in \Gr(\eta)} \phi\circ\varphi(p,\pi) = \sup_{p\in[r^M, \bar p]} \sup_{\pi\in \eta(p)} \phi\circ\varphi(p,\pi).
\]
Note that $\phi\circ\varphi$ is upper semi-continuous in $(p,\pi)$ and $\eta$ is upper hemi-continuous with non-empty compact values, Lemma 17.30 of \cite{aliprantis06} implies the value of the inner problem $\sup_{\pi\in \eta(p)} \phi\circ\varphi(p,\pi)$ is upper semi-continuous in $p$. Therefore, there exists $(p,\pi)\in\Gr(\eta)$ that solves \eqref{eq: D' expost}, and $\pi$ is also a solution of \eqref{eq: designer_expost}.
\end{proof}

\section{Regulation of Advertising Plans} \label{appendix:regulation}

\textbf{A model of regulation.} Recall our definition of a regulation in the main text as $R = \big(\mathcal{G}, \mathcal{L} \big)$ where $\mathcal{G}$ is a partition of $X$ and $\mathcal{L}: \mathcal{G} \to \Re_+ \cup \{+\infty\}$. $R$ induces a set of feasible transport maps $\Pi[R]$ in the natural way: 
    \[
    \Pi[R] := \bigg\{\pi \in \Pi: \substack{\text{\normalsize (i) $x,x' \in g \implies F_{\pi^x}(y+x) = F_{\pi^{x'}}(y+x')$ for every $y\geq 0$}\\
\text{\normalsize (ii) $\pi^x([x, x+\mathcal{L}(g)]) = 1$ for every $x \in g$}}
\bigg\}. 
    \]
where we use $\pi^x \in \Delta(\Re_+)$ to denote a version of the conditional distribution of $\pi$ given $x$. The following are several simple examples: 
\begin{itemize}[nosep,leftmargin = 2em]
    \item \textbf{Laissez-faire.} There are no restrictions on targeting or manipulation: 
    \[R:= \Big(\underbrace{\big\{\{x\}:x \in X\big\}}_{\substack{\text{No restriction}\\\text{on targeting}}}, \underbrace{\mathcal{L} \equiv +\infty}_{\substack{\text{No restriction}\\\text{on manipulation}}}\Big).\]
    and $\Pi[R] = \Pi$ i.e., all advertising plans are feasible. 
    \item \textbf{Bans on targeting.} Each consumer's valuation must be transported by the same amount: 
    \[R := \Big(\underbrace{\{X\}}_{\substack{\text{No }\\\text{targeting}}}, \underbrace{\mathcal{L} \equiv +\infty}_{\substack{\text{No restriction}\\\text{on manipulation}}}\Big)
    \]
    and $\Pi[R] = \Pi^U$.
    \item \textbf{Limits on manipulation.} Consumers can be flexibly targeted, but there is a limit on total manipulation: 
    \[R = \Big(\underbrace{\big\{\{x\}:x \in X\big\}}_{\substack{\text{No restriction}\\\text{on targeting}}}, \underbrace{\mathcal{L}\equiv D}_{\substack{\text{Cap on}\\\text{manipulation}}}\Big) \quad \text{where $D > 0$ is a constant}
    \] 
    and this induces the set of advertising plans 
    \[
    \Pi[R] = \Big\{\pi \in \Pi: \int_{y \leq x + D} \de\pi(x,y) = 1\Big\}.
    \]
    \item \textbf{Pigouvian taxes.} Suppose that the regulator, instead, imposes a Pigouvian tax on advertising, as has been recently proposed by \cite{romer2021taxing} and \cite{acemoglu2024online} among others. Translated into our framework, for any expenditure $C$ on advertising, the designer bears a cost of $\lambda \cdot C$ with $\lambda > 1$. Then, this is equivalent to working with the cost function $\lambda \cdot c(x,y)$, noting that this inherits the assumptions we imposed on $c$. Now, for a given choice of $\lambda$, let $\pi^*_{\lambda}$ denote the producer-optimal plan. From \cref{thrm:exante_general} and \cref{prop: produceroptimal} in the main text, we know this takes an intermediate interval structure. 
    \begin{claim}
        For any tax $\lambda > 1$, the producer-optimal advertising plan $\pi^*_{\lambda}$ can be implemented by choosing $R$ appropriately i.e., there exists $R$ such that 
        \[
        \pi^*_{\lambda} \in \text{argmax}_{\pi \in \Pi[R]} PS(\pi) - \int c(x,y) d\pi.
        \]
    \end{claim}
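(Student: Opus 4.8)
The plan is to choose $R$ so that it forces the monopolist's only profitable moves to be forward rearrangements of the interval $[\underline p_\lambda,p^*_\lambda)$ into $[\underline p_\lambda,p^*_\lambda]$, and then to show that within this restricted class the untaxed producer optimum is exactly $\pi^*_\lambda$. First I would record the structure of $\pi^*_\lambda$: the cost $\lambda c$ satisfies \cref{main assumption} (scaling by $\lambda>1$ preserves smoothness, the Inada conditions and strict submodularity), so \cref{thrm:exante_general} and \cref{prop: produceroptimal} apply with $c$ replaced by $\lambda c$, giving an intermediate-interval map with cutoffs $\underline p_\lambda\le p^*_\lambda$, where $p^*_\lambda$ maximizes $\rho_\lambda(p)=p\oF(\underline p_\lambda(p))-\lambda\int_{\underline p_\lambda(p)}^{p}c(x,p)\,\de F(x)$ with $\underline p_\lambda(p)=\inf\{x\in X:\lambda c(x,p)\le p\}$, and $\underline p_\lambda=\underline p_\lambda(p^*_\lambda)$. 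Writing $r_\lambda(s,p):=p\oF(s)-\lambda\int_s^p c(x,p)\,\de F(x)$, the sign of $\partial_s r_\lambda(s,p)=f(s)\,[\lambda c(s,p)-p]$ flips from $+$ to $-$ at $s=\underline p_\lambda(p)$ (using that $c(\cdot,p)$ is decreasing by submodularity), so $\max_{\min X\le s\le p}r_\lambda(s,p)=\rho_\lambda(p)$ and hence $r_\lambda(s,p)\le r_\lambda(\underline p_\lambda,p^*_\lambda)$ for all $\min X\le s\le p$.

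\textbf{The regulation.} I would take $R=(\mathcal G,\mathcal L)$ with $\mathcal G$ the singletons $\{x\}$ for $x\in[\underline p_\lambda,p^*_\lambda)$ together with the single block $B:=X\setminus[\underline p_\lambda,p^*_\lambda)$, $\mathcal L(\{x\})=p^*_\lambda-x$, and $\mathcal L(B)=0$. Then $\pi^*_\lambda\in\Pi[R]$, so it remains to prove $PS(\pi^*_\lambda)-\int c\,\de\pi^*_\lambda\ge PS(\pi)-\int c\,\de\pi$ for every $\pi\in\Pi[R]$. Set $V^*:=PS(\pi^*_\lambda)-\int c\,\de\pi^*_\lambda=p^*_\lambda\oF(\underline p_\lambda)-\int_{\underline p_\lambda}^{p^*_\lambda}c(x,p^*_\lambda)\,\de F(x)$, and note the identity $V^*=r_\lambda(\underline p_\lambda,p^*_\lambda)+(\lambda-1)\int_{\underline p_\lambda}^{p^*_\lambda}c(x,p^*_\lambda)\,\de F(x)$. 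Since $\pi^*_\lambda$ solves $\max_{\pi\in\Pi}PS(\pi)-\lambda\int c\,\de\pi$ and the null plan $\pi^0$ is feasible there, $PS(\pi^*_\lambda)-\lambda\int c\,\de\pi^*_\lambda\ge r^M$, and therefore $V^*\ge r^M$.

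\textbf{Reduction and case split.} As in the proof of \cref{lem: existence} (optimal transport with strictly submodular $c$ and atomless $\mu$), any $\pi\in\Pi[R]$ is weakly dominated in the monopolist's objective by its comonotone rearrangement, which one checks is induced by a non-decreasing $T$ with $T=\mathrm{id}$ off $[\underline p_\lambda,p^*_\lambda)$ and $T(x)\in[x,p^*_\lambda]$ on $[\underline p_\lambda,p^*_\lambda)$, and again lies in $\Pi[R]$. For such $T$, $\oF_T$ agrees with $\oF$ on $\Re_+\setminus(\underline p_\lambda,p^*_\lambda]$, so the induced plan $\pi_T$ has $PS(\pi_T)=\max\big\{\sup_{p\notin(\underline p_\lambda,p^*_\lambda]}p\oF(p),\ \sup_{p\in(\underline p_\lambda,p^*_\lambda]}p\oF_T(p)\big\}$, and the first term is $\le r^M$. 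If $PS(\pi_T)\le r^M$ the monopolist's payoff is $\le r^M\le V^*$. Otherwise the monopolist's lowest optimal price $p_T$ lies in $(\underline p_\lambda,p^*_\lambda]$; by monotonicity of $T$ the set $\{x:T(x)\ge p_T\}$ is an interval $[s,\infty)$ with $\underline p_\lambda\le s\le p_T$, so $\oF_T(p_T)=\oF(s)$, and for a.e.\ $x\in[s,p_T)$ we have $T(x)\ge p_T>x$, hence $c(x,T(x))\ge c(x,p_T)$. Thus the payoff is at most
\[
\psi(p_T,s):=p_T\oF(s)-\int_s^{p_T}c(x,p_T)\,\de F(x)=r_\lambda(s,p_T)+(\lambda-1)\int_s^{p_T}c(x,p_T)\,\de F(x).
\]
Now $r_\lambda(s,p_T)\le r_\lambda(\underline p_\lambda,p^*_\lambda)$ by the first step, and $\int_s^{p_T}c(x,p_T)\,\de F(x)\le\int_{\underline p_\lambda}^{p^*_\lambda}c(x,p^*_\lambda)\,\de F(x)$ since $[s,p_T]\subseteq[\underline p_\lambda,p^*_\lambda]$ and $c(x,p_T)\le c(x,p^*_\lambda)$; hence $\psi(p_T,s)\le V^*$. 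Either way the monopolist cannot beat $V^*$, so $\pi^*_\lambda$ is producer-optimal in $\Pi[R]$.

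\textbf{Main obstacle.} The subtle step is the last display. One cannot conclude $\psi(p_T,s)\le V^*$ by monotonicity of $\psi$ in each argument separately — $\psi(\cdot,s)$ need \emph{not} be increasing on $[s,p^*_\lambda]$, since when advertising is very costly the untaxed monopolist would want to set its price \emph{above} $p^*_\lambda$, which is precisely why the cap in $\mathcal L$ is needed. The decomposition $\psi=r_\lambda+(\lambda-1)\,(\text{cost integral})$, rewriting the untaxed objective as the taxed objective (whose maximizer is the known pair $(\underline p_\lambda,p^*_\lambda)$) plus a cost term that is itself monotone over the feasible region, is what makes the bound robust to the choice of $\lambda>1$. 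The other place needing care is the bookkeeping in the reduction: checking that the comonotone rearrangement of a plan in $\Pi[R]$ is again in $\Pi[R]$ (it fixes $B$ pointwise and keeps each interval consumer within $[x,p^*_\lambda]$) and that $\oF_T$ genuinely agrees with $\oF$ outside $(\underline p_\lambda,p^*_\lambda]$.
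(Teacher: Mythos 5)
Your proof is correct, and it fills in an argument the paper deliberately omits: the paper only remarks that ``the logic is quite similar to the proof of \cref{thrm:regulation}'' and stops there. Your choice of regulation is exactly the analogue of the paper's $R^*$ in \cref{thrm:regulation} (singleton targeting on $[\underline p_\lambda,p^*_\lambda)$ with cap $p^*_\lambda-x$, a single untargetable block elsewhere), so on that front you and the paper coincide. Where you genuinely diverge is the verification step. The paper's proof of \cref{thrm:regulation} works by contradiction and leans on the fact that its $\pi^*$ solves a relaxed $CS^A$-maximization problem: a profitable deviation with $p'\le p^*$ would contradict $CS$-optimality, and one with $p'>p^*$ forces $\pi'=\pi^0$ and contradicts $\sup_p p\oF(p)$. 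That template does not transfer literally to the tax claim, since $\pi^*_\lambda$ is characterized by $PS$-optimality under the inflated cost, not by optimality in a surplus-constrained problem. Your direct route --- reduce to comonotone maps in $\Pi[R]$, split on whether the induced optimal price escapes $(\underline p_\lambda,p^*_\lambda]$, and then bound the untaxed payoff by $\psi(p_T,s)=r_\lambda(s,p_T)+(\lambda-1)\int_s^{p_T}c(x,p_T)\,\de F(x)$, where the first term is maximized at $(\underline p_\lambda,p^*_\lambda)$ by \cref{prop: produceroptimal} applied to $\lambda c$ and the second is monotone over the feasible rectangle --- is self-contained, handles all $\lambda>1$ uniformly, and correctly isolates the one delicate point (that the untaxed objective alone is not single-peaked on the feasible set, which is precisely why the caps $\mathcal{L}$ are needed). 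The bookkeeping you flag (the comonotone rearrangement of a plan in $\Pi[R]$ stays in $\Pi[R]$ and preserves $\oF_T=\oF$ off $(\underline p_\lambda,p^*_\lambda]$) checks out given that $\mathcal{L}(B)=0$ pins the block pointwise and $\nu\succeq_{\text{st}}\mu$ forces $F_\nu^{-1}\circ F$ to fix $B$ and map $[\underline p_\lambda,p^*_\lambda)$ into $[x,p^*_\lambda]$. I see no gap.
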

    The logic is quite similar to the proof of \cref{thrm:regulation} below so we omit the proof. 
\end{itemize}

\begin{proposition}[Optimal regulation]\label{thrm:regulation}  There exists a deterministic plan $\pi^*$ with an intermediate interval structure characterized by $(\underline p,p^*)$ such that $(R^*,\pi^*)$ solves \eqref{eqn:regulation} where 
\begin{align*}
    R^* := \Big(\mathcal{G}^*, \mathcal{L}^*\Big)  \quad \text{with} \quad 
     &\mathcal{G}^* := \bigg\{\underbrace{\Big\{X \setminus [\underline p,p^*)\Big\}}_{\text{No targeting}}, 
        \underbrace{\Big\{ \{x\}: x \in [\underline p,p^*)\Big\}}_{\substack{\text{Precise targeting of} \\ \text{intermediate types}}} 
        \bigg\}\\ 
        &\mathcal{L}^*(x) := \underbrace{(p^* - x) \mathbb{I}(x\in [\underline p, p^*))}_{\text{Limit equal to distance $p^*-x$}}
    \end{align*}

and this strictly improves consumer welfare: $\ref{eqn:regulation} > CS(\pi^0)$. 
\end{proposition}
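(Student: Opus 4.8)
The plan is to show that the value of \eqref{eqn:regulation} equals the benchmark
\[
W \;:=\; \sup\Big\{CS^A(\pi)\;:\;\pi\in\Pi,\ PS(\pi)-C(\pi)\ge r^M\Big\},
\]
to exhibit $(R^*,\pi^*)$ attaining it, and then to show $W>CS^A(\pi^0)$. \emph{Reduction.} First I would note that the null plan $\pi^0$ (the identity map) lies in $\Pi[R]$ for \emph{every} feasible $R$, since it respects any partition and any non-negative manipulation cap; hence any monopolist best response $\pi$ to any $R$ has $PS(\pi)-C(\pi)\ge PS(\pi^0)-C(\pi^0)=r^M$, so \eqref{eqn:regulation} $\le W$. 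By the argument underlying \cref{lem: existence} the supremum defining $W$ is attained on plans induced by some $T\in I$, and \cref{lemma:domination_lemma} replaces any such $\pi$ by an intermediate-interval plan $\pi'$ with the same $PS$, weakly larger $CS^A$ and weakly smaller $C$, hence still feasible. Thus $W$ is the value of
\[
\max_{\min X\le\underline p\le p^*}\ \int_{\underline p}^{+\infty}(x-p^*)\,\de F(x)\quad\text{s.t.}\quad p^*\oF(\underline p)-\int_{\underline p}^{p^*}c(x,p^*)\,\de F(x)\ge r^M,
\]
which a routine compactness argument (the constraint confines $p^*$ to a bounded set once $CS^A$ falls below $CS^A(\pi^0)$, and $\oF,c$ are continuous) shows is attained at some $(\underline p,p^*)$; at the optimum the constraint binds—else one could shrink $\underline p$—which is precisely the footnote's indifference condition. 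Let $\pi^*$ be the attached intermediate-interval plan. This already identifies the candidate $(\underline p,p^*,\pi^*)$ in the statement.

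\emph{$R^*$ implements $\pi^*$.} With $R^*$ as in the statement, clearly $\pi^*\in\Pi[R^*]$, and against $\pi^*$ the monopolist's revenue is $p\oF(p)$ for $p\le\underline p$ or $p>p^*$ and $p\oF(\underline p)$ for $p\in[\underline p,p^*]$; since $p\oF(p)\le r^M$ everywhere and $p^*\oF(\underline p)=r^M+C(\pi^*)\ge r^M$, the lowest optimal price is $p^*$ and $PS(\pi^*)-C(\pi^*)=r^M$. The crux is to show $PS(\pi)-C(\pi)\le r^M$ for \emph{every} $\pi\in\Pi[R^*]$. Under $R^*$ only types in $[\underline p,p^*)$ move, into $[\underline p,p^*]$; if the firm prices at $p\notin[\underline p,p^*]$ then $\oF_\pi(p)=\oF(p)$ and the bound is immediate. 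If $p\in[\underline p,p^*]$, the only extra buyers beyond $\oF(p)$ are types $x\in[\underline p,p)$ sent to $\ge p$; bounding their cost contribution below by $c(x,p)$, and using $c(x,p)\le p$ on the relevant range (which follows from $c(\underline p,p^*)\le p^*$ plus monotonicity and convexity of $c(x,\cdot)$), one gets $PS(\pi)-C(\pi)\le g(p):=p\oF(\underline p)-\int_{\underline p}^{p}c(x,p)\,\de F(x)$. Since $c_{yy}>0$, $g$ is concave, and the first-order conditions of the reduced program—with a strictly positive multiplier $\lambda=(p^*-\underline p)/(p^*-c(\underline p,p^*))$, which also yields the inequality $c(\underline p,p^*)<p^*$ used above—give $g'(p^*)=\oF(\underline p)/\lambda>0$. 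Hence $g$ is increasing on $[\underline p,p^*]$, so $g(p)\le g(p^*)=r^M$. Therefore $\sup_{\pi\in\Pi[R^*]}\big(PS(\pi)-C(\pi)\big)=r^M$ and $\pi^*$ attains it, so $\pi^*$ is a monopolist best response to $R^*$ (the only one giving $CS^A>CS^A(\pi^0)$); thus $(R^*,\pi^*)$ is feasible in \eqref{eqn:regulation} and delivers $CS^A(\pi^*)=W$, which together with the upper bound gives \eqref{eqn:regulation} $=W$ and the first claim.

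\emph{Strict improvement.} Since $\pi^0$ is feasible for the reduced program, $W\ge CS^A(\pi^0)$. For strictness I would perturb: fix small $\e>0$, set $p^*=p^M-\e$, and pick a threshold $\underline p_\e$ just below $p^M-\e$ so the profit constraint holds. By the Inada condition $c_y(x,x)=0$ we have $c(x,p)=O((p-x)^2)$, so the shored-up interval has length $O(\e)$ and advertising cost $O(\e^3)$; using the monopoly first-order condition $\oF(p^M)=p^Mf(p^M)$, the resulting change in consumer surplus, $\int_{\underline p_\e}^{p^M}x\,\de F(x)-(p^M-\e)\oF(\underline p_\e)+r^M$, equals $p^Mf(p^M)\,\e+O(\e^2)>0$ for $\e$ small. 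Hence $W>CS^A(\pi^0)$; this also rules out the degenerate maximizer $\underline p=p^*$ (i.e.\ $\pi^*=\pi^0$) implicitly excluded when invoking the first-order conditions above.

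The main obstacle is the crux of the second step—ruling out that the monopolist exploits the targeting latitude that $R^*$ grants to shore up demand for a price strictly below $p^*$. Everything there rests on (i) reducing the firm's best deviation payoff to the univariate function $g$, (ii) concavity of $g$ from the curvature of $c$, and (iii) the first-order characterization of $(\underline p,p^*)$, which simultaneously delivers $g'(p^*)>0$ and the bound $c(\underline p,p^*)<p^*$; getting the regularity of these first-order conditions and the equality-case bookkeeping right is the delicate part.
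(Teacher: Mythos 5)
Your proposal is correct and shares the paper's overall architecture: both proofs relax \eqref{eqn:regulation} to the problem of maximizing $CS^A$ over all of $\Pi$ subject to the firm's \emph{net} profit floor $PS(\pi)-C(\pi)\ge r^M$ (your $W$ is exactly the paper's \eqref{eqn:relaxed}), both justify the upper bound by observing that the null plan survives any regulation, both reduce the relaxed problem to intermediate-interval plans with a binding constraint via \cref{lem: existence} and \cref{lemma:domination_lemma}-type domination, and both then verify that $R^*$ implements $\pi^*$. Where you diverge is the no-deviation verification: the paper argues by contradiction that any strictly profitable deviation $(\pi',p')$ under $R^*$ with $p'\le p^*$ would itself be feasible in \eqref{eqn:relaxed} and deliver strictly higher $CS^A$, contradicting the optimality of $\pi^*$ there; you instead verify directly that the firm's best deviation payoff at price $p$ is bounded by $g(p)=p\oF(\underline p)-\int_{\underline p}^{p}c(x,p)\,\de F(x)$, and that $g$ is concave with $g'(p^*)>0$, hence increasing up to $p^*$ with $g(p^*)=r^M$. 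Your route is more computational and makes the economics of the deviation explicit, but it leans on interior first-order conditions and a finite positive multiplier (you flag this yourself); the paper's contradiction argument sidesteps that regularity entirely, and you could harden your step by noting that $g(p)>r^M$ for some $p<p^*$ would directly produce a \eqref{eqn:relaxed}-feasible plan with strictly higher $CS^A$. Two further points in your favor: your bound $c(x,p)\le p$ on the relevant range via convexity of $c(x,\cdot)$ and $c(\underline p,p^*)\le p^*$ is a detail the paper glosses over, and your perturbation argument for the strict inequality $\eqref{eqn:regulation}>CS(\pi^0)$ supplies a step that the paper's written proof omits altogether (it cannot be imported from \cref{prop:consumeroptimal_vs_uniform} because there the advertising cost sits in the objective rather than inside the profit constraint). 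Your stated orders in that perturbation (interval length $O(\e)$, cost $O(\e^3)$) are loose---the binding constraint actually forces $p^*-\underline p_\e=o(\e)$---but they are valid upper bounds and the leading term $\e\,\oF(p^M)>0$ is right.
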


\begin{proof}[Proof of \cref{thrm:regulation}] Define the following relaxed problem: 
\begin{align*}
    \sup_{\pi \in \Pi} & \,CS^A(\pi) \\
    &\text{s.t. } \Big(PS(\pi) - PS(\pi^0)\Big) \geq \int c(x,y) d\pi.
    \tag{REL} \label{eqn:relaxed}    
\end{align*}
    Observe $\ref{eqn:relaxed} \geq \ref{eqn:regulation}$ since  for any $(R, \pi \in \Pi[R])$ feasible in \eqref{eqn:regulation}, $PS(\pi) - \int c \de \pi \geq PS(\pi^0) - \int c \de \pi^0 = PS(\pi^0)$ and hence $\pi$ is feasible in \eqref{eqn:relaxed}.
    
    We start with the following helpful result: 
    \begin{lemma}\label{lemma:regulation_relaxed}
        The solution to \eqref{eqn:relaxed} is deterministic and monotone with an intermediate interval structure characterized by $(\underline p,p^*)$.
    \end{lemma} 
    \begin{proof}
       First, observe that it is without loss to restrict to comonotone deterministic maps. This holds because fixing the marginals and changing to assortative matching does not affect the objective value and the IC constraint (within the definition of PS), so the monopolist's IR constraint is slackened (the explicit constraint written).

    Those maps are induced by increasing functions $T$ with $T(x) \geq x$. For any feasible $T$ with associated monopolist optimal price $p$, consider $T'$ such that $T'(x) = x$ if $x \geq p$ or $x < p$ with $T(x) < p$, and $T'(x) = p$ if $x < p$ with $T(x) \geq p$. The induced demand $\oF_{T'}(p) = \oF_{T}(p)$ while $\oF_{T'}(p') \leq \oF_{T}(p')$ for every $p' \geq 0$ so the IC is slackened. The IR is also slacked since the total cost decreases. The CS weakly increases since the cost of manipulation goes down. Therefore, it is without loss to consider $T$ such that $T(x) = x$ if $x \notin [\underline p, p)$ and $T(x) = p$ otherwise.

    Similar arguments as in Online Appendix \ref{app: existence} then show that it is without loss to restrict to a compact set of $p$, so there exists a solution of \eqref{eqn:relaxed}. Note that this construction proof also shows $PS(\pi^*) - \int c \de \pi^* = PS(\pi^0)$ at the optimal solution $\pi^*$.
    \end{proof}

    From \cref{lemma:regulation_relaxed}, there is an intermediate interval plan solving \eqref{eqn:relaxed}. Let this be $\pi^*$ and $[\underline p, p^*)$ is the corresponding interval. Now consider $R^*$ constructed in the statement of \cref{thrm:regulation}, with the same $\underline p$ and $p^*$. It remains to show that $(\pi^*, p^*)$ is a solution to the problem  
    \[
    \sup_{\pi \in \Pi[R^*], p \geq 0} p \oF_{\pi}(p) - \int c(x,y) d\pi ,
    \]
    which would imply $\eqref{eqn:relaxed}$ and $ \eqref{eqn:regulation}$ have the same value. Assume to the contrary that it is not and that it is strictly dominated by $(\pi', p') \neq (\pi^*, p^*)$.
    
    If $p' \leq p^*$, then the corresponding optimal map $\pi' \in \Pi[R^*]$ transports some types in $[\underline p', p')$ exactly to $p'$ where $\underline p' \geq \underline p$. But then ex-ante CS must strictly improve under $(\pi', p')$ since either (i) fewer types are being manipulated; or (ii) $p' < p^*$. This contradicts the optimality of $\pi^*$ in \eqref{eqn:relaxed}. 

    If $p' > p^*$, then no $\pi \in \Pi[R^*]$ can improve demand given $p'$, so the corresponding optimal advertising plan must be $\pi' = \pi^0$. Since $(\pi^*, p^*)$ is strictly dominated by $(\pi',p')$, we have 
        \[
        p' \oF(p') > p^* \oF(p^*) - \int c\de  \pi^* = \sup_p p \oF(p),
        \]
    which is a contradiction. In either case there is a contradiction which implies $(\pi^*, p^*)$ is a solution.
\end{proof}

\section{Heterogeneous susceptibility}\label{appendix:het}
In the main text, we assumed for simplicity that consumer were heterogeneous only along their valuations. Implicit in this assumption is that consumers with the same valuations have identical manipulation costs. We now introduce heterogeneity along this dimension, which captures how susceptible to manipulation different consumers are. 

\textbf{Augmenting the main model.} Let $S := \mathbb{R}$ and the augmented type space is $\widehat X := X \times S$. Consumers are distributed according to $\widehat \mu \in \Delta(\widehat X)$ and, consistent with notation in the main text, we let $\mu \coloneqq \text{marg}_{X} \widehat \mu$ denote the marginal over valuations. The cost of transporting type $(x,s)$ to $(y,s)$ is $c_s(x,y)$ where we impose \cref{main assumption} on $c_s$ for each $s \in S$.  We further impose the assumption that higher $s$ corresponds to a lower cost: 
\[
s \geq s' \implies c_s(x,y) \leq c_{s'}(x,y) \quad \text{for all $x,y \in X$.}
\]

An advertising plan $\pi \in \Delta( \widehat X \times \widehat X)$ is admissible if (i) it concentrates on the set 
\[
\widehat{\mathbb{H}} := \Big\{\Big((x,s), (y,s)\Big) \in \widehat X^2: y \geq x \Big\};
\]
(ii) the first marginal of $\pi$ coincides with the initial distribution $\widehat\mu$; and (iii) the distribution over valuations on the second marginal has finite mean. Let $\widehat \Pi$ be the set of all admissible transport plans. 

The designer's problem in the ex-ante case with heterogeneous susceptibility to manipulation is 
\[
\sup_{\widehat \pi \in \widehat \Pi } \alpha CS^A(\widehat\pi) + (1-\alpha) PS(\widehat\pi) - \int c_s(x,y) \de\widehat \pi. \tag{H} \label{eqn:het}
\]
 
Say that the transport map $T$ has \emph{monotone intermediate interval} structure if there exists some decreasing function $\gamma: S \to \mathbb{R}$ and target price $p$ such that: 
\[
T_{\gamma,p}(x,s) := 
\begin{cases}
     (x,s) \quad &\text{if $x \leq \gamma(s)$}  \\ 
     (p,s) \quad &\text{if $x \in [\gamma(s),p)$}\\
     (x,s) \quad &\text{if $x \geq p$}
\end{cases}
\]

\begin{manualtheorem}{1B} \label{thrm:het}
    \eqref{eqn:het} is solved by an advertising plan $\pi \in \widehat{\Pi}$ with a monotone intermediate interval structure.
\end{manualtheorem}

\begin{figure}[H]
\begin{minipage}[t]{0.5\linewidth}
\cref{thrm:het} is illustrated in \cref{fig:multidim}: for the target price $p^*$, the curve $\gamma(s)$ denotes the boundary. The proof of \cref{thrm:het} proceeds quite similarly to that of \cref{thrm:exante_general}; we omit it. The decreasing cutoff $\gamma(s)$ follows from the monotonicity in cost: If a consumer with type $(x, s')$ is moved to $(p, s')$, then any type $(x, s)$ should also be moved to $(p, s)$ since it leads to the same changes in CS and PS, but the cost is lower.
\end{minipage}%
\hfill%
\begin{minipage}[t]{0.5\textwidth}\vspace{0pt}
\vspace{-1.5em}
\centering
{\includegraphics[width=0.8\textwidth]{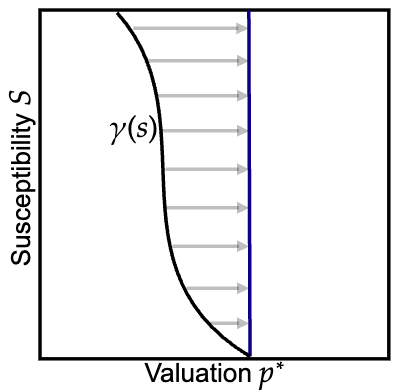}}
    \caption{}\label{fig:multidim}
\end{minipage}
\end{figure}

\section{Relaxing the directional constraint}\label{appendix:directional}
In the main text we imposed a directional constraint on advertising plans by assuming that $\pi(\mathbb{H}) = 1$ so the final distribution first-order stochastically dominates the initial distribution. This was motivated by the observation that advertising is, in practice, typically positive (perhaps with the exception of political advertising). 

This directional constraint can be relaxed completely. To formalize this, let
\[
\widetilde \Pi := \Big\{\pi \in \Delta(\Re_+^2): \marg_x \pi = \mu,\, \marg_y \pi \text{ has a finite mean}\Big\};
\]
as before we will use the notation $\nu$ to denote the final distribution, though now it is not necessarily ordered with respect to $\mu$. 

We remain interested in solutions to the problem of maximizing the general objective function $\phi$, with the difference that the designer, no longer restricted to directional advertising plans, chooses from the set $\widetilde \Pi \supseteq \Pi$: 
\begin{align*} \label{eq: U_designer_exante}
    \sup_{\pi \in \widetilde \Pi} \phi \Big( PS(\pi), CS^A(\pi), \int c d\pi \Big)
    \tag{U: ex-ante} 
\end{align*}
and 
\begin{align*} \label{eq: U_designer_expost}
    \sup_{\pi \in  \widetilde \Pi} \phi \Big( PS(\pi), CS^P(\pi), \int c d\pi \Big).
    \tag{U: ex-post} 
\end{align*}

\subsection{Solving the unconstrained problem under ex-ante welfare} For the target price $p^*$ and quantity $q^*$, say that the plan $\pi^*_{(p^*,q^*)}$ has a \emph{interval twist} structure if it is induced by the following map:
\begin{align*}
        T(x) = \begin{cases}
        x \quad &\text{if $x < \oF^{-1}(q^*)$} \\
        p^* \vee \left(x \wedge \tfrac{p^*q^*}{\oF(x)}\right) \quad &\text{if $x \geq \oF^{-1}(q^*)$.}
        \end{cases}
    \end{align*}  

Target distributions associated with \emph{interval twist} structures are depicted in \cref{fig:twist_interval}. Note that the key difference from the intermediate interval structure we saw in the main text is the presence of \emph{backward} comonotone shifts which can be employed to implement the price $p^*$ at a lower transportation cost, as well as lower manipulation cost.  Thus, relaxing the directional constraint can---for some demand curves---strictly improve the designer's value under the ex-ante consumer-optimal advertising plan.

\begin{manualtheorem}{1U} \label{thrm:exante_U}
    \eqref{eq: U_designer_exante} is solved by an advertising plan with a {interval twist} structure. Conversely, every solution to \eqref{eq: U_designer_exante} has an interval twist structure. 
\end{manualtheorem}

The form of optimal solutions can be shown via arguments similar to \cref{lemma:domination_lemma} in Appendix \ref{appendix:proofs_main}.

\begin{figure}[H]
\centering
\caption{Interval twist advertising plans
}
    {\includegraphics[width=0.6\textwidth]{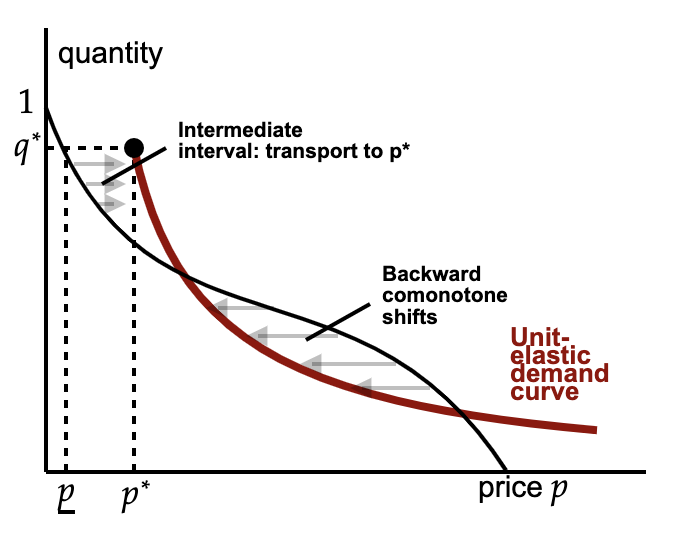}}
    \label{fig:twist_interval}
\end{figure}

\subsection{Solving the unconstrained problem under ex-post welfare}
We say a plan $\pi^* \in \Delta(\Re_+^2)$ has a \emph{constrained-greedy twist} structure for the target price and quantity pair $(p^*,q^*)$ if $p^* = \underline B(\pi^*)$ and it is induced by the following map:
\begin{align*}
        T(x) = \begin{cases}
        x \quad &\text{if $x < \oF^{-1}(q^*)$} \\
        p^* \vee \left(\Lambda_{\pi^*}(x) \wedge \tfrac{p^*q^*}{\oF(x)}  \right)\quad &\text{if $x > \oF^{-1}(q^*)$,}
        \end{cases}
    \end{align*}  
where $\Lambda_{\pi^*}$ is the locally-greedy map at $\pi^*$ as defined in \cref{sec:complements} of the main text.

Target distributions associated with \emph{interval twist} structures are depicted in \cref{fig:twist_constrainedgreedy}. Note that the key difference from the constrained-greedy plan we saw in the main text is once again the presence of \emph{backward} comonotone shifts which can be employed to implement the price $p^*$ by transporting consumers \emph{backward} to the lower-envelope (defined identically to the main text). This is reflected in the definition of constrained-greedy twist structure as we no longer require $p^*q^* \geq r^M$, so the induced unit-elastic frontier might not dominate the original demand. Once again, relaxing the directional constraint can---for some demand curves---strictly improve the designer's value under the ex-ante consumer-optimal advertising plan.

\begin{manualtheorem}{2U} \label{thrm:expost_U}
    \eqref{eq: U_designer_expost} is solved by an advertising plan with a {interval twist} structure. Conversely, every solution to \eqref{eq: U_designer_expost} has a constrained-greedy twist structure. 
\end{manualtheorem}

\begin{figure}[H]
\centering
\caption{Constrained-greedy twist advertising plans}
    {\includegraphics[width=0.6\textwidth]{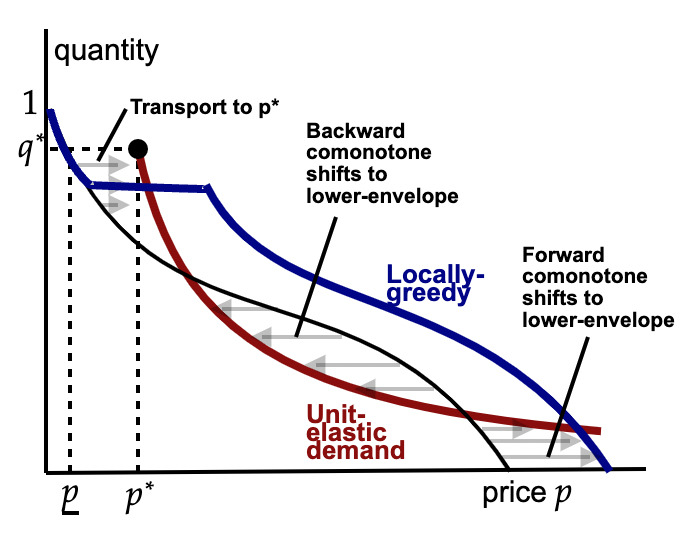}}
    \label{fig:twist_constrainedgreedy}
\end{figure}

\cref{thrm:expost_U} can be shown by adjusting the arguments in Appendix \ref{appendix: proof of expost}. It is without loss to consider $\pi$ induced by some increasing function $T$. Fixing a pair $(p, \pi)$ where $p$ is an optimal price under $\pi$, we may again translate the design problem to a sequence of pointwise problems $\max_{T(x)\geq 0} \lambda_{p,\pi}(x,T(x))$, but without directional constraint in the final valuation $T(x)\geq x$ as in the main text. Let $q = \oF_T(p)$, the final valuation $T(x) \in [p, pq/\oF(x)]$ for any type $x$ that has been shifted. With the directional constraint, $x \leq pq/\oF(x)$ and hence $\Lambda_{p,\pi}(x) \wedge \tfrac{pq}{\oF(x)}$ is the solution of the pointwise problem subject to monopolist's IC. Now, without the directional constraint, if $x > pq/\oF(x)$, then the optimal $T(x) = pq/\oF(x)$ because $\Lambda_{p,\pi}(x) \geq x$ and $\lambda_{p,\pi}(x,y)$ is increasing in $y$ when $y \leq x$. Therefore, $\Lambda_{p,\pi}(x) \wedge \tfrac{pq}{\oF(x)}$ still describes the solution of the pointwise problem subject to monopolist's IC.

\section{Connection to \cite{dixit1978advertising}} \label{appendix:dixitnorma_comparison}
In \cref{sec:comparison}, we compared targeted advertising and uniform advertising, where consumers' valuations are transported in an additive way. There are, of course, alternative ways of quantifying the extent to which valuations increase. For instance, in the seminal paper of \cite{dixit1978advertising}, they consider multiplicative increases in valuations. It will turn out that this is simply a rescaling. In particular, suppose that the cost $c$ is of the form $c(x,y) = \psi(y/x)$ for some twice continuously differentiable and strictly convex function $\psi$ with $\psi(1) = \psi'(1) = 0$. This fulfills the submodularity assumption in the main text hence, as long as optimal advertising plans exist, Theorems \ref{thrm:exante_general} and \ref{thrm:expost_general} continue to obtain.

\paragraph{Uniform advertising under multiplicative shifts.} 
It will also be instructive to explicitly compare these kinds of uniform multiplicative shifts in \cite{dixit1978advertising} with the additive uniform shifts studied in the main text.

Consider a model of uniform advertising where the designer chooses an advertising level $z \geq 0$, and consumers with initial valuation $x$ have final valuation $A(z) x$, where $A$ is a strictly increasing function on $\Re$ with $A(0) = 1$. That is, the demand under advertising level $z$ is $\oF_z(p) \coloneqq \oF(p/A(z))$ for any $z,p\geq 0$. Given this shifted demand, the monopolist's optimal posted price solves $\max_{p\geq 0} p \oF_z(p)$, which is $p^D = A(z) p^M$.

The designer's problem under PS-optimal objective is
\begin{align*}
    \sup_{p\geq 0, z\geq 0} p\oF_z(p) - \int c(x, A(z)x) \de F(x).
\end{align*}
When the original demand curve $\oF_{z = 0}$ is parametrized by an isoelastic demand curve, this is in effect the model of \cite{dixit1978advertising}. Substituting the optimal price $p^D = A(z) p^M$, the designer's problem of choosing the optimal advertising level can be rewritten as 
\begin{align*}
    \sup_{z \geq 0} A(z) p^M \oF(p^M) - \psi(A(z)).
\end{align*}
The first-order condition leads to $A'(z) (r^M - \psi'(A(z)))=0$, so the optimal $z^* = A^{-1}\circ(\psi')^{-1}(r^M)$. In \cref{sec:comparison}, we showed that the price under additive uniform advertising is lower than that under targeted advertising $p^M \leq p^U < p^*$ when the density $f$ is decreasing. We now extend this comparison to the multiplicative uniform advertising. Assume that there exists an optimal advertising plan under full flexibility (targeted advertising), a similar first-order argument as in \cref{sec:comparison} shows that the optimal intermediate interval form $(\underline p, p^*)$ satisfies 
\[
\oF(\underline p) = \int_{\underline p}^{p*} c_y(x,p^*) \de F(x)
\]
and there's no boundary solution since the cost of moving type $0$ is infinity, $p^* = c(\underline p , p^*)$. Therefore, 
\[
\oF\left(\tfrac{p^*}{\psi^{-1}(p^*)}\right) = \int_{p^*/\psi^{-1}(p^*)}^{p*} \psi'\left(\tfrac{p^*}{x}\right) \tfrac{1}{x}\de F(x).
\]

\begin{lemma}
    Suppose the initial type is uniform on $[0,1]$ and $\psi(t) = a(t-1)^2$ for some $a > 0$, then $p^M < p^D < p^*$. 
\end{lemma}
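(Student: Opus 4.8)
The plan is to compute all three prices explicitly in the uniform $[0,1]$, quadratic-ratio-cost setting and then verify the chain of inequalities directly. First I would fix notation: with $F$ the uniform CDF on $[0,1]$ we have $\overline F(p) = 1-p$ and $f \equiv 1$, so the no-advertising price solves $p = \overline F(p)/f(p) = 1-p$, giving $p^M = 1/2$ (note the density is decreasing, so the hazard rate is increasing and \cref{prop:produceroptimal_vs_uniform} machinery applies). For the multiplicative uniform case, using the closed form derived just above the lemma, the optimal advertising level satisfies $\psi'(A(z^*)) = r^M = p^M \overline F(p^M) = 1/4$; with $\psi(t) = a(t-1)^2$ this is $2a(A(z^*)-1) = 1/4$, so $A(z^*) = 1 + \tfrac{1}{8a}$ and hence $p^D = A(z^*) p^M = \tfrac12\bigl(1 + \tfrac{1}{8a}\bigr) = \tfrac12 + \tfrac{1}{16a}$. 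Since $a>0$, this immediately gives $p^D > p^M$.

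Next I would pin down $p^*$ from the two equations characterizing the intermediate-interval producer-optimal plan under multiplicative costs, namely
\[
\overline F\!\left(\tfrac{p^*}{\psi^{-1}(p^*)}\right) = \int_{p^*/\psi^{-1}(p^*)}^{p^*} \psi'\!\left(\tfrac{p^*}{x}\right)\tfrac1x\,\de F(x), \qquad p^* = c(\underline p, p^*) = \psi\!\left(\tfrac{p^*}{\underline p}\right).
\]
Writing $r := p^*/\underline p > 1$ for the multiplicative factor applied to the marginal transported consumer, the second equation reads $p^* = a(r-1)^2$, i.e. $\underline p = p^*/r$ and $p^* = a(r-1)^2$. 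Substituting into the first equation with $\de F(x) = \de x$ on $[0,1]$ and changing variables $u = p^*/x$ (so $x = p^*/u$, $\de x = -p^*/u^2 \de u$, and $x$ ranging over $[\underline p, p^*]$ corresponds to $u$ ranging over $[1, r]$) turns the right side into $\int_1^r \psi'(u)\,\tfrac{u}{p^*}\cdot \tfrac{p^*}{u^2}\,\de u = \int_1^r \tfrac{\psi'(u)}{u}\,\de u = \int_1^r \tfrac{2a(u-1)}{u}\,\de u = 2a\bigl(r - 1 - \ln r\bigr)$; the left side is $1 - \underline p = 1 - p^*/r = 1 - a(r-1)^2/r$. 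This yields a single scalar equation in $r$,
\[
1 - \frac{a(r-1)^2}{r} = 2a\bigl(r - 1 - \ln r\bigr),
\]
from which $p^* = a(r-1)^2$ is recovered. I would show this equation has a solution $r>1$ (LHS is $1$ at $r=1$ and decreasing in $r$ initially while RHS is $0$ at $r=1$ and increasing, and using the Inada/coercivity of $\psi$ the LHS eventually dominates the feasibility region — existence of the optimal plan is assumed, so I really just need to identify the solution and its comparative statics).

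The final and main step is the inequality $p^D < p^*$, i.e. $\tfrac12 + \tfrac{1}{16a} < a(r-1)^2$ where $r$ solves the displayed scalar equation. The cleanest route is to bound $r$ from below: from the scalar equation, since $r - 1 - \ln r > 0$ for $r > 1$, we get $1 - a(r-1)^2/r > 0$, so $p^* = a(r-1)^2 < r$; more usefully, I want a lower bound on $a(r-1)^2$. Here is where the argument must be done carefully — I expect this to be the main obstacle, since $r$ is only implicitly defined and the inequality must hold for all $a>0$, including the delicate regimes $a \to 0$ (large advertising, $r$ and $p^*$ both large — indeed $p^* \to$ some bounded limit while $p^D \to \infty$? no: as $a\to0$, $\tfrac{1}{16a}\to\infty$, so I must check $p^*$ grows at least as fast) and $a \to \infty$ (expensive advertising, $r \to 1^+$, both $p^D \to 1/2$ and $p^* \to 1/2$, so I need the second-order rates). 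For large $a$: writing $r = 1 + \delta$ with $\delta$ small, $r - 1 - \ln r = \delta - \ln(1+\delta) \approx \delta^2/2$, and $a(r-1)^2/r \approx a\delta^2$, so the equation becomes $1 - a\delta^2 \approx a\delta^2$, i.e. $a\delta^2 \approx 1/2$, hence $p^* = a\delta^2 \approx 1/2 + O(\delta)$ while $p^D = 1/2 + \tfrac{1}{16a} = 1/2 + O(\delta^2)$; expanding to next order should confirm $p^* - 1/2$ dominates $p^D - 1/2$. For small $a$: $\delta$ is large, $r - 1 - \ln r \approx r$, so $1 - a r \approx 2ar$, giving $r \approx 1/(3a)$ and $p^* = a(r-1)^2 \approx a r^2 \approx 1/(9a)$, while $p^D \approx \tfrac{1}{16a}$, and $1/(9a) < 1/(16a)$ is \textbf{false} — so I would need to redo this limit more carefully (the $\ln r$ term and the $-1$ are not negligible at the precision needed, or the correct leading behavior of $p^*$ is different). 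The safest implementation is therefore not asymptotic but a direct global argument: solve for $a$ in terms of $r$ from the scalar equation, namely $a = \dfrac{r}{(r-1)^2 + 2r(r-1-\ln r)\,r} = \dfrac{r}{(r-1)^2 + 2r^2(r-1-\ln r)}$ wait — rearranging $1 = a\bigl[(r-1)^2/r + 2(r-1-\ln r)\bigr]$ gives $a = \dfrac{1}{(r-1)^2/r + 2(r-1-\ln r)} = \dfrac{r}{(r-1)^2 + 2r(r-1-\ln r)}$, then express both $p^* = a(r-1)^2$ and $p^D = 1/2 + 1/(16a)$ as explicit functions of the single parameter $r \in (1,\infty)$ and verify $p^*(r) > p^D(r)$ by reducing to a one-variable calculus inequality (e.g. clearing denominators and checking the sign of a polynomial-plus-logarithm expression, monotone or via its derivative). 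Once $p^D < p^*$ and $p^M < p^D$ are both established, the lemma follows.
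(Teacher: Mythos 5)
Your setup is correct and matches the paper's: $p^M=1/2$, and $p^D=(\psi')^{-1}(r^M)\,p^M=\tfrac12\bigl(1+\tfrac{1}{8a}\bigr)$, so $p^M<p^D$ is immediate. Your first-order system for $p^*$ is also the right one, and your reparametrization by $r=p^*/\underline p$ (with $p^*=a(r-1)^2$, so $r-1=\sqrt{p^*/a}$) is just a change of variables in the paper's own condition: the paper writes the FOC as $G(p^*)=0$ with
\[
G(p)=1-\tfrac{p}{\sqrt{p/a}+1}-\textstyle\int_{p/(\sqrt{p/a}+1)}^{p\wedge 1}2a\bigl(\tfrac{p}{x}-1\bigr)\tfrac1x\,\de x,
\]
and your scalar equation $1-a(r-1)^2/r=2a(r-1-\ln r)$ is exactly $G(p^*)=0$ after your substitution (note you have silently dropped the $p^*\wedge 1$ truncation of the integration domain, which matters when $p^*>1$, i.e.\ for small $a$).

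The genuine gap is that the decisive inequality $p^D<p^*$ is never established. The paper closes it in two short steps: (i) $G$ is strictly decreasing for $p<1$, so $p^*>p^D$ if and only if $G(p^D)>0$; and (ii) $G(p^D)=1-\tfrac{p^D}{\sqrt{p^D/a}+1}-2a\bigl(\sqrt{p^D/a}-\ln(\sqrt{p^D/a}+1)\bigr)>0$ for every $a>0$. Your plan contains neither step in executed form: the monotonicity (or, in your parametrization, the injectivity of $r\mapsto a(r)$) that converts a pointwise sign check into a price comparison is not argued, and the final one-variable inequality in $r$ is left as ``verify \dots by checking the sign of a polynomial-plus-logarithm expression.'' Your asymptotic sanity checks also leave you stuck for the wrong reason: in the $a\to 0$ regime you obtain $p^*\approx 1/(9a)$ and $p^D\approx 1/(16a)$, and since $1/(9a)>1/(16a)$ this \emph{confirms} $p^D<p^*$ — you inverted the comparison and concluded the limit fails. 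As written, the proposal is a correct reduction plus an unproved inequality, so it does not yet prove the lemma; completing it along the paper's lines requires establishing the monotonicity of $G$ and then verifying the explicit closed-form sign condition at $p^D$.
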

\begin{proof}
    When the initial type is uniform on $[0,1]$, $p^M = 1/2$ and $p^D = (\psi')^{-1}(r^M) p^M = (\tfrac{1}{8a}+1)/2 = \tfrac{1+8a}{16a} > 1/2$. By the first-order condition, the optimal price $p^*$ under targeted advertising satisfies
    \[
    1 - \tfrac{p^*}{\sqrt{p^*/a} + 1} - \int_{\tfrac{p^*}{\sqrt{p^*/a} + 1}}^{p^* \wedge 1} 2a(\tfrac{p^*}{x} - 1) \tfrac{1}{x} \de x = 0
    \]
    Let $G(p^*)$ denote the LHS of the previous expression as a function of $p^*$. Direct calculation shows that $G$ is strictly decreasing when $p^* < 1$. Note that 
    \[
    G(p^D) = 1 - \tfrac{p^D}{\sqrt{p^D/a} + 1} - 2a\left(\sqrt{p^D/a} - \ln (\sqrt{p^D/a}+1)\right) > 0
    \]
    for any $a > 0$, so $p^D < p^*$.
\end{proof}

Suppose the designer's objective is to maximize the ex-ante CS. Under multiplicative uniform advertising, the monopolist's optimal price is still $p^D = A(z) p^M \geq p^M$. Hence, to maximize the ex-ante CS, the designer would rather choose $z^* = 0$, so $p^D = p^M$. Similar argument as in \cref{prop:consumeroptimal_vs_uniform} shows that $p^* < p^M$, so 
\cref{prop:consumeroptimal_vs_uniform} (i) extends to the multiplicative case: restricting to multiplicative uniform advertising does not improve ex-ante CS.

\end{document}